\documentclass[11pt]{article}
\usepackage{graphicx} 
\usepackage{mathrsfs}  
\usepackage{fullpage}
\usepackage{amsthm}
\usepackage{amsmath}
\usepackage{mathtools}
\usepackage{amsfonts}
\usepackage{color}
\usepackage{physics}
\usepackage{amssymb}
\usepackage[colorlinks,linkcolor=blue,citecolor=blue,urlcolor=magenta,filecolor=cyan]{hyperref}
\usepackage[capitalize]{cleveref}
\usepackage{authblk}

\newtheorem{theorem}{Theorem}
\newtheorem{definition}[theorem]{Definition}
\newtheorem{proposition}[theorem]{Proposition}
\newtheorem{lemma}[theorem]{Lemma}
\newtheorem{corollary}[theorem]{Corollary}

\newcommand{\cH}{\mathcal H}
\newcommand{\cD}{\mathcal D}
\newcommand{\cV}{\mathcal V}
\newcommand{\cO}{\mathcal O}
\newcommand{\cL}{\mathcal L}
\newcommand{\cE}{\mathcal E}

\newcommand{\cB}{\mathcal B}

\newcommand{\cM}{\mathcal M}
\newcommand{\cT}{\mathcal T}
\newcommand{\cF}{\mathcal F}
\newcommand{\ran}{\operatorname{range}}

\title{Query-optimal quantum simulation of Lindblad evolution}

\author{Chunhao Wang}
\author{Christopher Ye}

\affil{Department of Computer Science and Engineering, Pennsylvania State University}
\affil[]{Email: \{cwang,cfy5119\}@psu.edu}

\date{}

\begin{document}

\maketitle

\begin{abstract}
For the problem of simulating Lindblad evolution for time $t$ to precision $\epsilon$, Hamiltonian simulation provides an additive query lower bound, informally, $\Omega(t + \mathrm{polylog}(1/\epsilon))$. However, the best previously known algorithms for general Lindblad simulation achieve a multiplicative upper bound, informally, $\mathcal{O}(t\,\mathrm{polylog}(1/\epsilon))$, in query complexity. It has remained open whether this multiplicative dependence is necessary. In this paper, we close the gap in query complexity by giving an algorithm with optimal additive dependence on evolution time and precision in the block-encoding model. Our approach uses the transducer framework to reduce the query cost of composing first-order approximations to the evolution channel, together with linear combinations of reuse circuits of different lengths to suppress catalyst-removal error. We further achieve nearly optimal gate complexity in evolution time and precision through history compression and an efficient implementation of the query-free part of the transducer using operation reordering and linear combinations of unitaries, while preserving the optimal query complexity.

\end{abstract}

\section{Introduction}

\paragraph{Motivation}

Simulating quantum systems was one of the original applications Feynman proposed for quantum computers in 1982 \cite{F82}, and it is likely to be among the \emph{first} applications of fault-tolerant quantum computers. Idealized quantum systems are isolated from their environment and governed by the Schr\"{o}dinger equation, whose generator is the Hamiltonian. Such systems are called \emph{closed quantum systems}, and their dynamics are referred to as Hamiltonian evolution.

Outside textbooks, virtually all realistic quantum systems interact with their environment to some degree. Consider, for example, the implementation of quantum gates. The need for quantum error correction already illustrates how difficult it is to realize idealized quantum dynamics in practice: interactions with the environment introduce noise and cause deviations from the intended evolution. It is therefore of considerable practical interest to study quantum systems that interact with their environment---\emph{open quantum systems}.

Among open quantum systems, an important class retains a useful feature of closed-system dynamics: the generator can be expressed entirely in terms of \emph{operators acting only on the system}. The influence of the environment is incorporated into these operators, without explicitly tracking the environmental degrees of freedom. This description applies when the environment can be treated as a memoryless reservoir: information transferred to the environment does not feed back into the subsequent system dynamics. Consequently, the state at $t + \delta$ is determined by the state at time $t$, without requiring knowledge of its earlier history. Systems with such memoryless dynamics are called \emph{Markovian} open quantum systems.

For time-independent dynamics, the Markovian property implies that the quantum channel $\cM_t$ describing the evolution satisfies $\cM_{t+s} = \cM_t \circ \cM_s$ and $\cM_0$ is the identity channel. In other words, evolving for time $s$ and then for time $t$ is equivalent to evolving for time $t+s$. Assuming continuity in time, these channels form a quantum dynamical semigroup. As shown in \cite{Lindblad1976,GKS76}, the generator of such a semigroup has the following form:
\begin{align}
  \label{eq:lindblad}
    \frac{\dd}{\dd t} \rho = \cL(\rho) = -i[H, \rho] + \sum_{j=1}^{m} (L_j \rho L_j^{\dagger} - \frac{1}{2} \{L_j^{\dagger}L_j, \rho\}),
\end{align}
which is called the \emph{Lindblad equation}, and the superoperator $\cL$ is called the \emph{Lindbladian}. Here $H$ is the Hamiltonian, and $L_j$'s are called the jump operators. The solution to \cref{eq:lindblad} is
\begin{align}
  \label{eq:lindblad-solution}
  \rho(t) = e^{\cL t}(\rho(0)),
\end{align}
where $e^{\cL t}$ is a quantum channel. The goal of simulating Lindblad evolution is to construct a quantum circuit that approximates the quantum channel $e^{\cL t}$, given efficient descriptions of the operators $H$ and $L_j$ in \cref{eq:lindblad}.

Although the Lindblad equation is a somewhat theoretical idealization, it is already capable enough to faithfully model many realistic open quantum systems, such as laser-driven atoms undergoing spontaneous emission~\cite{DCM92} and engineered dissipative dynamics in trapped-ion systems~\cite{BMS+11}. Beyond modeling physical systems, Lindblad evolution also serves as a computational primitive in quantum algorithms. Applications include Gibbs state preparation~\cite{RWW23,CKBG25}, continuous optimization through quantum Langevin dynamics~\cite{CLWLL25}, and thermal gradient descent toward approximate local energy minima~\cite{CHHPZ25}. Lindblad dynamics have also been proposed for preparing suitable initial states for quantum algorithms that sample non-logconcave distributions~\cite{LDCL26}. Furthermore, they provide routes to solving linear ordinary differential equations by encoding their solutions in off-diagonal blocks of density matrices~\cite {SGAZ25}, and to solving linear systems by encoding solutions in attractive stationary states~\cite{Shang26}. These applications motivate efficient Lindblad simulation as an important building block for sampling, optimization, and scientific computing.

An early efficient quantum algorithm for simulating Lindblad evolution was due to Kliesch et al.~\cite{K11} in 2011, based on a first-order Trotter decomposition. Suppressing the dependencies on other parameters, the number of local channels scales as \(\cO(t^2/\epsilon)\). In 2017, Childs and Li~\cite{CL17} introduced algorithms based on the Gram matrix and sparse Stinespring-isometry framework. Their algorithm for local jump operators uses \(\cO(t^{3/2}/\sqrt{\epsilon})\) gates, while their algorithm for a single sparse jump operator uses \(\cO(t^2/\epsilon)\) queries. Cleve and Wang introduced a version of linear combination of unitaries (LCU) for channels, achieving nearly optimal complexity with \(\cO(\tau \mathrm{ \ poly} \log (\tau/\epsilon))\) queries \cite{CW17}. Several later algorithms developed different approaches to Lindblad simulation. Li and Wang introduced an algorithm in 2023 which uses higher-order Duhamel expansions and LCU for channels \cite{LW23}. He et al.~\cite{HLLLW24} subsequently developed an algorithm for time-dependent Lindblad simulation with nearly linear dependence on the time-integrated block-encoding normalization and polylogarithmic dependence on inverse precision. In 2024, Ding, Li, and Lin constructed Hamiltonian evolutions on an enlarged system whose induced channels approximate Lindblad evolution to arbitrarily high order~\cite{DLL24}. Chen, Kastoryano, Brand{\~a}o, and Gily{\'e}n also developed a general Lindblad simulator with nearly linear dependence on time and polylogarithmic dependence on inverse precision, with no explicit dependence on the number of jump operators when their stacked operator is block encoded~\cite{CKBG25}. More recently, Chen, Li, Lu, and Ying developed a randomized approach that samples individual Lindblad terms, reducing the need to access all jump operators coherently~\cite{CLLY25}.

As a special case of the Lindblad simulation problem, Hamiltonian simulation provides the additive query lower bounds $\Omega(t + \mathrm{polylog}(1/\epsilon))$ that also apply to Lindblad simulation~\cite{BCK15,GSLW19} \footnote{This lower bound is actually of the form \(c(\tau, \epsilon)\) for \(c(\cdot)\) defined in \cref{eq:c-optimal-function} \cite[Corollary~60]{GSLW19}.}. However, unlike the Hamiltonian case, where optimal simulation algorithms saturate these lower bounds~\cite{LC17,GSLW19}, it has remained open since 2017 whether general Lindblad simulation can achieve an analogous additive dependence on evolution time and a polylogarithmic precision term. Recent efforts have narrowed this gap in restricted settings. Shang, An, and Shao achieved additive scaling for purely dissipative dynamics with unitary jump operators, with an extension to nonunitary jumps satisfying $\sum_j L_j^{\dag}L_j \propto I$~\cite{SAS26}. Under the same condition, Borras and Marvian developed a quantum-trajectory algorithm that accommodates a Hamiltonian term and achieves additive complexity for queries to jump operators, although the Hamiltonian-query complexity remains multiplicative \cite{BM26}. These advances motivate the question of whether additive query complexity can be achieved for general Lindbladians without restrictions on the total jump rate.

\paragraph{Main results}
In this paper, we resolve the query complexity gap by proposing a Lindblad simulation algorithm that achieves additive \emph{query complexity} using an approach based on the transducer framework, introduced in~\cite{BJY24}, and subsequently applied to time-dependent Hamiltonian simulation in~\cite{CGWZ26, CGWZ26gate}.

For the simulation problem, we consider a very general access model, namely, the block-encoding of operators. In particular, we assume access to a unitary \(O_H\) that is an \(\alpha_H\)-block encoding of the Hamiltonian $H$. For each \(j\in[m]\), we assume access to a unitary \(O_{L_j}\) that is an \(\alpha_j\)-block encoding of the \(j\)-th jump operator. We also assume controlled access to these oracles and their adjoints.

In some applications, such as quantum Gibbs sampling~\cite{CKBG25}, it is more convenient to consider the stacked jump operator $\widehat{L}$ as the input:
\begin{equation}
    \label{eq:normalized-stacked-jump}
    \widehat L \coloneqq
    \sum_{i=1}^{m} \frac{1}{\sqrt{\sum_{j=1}^m \alpha_j^2 }} \ket{i}\otimes L_i.
\end{equation} 
A unitary \(O_{\widehat{L}}\) that block encodes \(\widehat{L}\) can be constructed by first preparing the state \(\propto \sum_{j=1}^{m} \alpha_j \ket j\) and applying \(O_{L_i}\) controlled on the jump index register being equal to \(i\). To normalize the evolution time, we follow the convention of \cite{LW23} and define \(\norm{\cL}_{\mathrm{be}}\) as the block-encoding normalization factor, 
\begin{equation}
    \norm{\cL}_{\mathrm{be}} \coloneqq \alpha_H + \sum_{j=1}^m \alpha_j^2,
    \qquad
    \tau \coloneqq t\norm{\cL}_{\mathrm{be}}
    \label{eq:normalized-time}
\end{equation}
where \(\tau\) is the normalized time. We choose to omit the \(\frac{1}{2}\) coefficient on the dissipative normalization factor. Our main result is stated as follows.

\begin{theorem}[Informal version of \cref{thm:main-thm}]
\label{thm:main-thm-informal}
For a Lindbladian $\mathcal{L}$ as defined in \cref{eq:lindblad}, let \(t>0\) be the evolution time and \(0<\epsilon\leq1/2\) be the desired simulation error. There is a quantum algorithm that implements the quantum channel $e^{\mathcal{L}t}$ with diamond-norm error up to $\epsilon$, using $\cO(\tau + \mathrm{polylog}(\tau/\epsilon))$ queries to each of the block encodings of \(H, L_1, L_2, ..., L_m\) and their adjoints, where $\tau \coloneqq t \norm{\mathcal{L}}_{\mathrm{be}}$. The algorithm uses $\cO((1 + \tau  +  \mathrm{polylog}(\frac{1}{\epsilon}))(C_{\mathrm{anc}} + m + \mathrm{polylog}(\frac{m\tau}{\epsilon})))$ additional one- and two-qubit gates where \(C_{\mathrm{anc}}\) is the gate cost of the controlled operations on the block-encoding ancillas. 
\end{theorem}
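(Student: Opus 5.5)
The plan is to split the evolution into $r = \Theta(\tau)$ segments of normalized length $\delta = \tau/r = \Theta(1)$ (say $\delta \le 1/2$). Each segment $e^{\cL t/r}$ is approximated by a purified first-order channel. Concretely, I would build a Stinespring-type unitary $W$ acting on the system plus a fresh environment register. The unitary $W$ uses $\cO(1)$ queries to $O_H$, $O_{\widehat L}$ and their adjoints (via the LCU-for-channels construction of \cite{CW17,LW23}), and its induced channel is $\mathcal{I} + \delta\cL/\norm{\cL}_{\mathrm{be}}$ up to $\cO(\delta^2)$ corrections. Naively, composing $r$ such steps and boosting each to precision $\epsilon/r$ via a higher-order Duhamel/LCU expansion gives the multiplicative cost $\cO(\tau\,\mathrm{polylog}(\tau/\epsilon))$. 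The whole point is to avoid paying the polylog per segment.

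To get the additive bound, I would use the transducer framework of \cite{BJY24} in the way \cite{CGWZ26} does for time-dependent Hamiltonians. The steps are as follows.

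First, I write the exact segment channel as a truncated Dyson/Duhamel series whose $k$-th order term has weight $\delta^k/k!$. Its purification is then a state on a ``history'' register that records which terms (Hamiltonian or jump $j$, and time points) were applied.

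Second, I realize the full $r$-segment circuit as a transducer. The query part is a fixed sequence of oracle calls whose \emph{expected} number, under the state on the history register, is $\cO(r)$. The higher-order terms are needed only with amplitude $\delta^k/k!$. The transducer composition lemma then lets the query cost scale with the $\ell_2$-weighted average number of queries rather than the worst case. That gives $\cO(\tau)$ queries for the bulk plus an additive $\cO(\mathrm{polylog}(1/\epsilon))$ for the tail truncation at order $K = \cO(\log(1/\epsilon)/\log\log(1/\epsilon))$, matching the form of $c(\tau,\epsilon)$ in \cref{eq:c-optimal-function}.

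Third, I convert the transducer, which acts with a catalyst/initial-state dependence, into an actual circuit. This catalyst-removal step incurs error. Following the abstract, I would suppress it by taking a linear combination of ``reuse'' circuits of different lengths, so that the leading error terms cancel. The catalyst-removal error then becomes $\epsilon$ at only additive $\mathrm{polylog}(1/\epsilon)$ overhead. Because the Lindblad channel is not unitary, the combination must be performed at the level of purified isometries, followed by tracing out, and checked in diamond norm. I would use the triangle inequality over segments together with the contractivity of channels.

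For gate complexity, the query-free part of the transducer acts on the history register of size $\cO(\tau + \mathrm{polylog})$. I would compress the history: only $\cO(\log)$-bit time stamps and jump indices are kept per nontrivial event. I would then implement the state preparation and the reordering of oracle calls by LCU. Each step costs $\cO(C_{\mathrm{anc}} + m + \mathrm{polylog}(m\tau/\epsilon))$ gates, where the $m$ comes from preparing $\sum_j\alpha_j\ket j$. Multiplying by the $\cO(1+\tau+\mathrm{polylog}(1/\epsilon))$ steps gives the stated gate bound.

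I expect the main obstacle to be the third step: the transducer's guarantees are for pure-state unitary dynamics with a catalyst, and adapting them to a dissipative channel requires care. The environment registers must be fresh per segment, so they cannot be reused the way the catalyst is. I would first try to keep everything as one global unitary on system, environment and history registers, and apply the transducer lemma to that unitary. Only at the end would I trace out the environment, with the diamond-norm error bounded by the operator-norm error of the purification.
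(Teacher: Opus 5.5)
\textbf{Verdict: genuine gap.} The step your argument rests on is asserted rather than proved, and it is not a black-box fact. That step is the claim that a linear combination of reuse circuits of different lengths brings the catalyst-removal cost down to an additive $\mathrm{polylog}(1/\epsilon)$. Lemma~\ref{lemma:lcu-error-reduction} only gives the identity $V-\widetilde V=S_{\mathrm{P}\leftarrow\mathrm{C}}\,p_{\mathrm{err}}(S_{\mathrm{C}\leftarrow\mathrm{C}})\,\Gamma$. For an arbitrary transducer with $\norm{\Gamma}^2=\cO(\tau)$, nothing forces $p_{\mathrm{err}}(S_{\mathrm{C}\leftarrow\mathrm{C}})$ to be small, and the only general bound is the standard reuse cost $\cO(1+\tau/\epsilon^2)$.

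Controlling this term for a specific transducer is the main technical content of the paper. On the valid history subspace $\cV$ it writes $S_{\mathrm{C}\leftarrow\mathrm{C}}=i(I-P)O-i\Delta O$. Here $p_{\mathrm{annih}}(A)=(A^2+A^4)/2$ annihilates $i(I-P)O$, using $POP=0$ and $O^2=I$. The correction $\Delta=\Delta_{\mathrm{R}}+\Delta_{\mathrm{J}}$ is lower triangular in the time label: $\Delta_{\mathrm{R}}$-blocks have norm at most $\delta$, and the strictly lower $\Delta_{\mathrm{J}}$-blocks have norm at most $\sqrt{\delta}$. A block Schur test with chronological path counting then gives $\norm{p_{\mathrm{annih}}(S_{\mathrm{C}\leftarrow\mathrm{C}})^q|_{\cV}}\leq(C^*\sqrt{\tau/q})^q$. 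Taking $p_{\mathrm{err}}=p_{\mathrm{annih}}^q\,g_{6q}$ yields an LCU with normalization $2$.

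Restricting to $\cV$ is indispensable. The local rotation $R$ contains a norm-one path from the public jump input into the private space, and it is harmless only because each fresh history register starts in $\ket 0$. Your plan contains none of this, and its parameters work against it. The perturbative split needs small steps, and the counting bound $\binom{r+u-1}{u}\delta^u\leq(2\tau)^u/u!$ holds only for $u\leq r$, which is why the paper takes $r\geq4q$. You instead fix $\delta=\Theta(1)$ and $r=\Theta(\tau)$. The obstacle you flag, non-reusable environment registers, is not the real one: the history sits in the public space, and every reuse branch receives a blank history through $V_{\mathrm{in}}$.

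There are further problems.
\begin{itemize}
\item \emph{Query accounting in your second step.} In a canonical transducer all oracle calls collapse into a single time-controlled query $O$. The final query count is therefore the reuse length $\cO(q)$, and neither the truncation order nor the number of time steps enters it. What your second step can deliver is a catalyst bound $\norm{\Gamma}^2=\cO(\tau)$; all $\epsilon$-dependence lives in catalyst removal.
\item \emph{Isometry of the target.} The transducer identity with unitary $S$ forces the target to be an exact isometry. The purification of a truncated Dyson/Duhamel series is only a contraction, because the truncated map is not trace preserving, so it cannot serve directly as the target.
\end{itemize}
The paper avoids both issues. It constructs a Cayley-type local transducer (Proposition~\ref{prop:local-transducer}) whose Kraus pair $K_0,K_1$ is exactly trace preserving with $\norm{\Gamma_{\mathrm{loc}}}^2\leq2\delta$. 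It interleaves this with the Hamiltonian Cayley transducer and takes $r\geq\max\{4q,2\tau,4C_{\mathrm{comp}}\tau^2/\epsilon\}$ first-order steps, which costs no extra queries.

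Your gate-count step is also incomplete.
\begin{itemize}
\item \emph{Justifying compression.} History compression is valid because each global-transducer application in a catalyst-free reuse circuit adds at most one jump. Histories therefore have $\cO(q)$ records, and one application costs $\cO(C_{\mathrm{anc}}+m+q(\log(r+1)+\log(m+1)))$ gates.
\item \emph{Segmentation.} Segmenting into pieces of normalized length $\beta=\log(e+(\tau+1)/\epsilon)$ is what turns the resulting $q^2$ term into the stated bound. Your claim of $\mathrm{polylog}$ cost per step skips this.
\end{itemize}
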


Our results establish the optimal worst-case dependence on evolution time and precision in the query complexity of general Lindblad simulation. The precise bound in the formal version of the main theorem, \cref{thm:main-thm}, matches the corresponding lower bounds inherited from Hamiltonian simulation~\cite{BCK15,GSLW19}, and applies to both the Hamiltonian and jump-operator oracles, without requiring unitary jump operators or a state-independent total jump rate. Thus, the multiplicative dependence in previous general algorithms is not an intrinsic limitation of Lindblad simulation in the block-encoding model. Our construction extends the transducer approach to general dissipative dynamics and provides a framework for achieving this optimal scaling. Our gate complexity is near-optimal and comparable to that of other existing algorithms ~\cite{CW17,LW23,CKBG25}.

In addition, our algorithm also extends to simulating time-dependent Lindbladians, as outlined in \cref{sec:time-dep}.

\paragraph{Technical overview}

For Hamiltonian simulation, additive query complexity was achieved using quantum signal processing \cite{LC17}. This framework was later generalized by quantum singular value transformation, which applies polynomial transformations to the singular values of a block-encoded matrix. For Hermitian \(H\), suitable polynomial approximations to the complex exponential can be used to approximate \(e^{itH}\) \cite{GSLW19}. This approach does not directly extend to general Lindbladian dynamics because the generator \(\cL\) need not be normal and the simulation must implement \(e^{t\cL}\) as a quantum channel on input states. Approaches that independently approximate many short evolution steps typically incur multiplicative dependence on the normalized time \(\tau\) and the precision, since each of \(r\) steps must have error at most \(\epsilon/r\).

Inspired by the recent work achieving additive query complexity for time-dependent Hamiltonian simulation~\cite{CGWZ26}, we use a transducer-based approach to perform our simulation. Informally, a transducer is a unitary that transforms a public input while returning a private catalyst vector unchanged. The public input and the catalyst occupy orthogonal subspaces, and the oracle acts only on the private space. Transducers using the same oracle can be composed by assigning their catalysts to separate private sectors. The oracle then acts coherently on these sectors, while query-free unitaries carry out the successive transformations of the public state. 
This composition lets us separate the number of evolution steps from the number of queries and control the error over the entire evolution time. 

Our first contribution is a local dissipative transducer that implements a Stinespring isometry whose induced channel is a first-order approximation to dissipative evolution. We adapt the Cayley construction in~\cite{CGWZ26} by introducing distinct jump and no-jump output branches. These branches encode the Kraus operators of an exactly trace-preserving channel with local error $\cO(\delta^2)$ for a normalized time step $\delta$. The catalyst required for this local transformation has squared norm $\cO(\delta)$.

We interleave this local transducer with the Hamiltonian transducer from \cite{CGWZ26} and compose the resulting steps into a global transducer. An environment register records the jump history, while time and stage labels identify the separate private catalyst sectors. Each local rotation acts only on its corresponding private sector, so oracle actions on the other sectors commute past it. This allows the oracle calls to be collected into a single layer, using a constant number of controlled queries to each input oracle. The first-order discretization contributes error $\cO(\tau^2/r)$ after $r$ time steps, but increasing $r$ does not increase the query cost of the global transducer. Moreover, orthogonality of the private labels makes the squared catalyst norm add, giving a global catalyst map $\Gamma$ with $\norm{\Gamma}^2 = \cO(\tau)$. The number of query-free rotations still grows with $r$. 

The required catalyst depends on the input and the oracles and need not be efficiently preparable. To avoid preparing it, we use the reuse construction of~\cite{BJY24}. The public input is distributed uniformly over $s$ branches. By linearity, the catalyst required to transform each branch is decreased by a factor of $1/\sqrt{s}$. The same private catalyst can then be reused to transform the branches in succession, after which the public branches are recombined. Starting with zero private input introduces an error of at most $\norm{\Gamma}/\sqrt{s} = \cO(\sqrt{\tau/s})$ in the public output block.

The standard perturbation bound gives a query cost of $\cO(1+\tau/\epsilon^2)$~\cite{BJY24}. To achieve additive query complexity, we use the LCU over reuse lengths construction of Chen, Gao, Wang, and Zhou~\cite{CGWZ26}, which combines the public-output blocks of reuse circuits of different lengths. These blocks approximate the same global Stinespring isometry, but their errors depend on the reuse length. Suitable coefficients allow these errors to cancel while keeping the LCU normalization constant. One round of oblivious amplitude amplification gives the simulation channel with high fidelity. This construction reduces the catalyst-removal error, while the first-order discretization error is controlled separately by choosing $r$ sufficiently large. 

Our second contribution is the analysis showing that this LCU construction applies to dissipative evolution. The error depends on a polynomial evaluated on the global transducer's private-to-private block. Bounding this operator on the full private space includes transitions that cannot occur in chronological evolution. In particular, the local rotation contains a norm-one transition from the public jump-input branch to the private space. The fresh history register is initially $\ket{0}$, so this jump-input branch is empty when the corresponding stage begins. We therefore restrict the analysis to an invariant subspace of valid time and history labels that excludes this transition. Chronological ordering and the orthogonality of different jump histories then give factorial decay in the relevant polynomial degree. For a degree parameter $q$ sufficiently large compared with $1+\tau$, the resulting catalyst-removal error is bounded by
\begin{align*}
    \sqrt{2\tau}\left( C \sqrt{\frac{\tau}{q}}\right)^q
\end{align*}
for a universal constant $C$, using $\cO(q)$ queries to each oracle. Choosing $q$ to make this error sufficiently small yields the additive query bound, informally $\cO(\tau + \mathrm{polylog}(1/\epsilon))$, with the precise optimal dependence stated in~\cref{thm:main-thm}.

Our third contribution is a gate-efficient implementation of the transducer-based algorithm. A direct implementation has unfavorable scaling due to the large number of local channels required. To get more favorable scaling, we reorganize the implementation by performing all jump-producing rotations first. For each history, the remaining no-jump dissipative and Hamiltonian rotations form a suffix beginning after its last jump. We implement this suffix using a four-term LCU construction followed by exact oblivious amplitude amplification. In the reuse construction, the maximum number of jump events is bounded by the reuse length, which allows us to compress the history without changing the output of the reuse circuit. Combined with time segmentation, this allows us to achieve nearly optimal gate complexity. 

\paragraph{Concurrent work} Shortly after the first version of this paper was made public, independent work by Chen, Gao, Wang, and Zhou~\cite{CGWZ26a} appeared, achieving the same optimal query complexity and lower gate complexity than our original algorithm. At that time, we were writing up the independently developed gate-complexity improvements presented in this updated version. Both algorithms use the transducer framework and achieve optimal query complexity and similar nearly optimal gate complexity. We construct our global transducer by composing small-time-step transducers for the dissipative and Hamiltonian terms, whereas they use a local transducer that approximates the combined evolution. Our gate-efficient implementation uses LCU to implement query-free rotation suffixes associated with the no-jump dissipative and Hamiltonian steps, whereas theirs uses a rotation factorization over dyadic time intervals.

\paragraph{Organization}
 \cref{sec:prelim} specifies notation and definitions. \cref{sec:singe-jump} constructs an algorithm for one jump operator. \cref{sec:main-thm} extends the algorithm to multiple jump operators and a nonzero Hamiltonian and provides a gate-efficient implementation of the algorithm. \cref{sec:time-dep} briefly explains how to extend these results to the time-dependent setting. 

\section{Preliminaries}
\label{sec:prelim}

\subsection{Notation}
In this paper, we use \(\norm{\cdot}\) to denote the spectral norm and \(\norm{\cdot}_1\) to denote the trace norm. For a superoperator \(\cM : \cB(\cH_{\mathrm{in}}) \rightarrow \cB(\cH_{\mathrm {out}})\), where \(\cB(\cH)\) is the space of linear operators on \(\cH\), the induced trace norm \(\norm{\cM}_{1\rightarrow 1}\) is defined as \(\norm{\cM}_{1\rightarrow 1} = \max\{\norm{\cM(A)}_1: \norm{A}_1 = 1\}\). The diamond norm \(\norm{\cdot}_\diamond\) of a superoperator is defined as \(\norm{\cM}_\diamond = \norm{\cM \otimes \operatorname{id}_{\cH_{\mathrm{in}}}}_{1 \rightarrow 1}\) where \(\operatorname{id}_{\cH_{\mathrm{in}}}\) is the identity channel on a space isomorphic to \(\cH_{\mathrm{in}}\). We measure channel distance in diamond norm because it captures distinguishability. 

For Hermitian operators, we use \(\preceq\) as the positive-semidefinite ordering. That is, for \(A, B\), \(A \preceq B\) if \(B-A\) is positive-semidefinite.

We use the standard definition of a block encoding.

\begin{definition}[Block Encoding]
    Let \(A:\cH_{\mathrm{in}}\to\cH_{\mathrm{out}}\), where
    \(\cH_{\mathrm{in}}=(\mathbb C^2)^{\otimes\ell}\) and
    \(\cH_{\mathrm{out}}=(\mathbb C^2)^{\otimes k}\).
    For \(n\geq\max\{k,\ell\}\), let
    \(\cH_{\mathrm{enc}}=(\mathbb C^2)^{\otimes n}\)
    and define the zero-padding isometries
    \begin{equation}
        B_{\mathrm{in}}\ket\psi
        \coloneqq \ket{0^{n-\ell}}\otimes\ket\psi,
        \qquad
        B_{\mathrm{out}}\ket\phi
        \coloneqq \ket{0^{n-k}}\otimes\ket\phi.
        \label{eq:block-encoding-isometry}
    \end{equation}
    A unitary \(U\) on \(\cH_{\mathrm{enc}}\) is an
    \((\alpha, n,\epsilon)\)-block encoding of \(A\) if
    \begin{equation}
        \norm{A-\alpha B_{\mathrm{out}}^\dagger
        U B_{\mathrm{in}}}\leq\epsilon,
        \label{eq:block-encoding}
    \end{equation}
    where \(\alpha>0\) and \(\epsilon\geq0\).
\end{definition}

We call a unitary \(U\) an \(\alpha\)-block encoding of \(A\) if it is a \((\alpha, n, 0)\)-block encoding of \(A\) for some \(n > 0\).

\subsection{The infinitesimal approximation map}
In this subsection we define the channel that our local transducer implements and show that it is a first-order accurate approximation to dissipative evolution for some short time step \(0 \leq \delta \leq \frac{1}{2}\). We first show the case of a single jump operator and then show the extension to multiple jump operators using our stacked construction from \cref{eq:normalized-stacked-jump}. 

For a jump operator \(L\), define the associated dissipator by
\begin{equation}
    \cD_L(\rho) \coloneqq L\rho L^\dagger-\frac{1}{2}\{L^\dagger L,\rho\}.
    \label{eq:dissipative-lindblad}
\end{equation}
The full dissipative portion of the Lindblad equation is \(\sum_{\ell=1}^{m} \cD_{L_\ell}\).

Let \(O_L\) be an \(\alpha\)-block encoding of \(L\).
Define a scalar \(\nu\) as follows. 
\begin{equation}
    \nu \coloneqq \frac{1-\sqrt{1-\delta}}{1+\sqrt{1-\delta}}
    \label{eq:nu}
\end{equation}
The channel \(\Phi_\delta(\rho)\) is
\begin{equation}
    \Phi_\delta(\rho) \coloneqq  K_0\rho K_0^\dagger + K_1\rho K_1^\dagger
    \label{eq:local-channel}
\end{equation} where
\begin{equation}
    K_0 = (I- \nu \frac{L^\dagger L}{\alpha^2})(I+\nu \frac{L^\dagger L}{\alpha^2})^{-1},
    \qquad
    K_1 =-2i\sqrt{\nu}\,\frac{L}{\alpha}(I+\nu \frac{L^\dagger L}{\alpha^2})^{-1}.
    \label{eq:local-kraus}
\end{equation}

\(\Phi_{\delta}(\rho)\) is a quantum channel because 
\begin{align}
    K_0^\dagger K_0 + K_1^\dagger K_1
    &=(I+\nu \frac{L^\dagger L}{\alpha^2})^{-2}\bigl((I-\nu \frac{L^\dagger L}{\alpha^2})^2+4\nu \frac{L^\dagger L}{\alpha^2}\bigr)\\
    &=(I+\nu \frac{L^\dagger L}{\alpha^2})^{-2}(I+\nu\frac{L^\dagger L}{\alpha^2})^2=I.
\end{align}

\begin{lemma}[Uniform first-order error]
\label{lem:local-error}
For \(0\leq\delta\leq \frac{1}{2}\), there is a universal constant
\(C_{\mathrm{loc}}>0\) such that
\begin{equation}
    \norm{\Phi_\delta - \exp(\frac{\delta}{\alpha^2}\cD_L)}_\diamond \leq C_{\mathrm{loc}}\delta^2,
    \label{eq:local-error}
\end{equation}
\end{lemma}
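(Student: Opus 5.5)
Set \(A \coloneqq L/\alpha\), so \(\norm{A}\le 1\) (since \(O_L\) is an \(\alpha\)-block encoding of \(L\)), and write \(\cD_L/\alpha^2=\cD_A\) with \(\cD_A(\rho)=A\rho A^\dagger-\frac12\{A^\dagger A,\rho\}\). The plan is to compare both channels with the common first-order map \(\mathcal{T}_\delta \coloneqq \operatorname{id}+\delta\cD_A\) and bound
\[
\norm{\Phi_\delta-e^{\delta\cD_A}}_\diamond\le \norm{\Phi_\delta-\mathcal{T}_\delta}_\diamond+\norm{e^{\delta\cD_A}-\mathcal{T}_\delta}_\diamond .
\]
Each term should be \(\cO(\delta^2)\) with constants depending only on \(\norm{A}\le1\).

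\textbf{The exponential term.} I would first bound \(\norm{\cD_A}_\diamond\le \norm{A}^2+\norm{A^\dagger A}\le 2\). This holds because \(\rho\mapsto X\rho Y\), tensored with the identity, has diamond norm at most \(\norm{X}\norm{Y}\). The Taylor remainder then gives
\[
\norm{e^{\delta\cD_A}-\operatorname{id}-\delta\cD_A}_\diamond\le \sum_{k\ge2}\frac{(2\delta)^k}{k!}\le 2\delta^2 e^{2\delta}\le 2e\,\delta^2 .
\]

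\textbf{The Kraus term.} I would expand \(\nu\) in \(\delta\). The identity \(\nu=(1-\sqrt{1-\delta})^2/\delta\) together with \(1-\sqrt{1-\delta}=\delta/2+\cO(\delta^2)\) gives \(\nu=\delta/4+\cO(\delta^2)\) uniformly on \([0,1/2]\); explicitly \(0\le\nu\le\nu(1/2)<1\) and \(|\nu-\delta/4|\le c\,\delta^2\). Put \(M=A^\dagger A\), so \(0\preceq M\preceq I\), and \((I+\nu M)^{-1}=I-\nu M+R\) with \(\norm{R}\le\nu^2\). This yields
\[
K_0=I-2\nu M+E_0,\qquad K_1=-2i\sqrt{\nu}\,A+E_1,
\]
with \(\norm{E_0}=\cO(\nu^2)\) and \(\norm{E_1}=\cO(\nu^{3/2})\). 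Substituting,
\[
K_0\rho K_0^\dagger=\rho-2\nu\{M,\rho\}+(\text{terms bounded by }\cO(\nu^2)),\qquad K_1\rho K_1^\dagger=4\nu A\rho A^\dagger+(\text{terms }\cO(\nu^2)),
\]
where each term has the form \(X\rho Y\) with \(\norm{X}\norm{Y}=\cO(\nu^2)\). The cross term \(\sqrt\nu\cdot\nu^{3/2}\) is also \(\cO(\nu^2)\). Hence \(\Phi_\delta=\operatorname{id}+4\nu\cD_A+\mathcal{R}\) with \(\norm{\mathcal{R}}_\diamond=\cO(\nu^2)\). Replacing \(4\nu\) by \(\delta\) costs \(|4\nu-\delta|\,\norm{\cD_A}_\diamond=\cO(\delta^2)\), and \(\nu^2=\cO(\delta^2)\), so \(\norm{\Phi_\delta-\mathcal{T}_\delta}_\diamond=\cO(\delta^2)\).

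The main care point is making every constant uniform, independent of \(A\) and of the dimension. This follows because all remainders are products of operators with norm at most \(1\) and explicit powers of \(\nu\). The Neumann-series remainder bound \(\norm{R}\le \nu^2\norm{M}^2/(1-\nu\norm{M})\) needs \(\nu\) bounded away from \(1\), which holds since \(\delta\le1/2\). Diamond-norm estimates reduce to the elementary bound \(\norm{(X\otimes I)\,\cdot\,(Y\otimes I)}_{1\to1}\le\norm X\norm Y\). Collecting the constants from both terms gives \(C_{\mathrm{loc}}\).
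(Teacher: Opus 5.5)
Your proposal is correct and follows essentially the same route as the paper: expand the Kraus operators using the Neumann series for \((I+\nu A^\dagger A)^{-1}\) and \(\nu=\delta/4+\cO(\delta^2)\), compare the result with the Taylor expansion of \(e^{\delta\cD_A}\), and bound every remainder with \(\norm{\rho\mapsto X\rho Y}_\diamond\le\norm{X}\norm{Y}\). Your version makes the intermediate map \(\operatorname{id}+\delta\cD_A\) and the constants explicit, and the remainder bound \(\norm{R}\le\nu^2\) follows directly from functional calculus because \(A^\dagger A\succeq 0\), so you need no separate condition keeping \(\nu\) away from \(1\).
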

\begin{proof}
First note that
\begin{align}
    \nu = \frac{\delta}{(1+\sqrt{1-\delta})^2} = \frac{\delta}{4} + \cO(\delta^2).
\end{align}
Since \(0\preceq \frac{L^{\dagger}L}{\alpha^2}\preceq I\), the Taylor expansions hold uniformly for \(0\leq\delta\leq1/2\) in operator norm:
\begin{align*}
    (I + \nu \frac{L^{\dagger}L}{\alpha^2})^{-1} & =I - \nu \frac{L^{\dagger}L}{\alpha^2} + \cO(\nu^2).
\end{align*}
Substituting gives
\begin{align*}
    K_0
    &= (I-\nu \frac{L^{\dagger}L}{\alpha^2})(I+\nu \frac{L^{\dagger}L}{\alpha^2})^{-1}\\
    &= (I-\nu \frac{L^{\dagger}L}{\alpha^2})\bigl(I-\nu \frac{L^{\dagger}L}{\alpha^2}+\cO(\nu^2)\bigr)\\
    &= I-2\nu \frac{L^{\dagger}L}{\alpha^2}+\cO(\nu^2)\\
    &= I-\frac{\delta}{2}\frac{L^{\dagger}L}{\alpha^2}+\cO(\delta^2),
\end{align*}
and
\begin{align*}
    K_1
    &= -2i\sqrt{\nu}\,\frac{L}{\alpha}(I+\nu \frac{L^{\dagger}L}{\alpha^2})^{-1}\\
    &= -i\bigl(\sqrt{\delta} + \cO(\delta^{3/2})\bigr) \frac{L}{\alpha}\bigl(I - \nu \frac{L^{\dagger}L}{\alpha^2} + \cO(\nu^2)\bigr)\\
    &= -i\sqrt{\delta}\,\frac{L}{\alpha}+\cO(\delta^{3/2}).
\end{align*}
Thus
\begin{align*}
    \Phi_\delta(\rho) & =\rho + \frac{\delta}{\alpha^2}\left( L\rho L^\dagger-\frac{1}{2} L^{\dagger}L \rho-\frac{1}{2}\rho L^{\dagger}L \right) + \cO(\delta^2).
\end{align*} 

Comparing to the Taylor expansion of \(e^{\frac{\delta}{\alpha^2}\cD_L}\) and using \(\norm{\rho\mapsto M\rho N}_\diamond\leq\norm {M}\norm {N}\) to bound the remainder in the diamond norm proves the claim.  
\end{proof}

\begin{corollary}[Time discretization]
\label{cor:discretization}
Let \(r\in\mathbb N, r\geq 2 \alpha^2t\), set
\(\delta=\alpha^2 t /r\), and apply the local channel \(r\) times. Then
\begin{equation}
    \norm{\Phi_\delta^r-\exp(t\cD_{L})}_\diamond
    \leq C_{\mathrm{loc}}\frac{\alpha^4t^2}{r}.
    \label{eq:global-discretization}
\end{equation}
\end{corollary}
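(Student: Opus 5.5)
The plan is a standard telescoping argument: combine \cref{lem:local-error} with the fact that both $\Phi_\delta$ and $e^{\frac{\delta}{\alpha^2}\cD_L}$ are quantum channels, so each has diamond norm one.

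First check the hypotheses of \cref{lem:local-error}. With $\delta=\alpha^2t/r$, the assumption $r\geq 2\alpha^2 t$ gives $0\leq\delta\leq\frac12$, so the lemma applies. Next observe that $\frac{\delta}{\alpha^2}\cD_L=\frac{t}{r}\cD_L$. Hence $\exp(t\cD_L)=\bigl(\exp(\frac{\delta}{\alpha^2}\cD_L)\bigr)^r$ by the semigroup property, since powers of exponentials of a single generator commute.

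Write $\cE\coloneqq\exp(\frac{\delta}{\alpha^2}\cD_L)$. Use the telescoping identity
\begin{equation*}
    \Phi_\delta^r-\cE^r=\sum_{k=0}^{r-1}\Phi_\delta^{\,r-1-k}\circ(\Phi_\delta-\cE)\circ\cE^{k}.
\end{equation*}
Bound each term with submultiplicativity of the diamond norm under composition. Since $\Phi_\delta$ (a channel, as verified by $K_0^\dagger K_0+K_1^\dagger K_1=I$) and $\cE$ (a Lindblad semigroup element) are CPTP maps, $\norm{\Phi_\delta^{j}}_\diamond=\norm{\cE^{k}}_\diamond=1$. Each term is therefore at most $\norm{\Phi_\delta-\cE}_\diamond\leq C_{\mathrm{loc}}\delta^2$.

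Summing the $r$ terms gives $rC_{\mathrm{loc}}\delta^2=C_{\mathrm{loc}}\alpha^4t^2/r$, which is the claimed bound. No step is a real obstacle. The only point needing care is the identification $\exp(t\cD_L)=\cE^r$ together with the validity range $\delta\leq\frac12$, and both follow directly from the choice of $\delta$ and the condition on $r$.
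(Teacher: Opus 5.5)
Your proposal is correct and follows essentially the same approach as the paper: a telescoping sum over the $r$ steps, each term bounded via \cref{lem:local-error} together with the fact that both $\Phi_\delta$ and $\exp(\frac{\delta}{\alpha^2}\cD_L)$ are channels of diamond norm one, giving $rC_{\mathrm{loc}}\delta^2=C_{\mathrm{loc}}\alpha^4t^2/r$. Your explicit checks that $r\geq 2\alpha^2 t$ ensures $\delta\leq\frac12$ and that $\exp(t\cD_L)=\bigl(\exp(\frac{\delta}{\alpha^2}\cD_L)\bigr)^r$ fill in details the paper leaves implicit.
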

\begin{proof}
Both \(\Phi_\delta\) and \(\exp(\frac{\delta}{\alpha^2}\cD_L)\) have diamond norm one. A telescoping sum and Lemma~\ref{lem:local-error} give \(r C_{\mathrm {loc}}\delta^2=C_{\mathrm{loc}}\alpha^4t^2/r\).
\end{proof}

To handle multiple jump operators, we use the stacked Kraus-operator formula in \cref{eq:local-kraus} with \(L=\widehat L\) and \(\alpha=1\). Let \(\mathrm J\) denote the jump-index register.

Define \(\Lambda_\mathrm{D} = \sum_{i=1}^{m} \alpha_i^2\) and let
\[
    \Phi_{\delta}(\rho)
    =K_0\rho K_0^\dagger+
    \Tr_{\mathrm{J}} (K_1\rho K_1^\dagger).
\]
Define 
\begin{equation}
    K_0=(I-\nu\widehat{L}^{\dagger}\widehat{L})(I+\nu\widehat{L}^{\dagger}\widehat{L})^{-1},
    \qquad
    K_{\mathrm{J}, \ell}=-2i\sqrt \nu\,\frac{L_{\ell}}{\sqrt{\Lambda_{\mathrm{D}}}}(I+\nu\widehat{L}^{\dagger}\widehat{L})^{-1}.
    \label{eq:stacked-kraus}
\end{equation}
\(\Phi_{\delta}\) is a valid quantum channel because 
\begin{equation}
    K_0^\dagger K_0+\sum_{\ell=1}^m K_{\mathrm{J}, \ell}^\dagger K_{\mathrm{J}, \ell}
    = (I+\nu\widehat{L}^{\dagger}\widehat{L})^{-2} \bigl((I-\nu\widehat{L}^{\dagger}\widehat{L})^2+4\nu\widehat{L}^{\dagger}\widehat{L}\bigr) = I.
    \label{eq:stacked-completeness}
\end{equation}
Substituting these expansions and using contractivity under partial trace gives
\begin{equation}
    \Phi_{\delta}(\rho) = K_0 \rho K_0^{\dagger} + \sum_{\ell = 1}^{m} K_{\mathrm{J}, \ell} \rho K_{\mathrm{J}, \ell}^{\dagger}
    =\rho+\frac{\delta}{\Lambda_{\mathrm{D}}} \sum_{\ell=1}^{m}\bigl(L_{\ell}\rho L_{\ell}^{\dagger} - \frac{1}{2}\{L_{\ell}^{\dagger}L_{\ell}, \rho\} \bigr) 
       +\cO(\delta^2)
    \label{eq:stacked-first-order-error}
\end{equation} where the error is in diamond norm.

\subsection{Transducers}

A transducer is a unitary \(S\) that, given an input \(\ket \psi\) and a catalyst vector \(\Gamma\ket \psi\), applies a target unitary \(U\) to the input vector and returns the catalyst vector unchanged. Following Belovs, Jeffery, and Yolcu, we define a public space \(\cH_{\mathrm P}\) and a private space \(\cH_{\mathrm C}\). A transducer acts on the direct sum of these two spaces, \(\cH_{\mathrm P} \oplus \cH_{\mathrm C}\). The target unitary \(U\) transforms the public input \(\ket \psi\) while the private space containing the catalyst vector is returned unchanged. The catalyst vector and combined vector need not be normalized. 

More formally, we use the transducer definition by Belovs, Jeffery, and Yolcu \cite[Section 5.1]{BJY24}. 
\begin{definition}[Transducer]
    \label{def:transducer}
    Let \(U\) be a unitary on \(\cH_{\mathrm P}\). Let \(S\) be a unitary on \(\cH_{\mathrm P} \oplus \cH_{\mathrm C}\). \(S\) is a transducer that implements \(U\) if there exists a linear map \(\Gamma: \cH_{\mathrm P} \rightarrow \cH_{\mathrm C}\) such that for all \(\ket\psi \in \cH_{\mathrm P}\)

    \begin{equation}
        S (\ket \psi \oplus \Gamma\ket \psi) = U\ket\psi \oplus \Gamma \ket \psi.
    \end{equation} 
\end{definition}

For transducers with access to an oracle, we use the following definition adapted from Belovs, Jeffery, and Yolcu \cite[Section 7.1]{BJY24}.
\begin{definition}[Canonical Transducer]
    \label{def:canonical-transducer}
    Let \(O\) be an oracle unitary acting on the private space \(\cH_{\mathrm C}\). Let \(\widehat{O} \coloneqq (I_{\mathrm P} \oplus O)\). \(S(O)\) is a canonical transducer that implements an oracle-dependent unitary \(U_O\) if there exists an oracle-independent unitary \(S^{\circ}\) on \(\cH_{\mathrm P} \oplus \cH_{\mathrm C}\) and an oracle-dependent linear map \(\Gamma_O: \cH_{\mathrm P} \rightarrow \cH_{\mathrm C}\) such that 
    \[
        S(O) \coloneqq S^{\circ}\widehat O
    \]
    and, for every \(\ket\psi \in \cH_{\mathrm P}\)
    \begin{equation}
        S^{\circ} \widehat{O}(\ket \psi \oplus \Gamma_O \ket \psi) = S^{\circ} (\ket\psi \oplus O \ \Gamma_O\ket \psi) = U_O\ket\psi \oplus \Gamma_O \ket\psi.
    \end{equation}
\end{definition}

Our construction uses a version of the canonical transducer with its inputs restricted to a subspace of \(\cH_{\mathrm P}\). Thus, we introduce an input-restricted version of the canonical transducer.
\begin{definition}[Input-restricted canonical transducer]
\label{def:input-restricted-canonical-transducer}
    Let \(O\) be an oracle unitary acting on the private space \(\cH_\mathrm{C}\), let \(V_{\mathrm{in}}:\cH_{\mathrm{in}} \to \cH_{\mathrm{P}}\) be an oracle-independent isometry, and let \(V_O:\cH_{\mathrm{in}} \to \cH_{\mathrm{P}}\) be an oracle-dependent target isometry. A unitary \(S(O)\) on \(\cH_{\mathrm{P}} \oplus \cH_{\mathrm{C}}\) is an input-restricted canonical transducer if
    \[
        S(O)=S^\circ(I_\mathrm{P}\oplus O),
    \]
    where \(S^\circ\) is an oracle-independent unitary, and there is a linear map \(\Gamma_O:\cH_{\mathrm{in}}\to\cH_{\mathrm{C}}\) such that
    \[
        S(O)\bigl(V_{\mathrm{in}}\ket\psi\oplus\Gamma_O\ket\psi\bigr)
        =V_O\ket\psi\oplus\Gamma_O\ket\psi
    \]
    for every \(\ket\psi\in\cH_{\mathrm{in}}\).
\end{definition}

To avoid the need to prepare this catalyst vector, we use the standard reuse construction of Belovs, Jeffery, and Yolcu \cite[Theorem 3.2]{BJY24}. We briefly restate this construction for convenience. 

For some \(s \in \mathbb{N}\) to be chosen later, define the ancilla space \(\cH_{\mathrm {reuse}}=\operatorname{span}\{\ket 0,\ldots,\ket{s-1}\}\) with dimension \(s\). The reuse construction acts on \(\bigl(\cH_\mathrm{reuse} \otimes \cH_{\mathrm {P}}\bigr) \oplus \cH_{\mathrm{C}}\). We define the combined ideal input vector as 
\[
    (\ket 0_{\mathrm{reuse}} V_{\mathrm{in}} \ket \psi)_\mathrm{P} \oplus \frac{1}{\sqrt{s}}(\Gamma_O\ket \psi)_\mathrm{C}.
\]

Applying the quantum Fourier transform \(F_s\) to the reuse register on the public subspace while acting as the identity on the private subspace gives
\begin{align}
    \left[(F_s \otimes I) (\ket 0_{\mathrm{reuse}} V_{\mathrm{in}} \ket \psi)_\mathrm{P} \right ] \oplus \frac{1}{\sqrt{s}}(\Gamma_O\ket \psi)_\mathrm{C} =
    \bigl( \frac{1}{\sqrt{s}}\sum_{k=0}^{s-1} \ket k_{\mathrm{reuse}} \otimes V_{\mathrm{in}} \ket \psi \bigr)_\mathrm{P} \oplus \bigl(\frac{1}{\sqrt{s}}\Gamma_O\ket\psi\bigr)_{\mathrm{C}}.
\end{align} 

The Fourier transform coherently distributes the public input equally among \(s\) branches. By linearity, the catalyst required for each branch is scaled by \(\frac{1}{\sqrt{s}}\).

For \(k\in \{0, 1,  \ldots, s-1\}\), define the isometry
\[
    J_k:\cH_{\mathrm P}\oplus\cH_{\mathrm C} \rightarrow  \bigl(\cH_\mathrm{reuse} \otimes \cH_{\mathrm {P}}\bigr) \oplus \cH_{\mathrm{C}}
\]
by
\begin{equation}
    J_k(\ket p_\mathrm{P}\oplus\ket c_{\mathrm{C}}) 
    =
    \bigl(\ket k_{\mathrm{reuse}} \otimes \ket p_{\mathrm{P}} \bigr) \oplus \ket c_{\mathrm{C}}.
\end{equation}

Define the \(k^{\mathrm{th}}\) stage transducer by
\begin{equation}
    S^{(k)}(O) \coloneqq J_k S(O) J_k^{\dagger} + \bigl(I - J_kJ_k^{\dagger}).
\end{equation}

\(S^{(k)}(O)\) acts as \(S(O)\) on \(\bigl(\ket k_\mathrm{reuse} \otimes \cH_{\mathrm{P}} \bigr) \oplus \cH_{\mathrm{C}}\) and as the identity on all other public branches. The private space is shared among all \(S^{(k)}(O)\) so the same catalyst can be reused to transform each part of the public state.

The input-restricted canonical transducer identity gives
\[
     S^{(k)}(O) \left[\bigl(\ket k_\mathrm{reuse} \otimes \frac{1}{\sqrt{s}}V_{\mathrm{in}} \ket \psi\bigr) \oplus \frac{1}{\sqrt{s}}\Gamma_O \ket \psi \right] 
     = 
     \bigl(\ket k_\mathrm{reuse} \otimes \frac{1}{\sqrt{s}} V_O \ket \psi\bigr) \oplus \frac{1}{\sqrt{s}} \Gamma_O \ket \psi,
\] and thus
\[
    S^{(s-1)}(O) \cdots S^{(0)}(O)
    \left[
        \bigl(\frac1{\sqrt s}\sum_{k=0}^{s-1}
            \ket k_\mathrm{reuse}\otimes V_{\mathrm{in}}\ket\psi\bigr)
        \oplus\frac{\Gamma_O\ket\psi}{\sqrt s}
    \right]
    =
    \bigl(\frac1{\sqrt s}\sum_{k=0}^{s-1}
        \ket k_\mathrm{reuse}\otimes V_O\ket\psi\bigr)
    \oplus\frac{\Gamma_O\ket\psi}{\sqrt s}.
\]
The public branches can then be recombined by applying \(F_s^{\dagger}\) to the reuse register yielding \(\bigl(\ket 0_\mathrm{reuse} \otimes V_O \ket \psi \bigr)_{\mathrm{P}} \oplus \frac{1}{\sqrt{s}} (\Gamma_O \ket \psi)_{\mathrm{C}}\). 

Omitting this scaled catalyst changes the input vector by at most \(\norm{\Gamma_O}/\sqrt{s}\) and makes the vector a valid quantum state. Let \(R_s\) denote the length-\(s\) catalyst-free reuse unitary including the initial and final Fourier transforms. The catalyst-free circuit applies \(R_s\) with zero private input. 

Following \cite{CGWZ26} we define \(P_s\) as the block of \(R_s\) where the input and output are restricted to the public component with reuse ancilla in the \(\ket0_{\mathrm{reuse}}\) state. For \(\ket\phi \in \cH_{\mathrm{P}}\) define
\begin{equation}
    B_{P_s, \mathrm{in}}\ket\phi = B_{P_s, \mathrm {out}}\ket\phi \coloneqq \bigl(\ket 0_\mathrm{reuse} \otimes \ket \phi\bigr) \oplus 0_\mathrm{C}
    \qquad
    P_s \coloneqq B_{P_s, \mathrm{out}}^{\dagger} R_s B_{P_s, \mathrm{in}}.
\end{equation}

Unitaries preserve distance, and compression onto the public output removes the ideal catalyst component and cannot increase the error. Thus,
\begin{equation}
    \norm{V_O-P_sV_{\mathrm{in}}}
    \leq
    \frac{\norm{\Gamma_O}}{\sqrt s}.
\end{equation}

To obtain error at most \(\epsilon\), set
\begin{equation}
    s = \Theta (\max_{O}{\bigl(1+\frac{||\Gamma_O||^2}{\epsilon^2}\bigr)}).
\end{equation}

This construction uses \(s\) applications of \(S(O)\) and each \(S(O)\) uses one query.

We will henceforth omit the oracle dependence and set \(V \coloneqq V_O\), \(\Gamma \coloneqq \Gamma_O\), \(S \coloneqq S(O)\) and note that the results hold uniformly over all admissible oracles. 

A more detailed analysis of the error in this reuse construction is performed in \cite{CGWZ26}. We restate the main argument and adapt it for input-restricted transducers.

Decompose \(S\) as follows:
\begin{equation}
    S = \begin{pmatrix}
        S_{\mathrm{P} \leftarrow \mathrm{P}} & S_{\mathrm{P} \leftarrow \mathrm{C}} \\ 
        S_{\mathrm{C} \leftarrow \mathrm{P}} & S_{\mathrm{C} \leftarrow \mathrm{C}}
    \end{pmatrix},
    \label{eq:s-decomp}
\end{equation}
where the first row and column correspond to the public space and the second to the private space. 

For some input \(\ket\psi\), let \(\ket{c_j}\) be the private vector entering the \(j^{\mathrm{th}}\) step of the reuse circuit and \(\ket{p_j} = V_{\mathrm{in}}\ket\psi/\sqrt{s}\) be the public state entering the \(j^{\mathrm{th}}\) step of the reuse circuit. The private vector after applying \(S\) in the \(j^{\mathrm{th}}\) step is \(S_{\mathrm{C} \leftarrow \mathrm{P}} \ket{p_j} + S_{\mathrm{C} \leftarrow \mathrm{C}} \ket{c_j}\). The fresh public states in the ideal and catalyst-free circuits are identical. Thus, the difference between the ideal and realized private states is determined by the private-to-private block multiplied by the error in the previous private state. 

The error in the input private state is \(\Gamma \ket \psi/\sqrt{s}\) because we use an empty catalyst in the catalyst-free reuse circuit. It follows by induction that the overall error in the public output after applying the inverse Fourier transform and projecting onto the public component with reuse ancilla \(\ket0_\mathrm{reuse}\) is
\begin{equation}
    \sum_{j=0}^{s-1} \frac{1}{s} \bigl (S_{\mathrm{P} \leftarrow \mathrm{C}} S_{\mathrm{C} \leftarrow \mathrm{C}}^j \Gamma \ket \psi \bigr).
\end{equation}

Define 
\begin{equation}
    g_s(A) \coloneqq \frac{1}{s} \sum_{j=0}^{s-1}A^j. 
    \label{eq:poly-g}
\end{equation}
Since the error identity holds for all \(\ket \psi\), 
\begin{equation}
    V - P_sV_{\mathrm{in}} = \frac{1}{s}\sum_{j=0}^{s-1}S_{\mathrm{P} \leftarrow \mathrm{C}} S_{\mathrm{C}\leftarrow \mathrm{C}}^j \Gamma = S_{\mathrm{P}\leftarrow \mathrm{C}} g_s(S_{\mathrm{C}\leftarrow \mathrm{C}}) \Gamma.
    \label{eq:error-poly}
\end{equation}

Following \cite[Sections~2 and~7.2]{CGWZ26}, we show how to combine reuse block encodings of different lengths using LCU to reduce error. 

\begin{lemma}[LCU error reduction for Input-Restricted Canonical Transducers]
    \label{lemma:lcu-error-reduction}
    
    Let $S$ be an input-restricted canonical transducer implementing an isometry \(V:\cH_{\mathrm{in}}\to\cH_{\mathrm{P}}\), with source embedding \(V_{\mathrm{in}}\).
    
    For \(\lambda_1, \lambda_2, ..., \lambda_{s_{\mathrm max}} \in \mathbb{C}\) define
    \begin{equation}
        \widetilde{P} \coloneqq \sum_{j=1}^{s_{\mathrm{max}}} \lambda_j P_j, 
        \qquad 
        \widetilde{V} \coloneqq \widetilde{P}V_{\mathrm in},
        \qquad
        p_{\mathrm{err}}(A) \coloneqq \sum_{j=1}^{s_{\mathrm{max}}} \lambda_j g_j(A), 
        \qquad 
        \Lambda_{\mathrm{LCU}} \coloneqq \sum_{i=1}^{s_{\mathrm{max}}} |\lambda_i|.
        \label{lcu-reuse-poly}
    \end{equation} 
    If \(\sum_{j=1}^{s_{\mathrm{max}}} \lambda_j = 1\), then a block encoding of \(\widetilde{P}\) with normalization \(\Lambda_{\mathrm{LCU}}\) can be constructed using \(s_{\mathrm{max}}\) controlled applications of \(S\) through a layerwise SELECT construction. Restricting its input through \(V_{\mathrm{in}}\) gives a block encoding of \(\widetilde V\) with the same normalization. Additionally

    \begin{equation}
        V-\widetilde{V} =  S_{\mathrm{P} \leftarrow \mathrm{C}} p_{\mathrm{err}}(S_{\mathrm{C}\leftarrow \mathrm{C}}) \Gamma.
        \label{eq:reuse-error-poly}
    \end{equation}
\end{lemma}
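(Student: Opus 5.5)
The proof has two independent parts: the block-encoding construction, and the error identity \cref{eq:reuse-error-poly}. The identity follows by linearity from the single-length identity \cref{eq:error-poly}.

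\textbf{Error identity.} For each fixed length \(j\), \cref{eq:error-poly} gives
\[
V-P_jV_{\mathrm{in}} = S_{\mathrm{P}\leftarrow\mathrm{C}}\, g_j(S_{\mathrm{C}\leftarrow\mathrm{C}})\,\Gamma .
\]
I will first re-derive this carefully in the input-restricted setting, since \cref{eq:error-poly} was only sketched. Run the ideal and catalyst-free length-\(j\) reuse circuits side by side. In step \(k\), both receive the same fresh public vector \(V_{\mathrm{in}}\ket\psi/\sqrt j\). The ideal private vector stays \(\Gamma\ket\psi/\sqrt j\) throughout, because the input-restricted transducer identity of \cref{def:input-restricted-canonical-transducer} holds for every \(\ket\psi\).

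Write \(e_k\) for the private discrepancy entering step \(k\). Then \(e_0=\Gamma\ket\psi/\sqrt j\) and \(e_{k+1}=S_{\mathrm{C}\leftarrow\mathrm{C}}e_k\), so \(e_k = S_{\mathrm{C}\leftarrow\mathrm{C}}^k e_0\). The public output discrepancy in branch \(k\) is \(S_{\mathrm{P}\leftarrow\mathrm{C}}e_k\). Applying \(F_j^\dagger\) and projecting onto \(\ket0_{\mathrm{reuse}}\) contributes a further factor \(1/\sqrt j\) per branch. Summing over branches gives
\[
\frac1j\sum_{k=0}^{j-1} S_{\mathrm{P}\leftarrow\mathrm{C}}\,S_{\mathrm{C}\leftarrow\mathrm{C}}^k\,\Gamma\ket\psi .
\]

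Now multiply the single-length identity by \(\lambda_j\) and sum over \(j\). Because \(\sum_j\lambda_j=1\), the left-hand side collapses to
\[
\sum_j\lambda_j\bigl(V-P_jV_{\mathrm{in}}\bigr)=V-\widetilde PV_{\mathrm{in}}=V-\widetilde V .
\]
The right-hand side is \(S_{\mathrm{P}\leftarrow\mathrm{C}}\,p_{\mathrm{err}}(S_{\mathrm{C}\leftarrow\mathrm{C}})\,\Gamma\) by linearity of \(A\mapsto\sum_j\lambda_j g_j(A)\). This proves \cref{eq:reuse-error-poly}.

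\textbf{Block-encoding construction.} This is standard LCU, with a PREPARE step on a selector register \(\ket j\), \(j\in[s_{\mathrm{max}}]\). PREPARE produces amplitudes \(\sqrt{|\lambda_j|/\Lambda_{\mathrm{LCU}}}\). The phases of \(\lambda_j\) are absorbed either into a diagonal phase on the selector or into the unprepare step. The SELECT step applies \(R_j\) controlled on \(\ket j\), and the result is compressed onto \(\ket0_{\mathrm{reuse}}\otimes\cH_{\mathrm P}\oplus 0_{\mathrm C}\). The standard LCU lemma then gives a \(\Lambda_{\mathrm{LCU}}\)-block encoding of \(\sum_j\lambda_jP_j=\widetilde P\), with the reuse register, the private sector and the selector all treated as block-encoding ancillas.

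\textbf{Main obstacle.} The step requiring care is showing that SELECT needs only \(s_{\mathrm{max}}\) controlled applications of \(S\), rather than \(\sum_j j\). The plan is a layerwise construction. Use a reuse register of dimension \(s_{\mathrm{max}}\). Controlled on the selector value \(j\), apply the initial Fourier transform \(F_j\) on the first \(j\) basis states, i.e.\ a selector-controlled QFT, which is query-free. Then, for \(k=0,\dots,s_{\mathrm{max}}-1\), apply layer \(k\), which is \(S^{(k)}\) controlled on \(j>k\); each layer uses one controlled query. Finally apply a selector-controlled \(F_j^\dagger\).

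It must be checked that branch \(j\) of this circuit equals \(R_j\) exactly. Two facts are needed. First, the layers with \(k\ge j\) act as the identity on branch \(j\), since their control condition \(j>k\) fails there. Second, the private space is shared but never entangled with the selector in a way that breaks the block structure: on selector value \(j\), the sequence of operations applied is exactly that of \(R_j\). Together these give \(s_{\mathrm{max}}\) controlled uses of \(S\) in total.

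Restricting the input through \(V_{\mathrm{in}}\) then gives a block encoding of \(\widetilde V\) with the same normalization \(\Lambda_{\mathrm{LCU}}\).
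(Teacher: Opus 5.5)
Your proposal is correct and follows essentially the same route as the paper. You obtain the single-length identity \cref{eq:error-poly} by propagating the private discrepancy \(e_{k+1}=S_{\mathrm{C}\leftarrow\mathrm{C}}e_k\) through the reuse stages, combine it linearly using \(\sum_j\lambda_j=1\), and build the block encoding by standard LCU with the layerwise SELECT that the paper takes from \cite{CGWZ26}; your explicit check that selector-controlled \(F_j\), the layers \(S^{(k)}\) controlled on \(j>k\), and \(F_j^\dagger\) reproduce \(R_j\) on each selector branch fills in the detail the paper leaves to that citation.
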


\section{Simulating a Single Jump Operator}
\label{sec:singe-jump}
In this section, we consider the restricted setting where \(H = 0\) and \(m=1\). We assume access to an oracle \(O_L\) which is an \(\alpha\)-block encoding of the jump operator \(L\). \(\norm{L/\alpha} \leq 1\).

The normalized time is \(\tau \coloneqq t\alpha^2.\)

We first construct a transducer that, given a purification \(\ket\psi\) of \(\rho\), a catalyst, and a time step \(\delta\), implements a Stinespring isometry for a channel \(\Phi_\delta\) satisfying
\[
    \|\Phi_\delta-e^{\delta\cD_{L/\alpha}}\|_\diamond=\cO(\delta^2).
\] 
We call this transducer the local transducer. We compose \(r\) steps of the local transducer to create the global transducer. We then bound the norm of a polynomial evaluated on the private-to-private block of the global transducer. This allows us to use Lemma~\ref{lemma:lcu-error-reduction} to reduce the error using the LCU construction of \cite{CGWZ26}.

For \(\tau>0\) and \(0 < \epsilon \leq \frac{1}{2}\), define
\begin{equation}
    c(\tau,\epsilon) \coloneqq 1 +  \tau + 
    \frac{\log(1/\epsilon)} 
    {\log\!\left(e+\dfrac{\log(1/\epsilon)}{\tau}\right)},
    \qquad
    c(0, \epsilon) \coloneqq 1.
    \label{eq:c-optimal-function}
\end{equation} This function describes the dependence on normalized time and precision for our simulation algorithms. 

\begin{theorem}[Simulation of a Single Jump Operator]
\label{thm:single-jump}
Let \(t\geq0\) and \(0<\epsilon\leq1/2\). Given controlled access to \(O_L\) and \(O_L^\dagger\), there is a quantum channel \(\widetilde{\Phi}\) such that 
\begin{equation}
    \norm{\widetilde{\Phi}-
    \exp\bigl(t\cD_{L}\bigr)}_\diamond =
    \norm{\widetilde\Phi-
    \exp\bigl(\tau \cD_{L/\alpha}\bigr)}_\diamond
    \leq\epsilon.
\end{equation}

The channel uses
\begin{equation}
    \cO\left(c(\tau,\epsilon)\right)
    \label{eq:query-complexity}
\end{equation}
queries to \(O_L\) or \(O_L^\dagger\). 
\end{theorem}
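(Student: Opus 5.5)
We follow the outline of the technical overview and build the channel $\widetilde\Phi$ in four steps: a local transducer, a global transducer, LCU error reduction, and amplitude amplification.

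\textbf{Local transducer.} First we build an input-restricted canonical transducer $S_{\mathrm{loc}}$ whose target isometry is the Stinespring dilation $\ket\psi\mapsto\ket{0}_{\mathrm E}K_0\ket\psi+\ket{1}_{\mathrm E}K_1\ket\psi$ of $\Phi_\delta$ from \cref{eq:local-kraus}. We adapt the Cayley construction of \cite{CGWZ26} as follows. The Kraus operators are rational functions of $L/\alpha$ with denominator $(I+\nu L^\dagger L/\alpha^2)$. The catalyst is therefore taken as $\Gamma_{\mathrm{loc}}\ket\psi\propto\sqrt{\nu}\,(L/\alpha)(I+\nu L^\dagger L/\alpha^2)^{-1}\ket\psi$, together with its image under $O_L$. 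The query-free part $S^\circ$ is a fixed rotation, with angle determined by $\nu$, mixing the public no-jump branch, the public jump branch and the private space. We check the defining identity directly: one application of $O_L$ (or of $O_L^\dagger$ in the second private sector) sends the catalyst to the needed vector, and $S^\circ$ produces $K_0\ket\psi$ and $K_1\ket\psi$ while restoring the catalyst. This gives $\norm{\Gamma_{\mathrm{loc}}}^2=\cO(\nu)=\cO(\delta)$.

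\textbf{Global transducer.} Set $\delta=\tau/r$ and compose $r$ local steps. A fresh environment qubit is used at each step, so a history register records the jump pattern. Each step owns a private sector labeled by its time index. Since each local rotation touches only its own sector, the oracle actions on all sectors commute past it and can be gathered into a single layer of $\cO(1)$ controlled queries. This yields a global transducer $S$ that implements a Stinespring isometry $V$ of $\Phi_\delta^r$, with $\norm\Gamma^2=\sum_k\norm{\Gamma_k}^2=\cO(\tau)$ by orthogonality of the sectors. We take $r=\Theta(\tau^2/\epsilon)$, so that \cref{cor:discretization} bounds the discretization error by $\epsilon/3$. The number of queries does not depend on $r$.

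\textbf{LCU error reduction and the main obstacle.} Next we apply Lemma~\ref{lemma:lcu-error-reduction} with the coefficients $\lambda_j$ of \cite{CGWZ26}, using $s_{\max}=\cO(q)$, normalization $\Lambda_{\mathrm{LCU}}=\cO(1)$, and $\sum_j\lambda_j=1$. The remaining task is to bound $\norm{S_{\mathrm P\leftarrow\mathrm C}\,p_{\mathrm{err}}(S_{\mathrm C\leftarrow\mathrm C})\Gamma}$, and this is the main obstacle. A bound using only the operator norm of $S_{\mathrm C\leftarrow\mathrm C}$ is useless. Instead we restrict to the subspace reachable from $\operatorname{range}\Gamma$ under chronological evolution: valid time labels, with the history register of stage $k$ still $\ket0$ for every stage not yet reached. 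We would first verify that this subspace is invariant. Inside it, the spurious norm-one transition from the jump-input branch is absent, because that branch is empty. Expanding $S_{\mathrm C\leftarrow\mathrm C}^j$ as a sum over paths, each path moves forward in time through distinct stages, and each private-to-private stage transition contributes a factor $\cO(\sqrt{\delta})$. Different jump histories are orthogonal, so the contributions add in squared norm. Paths of length $d$ then contribute at most $\binom{r}{d}\delta^{d}\lesssim \tau^d/d!$ in squared norm. Combined with the vanishing-moment property of $p_{\mathrm{err}}$, which cancels the low-degree terms, this gives the bound $\sqrt{2\tau}\,(C\sqrt{\tau/q})^q$ for $q\gtrsim 1+\tau$. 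We choose $q=\Theta(c(\tau,\epsilon))$ so that this error is at most $\epsilon/3$; the choice follows the same calculation as in the Hamiltonian case.

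\textbf{Amplitude amplification and query count.} Finally, the restricted block encoding of $\widetilde V$ is $\cO(\epsilon)$-close to the isometry $V$ and has constant normalization. We rescale the normalization to exactly $2$ by adding a dummy term, then apply one round of oblivious amplitude amplification. This gives an isometry within $\cO(\epsilon)$ of $V$. Tracing out the history register gives $\widetilde\Phi$, and converting the isometry error to diamond norm costs at most a factor of $2$. The total query count is $\cO(s_{\max})=\cO(c(\tau,\epsilon))$.
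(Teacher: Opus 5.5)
Your overall architecture matches the paper: a local Cayley-type transducer, a time-labelled global transducer with one collected query layer, restriction to the chronologically valid private space, LCU over reuse lengths, and one round of amplification. However, the step that carries the whole additive bound does not go through as you describe it. On the valid space the private block is $S_{\mathrm C\leftarrow\mathrm C}=i(I-P-\Delta)O$, where $P$ projects onto the $\ran V_1$ slot of each sector. The same-time part $i(I-P)O$ has norm one. The oracle acts inside each sector, and the $\ran V_1$ slot has zero private-to-private block in $R$ because its amplitude leaves through the norm-one jump output.

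So it is false that every factor is a forward transition of size $\cO(\sqrt\delta)$, and powers of $S_{\mathrm C\leftarrow\mathrm C}$ do not decay. Your ``vanishing-moment property'' has to be an exact algebraic identity that kills this block. The identity from the Hamiltonian case, where the unperturbed block $iO$ squares to $-I$, fails here. The missing idea is $POP=0$, which holds because the Hermitian embedding $\widetilde O_L$ is off-diagonal. It gives $(i(I-P)O)^2=-I+Q$ with $Q=P+OPO$ a projection, so $p_{\mathrm{annih}}(A)=(A^2+A^4)/2$ annihilates the block. The paper then takes $p_{\mathrm{err}}=p_{\mathrm{annih}}^q g_{6q}$. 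Its coefficients are nonnegative with mean degree $3q$, which gives $\sum_i|\lambda_i|=2$, and every surviving word has between $q$ and $4q$ correction factors.

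Your counting also needs repair. The no-jump corrections $\Delta_{\mathrm R}$, which include a same-time piece, only append zeros to the history, so their paths are \emph{not} orthogonal. They must be bounded in operator norm using their size $\delta$ and multiset counts $\binom{r+u-1}{u}$. Only the jump corrections $\Delta_{\mathrm J}$, of size $\sqrt\delta$, gain from orthogonality. Even there, many input blocks feed the same output block, so the paper uses a block Schur test. Its row sums are $\delta^{u+v/2}\binom{r+u-1}{u}$ and its column sums carry an extra factor $\binom{r}{v}$.

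Second, the local catalyst as you describe it cannot work. You take it proportional to $\sqrt\nu(L/\alpha)(I+\nu L^\dagger L/\alpha^2)^{-1}\ket\psi$ together with its oracle image. This vanishes for an admissible oracle encoding $L=0$, and there $K_0=I$ and $K_1=0$. The transducer identity then forces $S^\circ(\ket\psi\oplus0\oplus0)=\ket\psi\oplus0\oplus0$. Since $S^\circ$ is unitary and oracle-independent, it preserves the orthogonal complement for every oracle. The no-jump output would therefore always be $\ket\psi$ rather than $K_0\ket\psi$.

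The paper's catalyst $\ket x=(I+i\widetilde O_L)V_{\mathrm{enc}}\ket z$ contains your jump component $-i\sqrt\nu(L/\alpha)(I+\nu L^\dagger L/\alpha^2)^{-1}\ket\psi$ in $\ran V_1$. It also contains a no-jump component $\sqrt\nu(I+\nu L^\dagger L/\alpha^2)^{-1}\ket\psi$ in $\ran V_0$, which $R$ couples to the public input with amplitude $\sqrt\delta$. This coupling is what produces $K_0\ket\psi$ and restores the catalyst, while keeping $\norm{\Gamma_{\mathrm{loc}}}^2\le2\delta$.
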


\subsection{Local Transducer}
We aim to construct a transducer that implements a Stinespring isometry whose induced channel approximates dissipative evolution to first order in the time step \(0<\delta\leq1/2\). Let \(\cH_{\mathrm{E}}\) denote the existing environment space. The input \(\ket\psi\) purifies \(\rho\) such that \(\Tr_{\mathrm{E}}{\ketbra{\psi}} = \rho\). We suppress the environment register below as all operators act as the identity on it.

To define the public and private spaces we introduce one ancilla qubit \(\cH_{\mathrm A}\coloneqq\mathbb C^2\) and let
\begin{equation}
    \widetilde L \coloneqq \ket 0 \bra 1 \otimes L^{\dagger} + \ket 1 \bra 0 \otimes L = \begin{pmatrix} 
        0 & L^\dagger \\ 
        L & 0
    \end{pmatrix}
\end{equation}
and define the isometries
\begin{equation}
    V_0\coloneqq\ket0_\mathrm{A}\otimes {B_{\mathrm{in}}},
    \qquad
    V_1\coloneqq\ket1_\mathrm{A}\otimes {B_{\mathrm{out}}},
    \qquad
    V_{\mathrm{enc}}\coloneqq(V_0,V_1),
    \label{eq:V-local}
\end{equation} 
where \({B_{\mathrm{in}}}, {B_{\mathrm{out}}}\) are the block-encoding isometries as defined in \cref{eq:block-encoding-isometry}.

The public and private spaces are
\[
\underbrace{\cH_{\mathrm{E}}}_{\text{shared environment}} 
\otimes
\left[
\underbrace{\begin{matrix}
    \underbrace{\cH_{\mathrm{S}}^{(0)}}_{\text{no-jump}} \oplus \underbrace{\cH_{\mathrm{S}}^{(1)}}_{\text{jump}}
\end{matrix}}_{\text{public}}
\;\oplus\;
\underbrace{\begin{matrix}
    \cH_{\mathrm A} \otimes \cH_{\mathrm{enc}}
\end{matrix}}_{\text{private}}
\right],
\]
\begin{equation}
    \cH_{\mathrm{P}, \mathrm{loc}} \coloneqq (\cH_{\mathrm{S}}^{(0)} \oplus \cH_{\mathrm{S}}^{(1)}) 
    \qquad
    \cH_{\mathrm{C}, \mathrm{loc}} \coloneqq \cH_{\mathrm{A}} \otimes\cH_{\mathrm {enc}},
    \label{eq:local-spaces}
\end{equation} 
where \(\cH_\mathrm{S} \cong \cH_{\mathrm{S}}^{(0)} \cong \cH_{\mathrm{S}}^{(1)}\) and superscripts only label the two copies as no jump and jump, respectively. The two spaces hold the two branches of the Stinespring isometry. Define \(V_{\mathrm{in}, \mathrm{loc}}:\cH_{S} \rightarrow (\cH_{\mathrm{S}}^{(0)} \oplus \cH_{\mathrm{S}}^{(1)})\) by \(V_{\mathrm{in}, \mathrm{loc}}\ket \psi_{\mathrm{S}} = \ket \psi_{\mathrm{S}^{(0)}} \oplus \vec{0}_{\mathrm{S}^{(1)}}\).

Define the Hermitian embedding 
\begin{equation}
  \widetilde{O}_L \coloneqq
    \begin{pmatrix}
        0 & O_L^\dagger \\ 
        O_L & 0
    \end{pmatrix},
  \label{eq:hermitian-oracle}
\end{equation}

then
\begin{equation}
    \widetilde{O}_L^\dagger=\widetilde{O}_L,
    \qquad \widetilde{O}_L^2=I,
    \qquad V_{\mathrm{enc}}^\dagger\widetilde{O}_LV_{\mathrm{enc}}=\frac{\widetilde L}{\alpha},
    \label{eq:lift-compression}
\end{equation}
where \(\widetilde{O}_L\) is the oracle for the local step.

When the inverse exists, define the complex Cayley transform
\begin{equation}
    C(A) \coloneqq (I-iA)(I+iA)^{-1}
\end{equation}
and the real Cayley transform
\begin{equation}
    F(A) \coloneqq C(-iA) = (I-A)(I+A)^{-1}.
\end{equation}

For a public state \(\ket \psi\) we construct our catalyst \(\ket x\) using two intermediate vectors, \(\ket y, \ket z\). Following the Cayley-transducer construction of \cite{CGWZ26} we design \(\ket z, \ket x\)  so that a query on \(\ket x\) produces an encoding of 
\(iC(\frac{\widetilde{L}}{\alpha}) (\ket 0_A \ket y)\) in the \(V_{\mathrm{enc}}\) subspace. In the orthogonal subspace, the oracle acts as a phase. The vector \(\ket y\) is chosen to allow us to create the two public outputs as well as restore the catalyst with the amplitude supplied by the public input. 
\begin{align}
    \ket y& \coloneqq \sqrt{\delta} \bigl(I+\sqrt{1-\delta} F(L^\dagger L /\alpha^2) \bigr)^{-1}\ket \psi \label{eq:caty} \\
    \ket z& \coloneqq \bigl(I+i\frac{\widetilde{L}}{\alpha}\bigr)^{-1} (\ket 0_A \ket y) \label{eq:catz} \\ 
    \ket x& \coloneqq \bigl(I+i\widetilde{O}_L \bigr)V_{\mathrm{enc}}\ket z. \label{eq:catx}
\end{align}
Since \(0\preceq F(L^\dagger L/\alpha^2)\preceq I\) and \(\widetilde L\) is Hermitian, the inverses in these definitions exist.

Let \(\Gamma_{\mathrm{loc}}:\cH_{\mathrm{S}}\to\cH_{\mathrm{C}, \mathrm loc}\) denote the linear catalyst map
\[
    \Gamma_{\mathrm{loc}}\ket\psi \coloneqq \ket{x(\psi)}.
\]

Define the rotation \(R\) 
\begin{equation}
  R \coloneqq
    \begin{pmatrix}
        \sqrt{1-\delta}I & 0 & -i\sqrt{\delta}V_0^\dagger \\
        0 & 0 & -iV_1^\dagger \\
        \sqrt{\delta}V_0 & V_1 & i(\sqrt{1-\delta}V_0V_0^{\dagger}+\bigl(I-V_0V_0^{\dagger}-V_1V_1^{\dagger} \bigr))
    \end{pmatrix}.
    \label{eq:R}
\end{equation}

\(R\) acts as follows.
\begin{equation}
    \begin{pmatrix}
        \text{public no-jump output} \\ 
        \text{public jump output} \\ 
        \text{private output} 
    \end{pmatrix}
    = R \begin{pmatrix}
        \text{public input} \\ 
        \vec{0} \\ 
        \text{private input}
    \end{pmatrix}.
\end{equation}

Define the local transducer \(S_{\mathrm{loc}}\) as
\begin{equation}
    S_{\mathrm{loc}} \coloneqq R (I_{\cH_{\mathrm{P}, \mathrm{loc}}}\oplus \widetilde{O}_L).
    \label{eq:local-transducer}
\end{equation}

\begin{lemma}[\(R\) is unitary]
    The operator \(R\) in \cref{eq:R} is unitary.
\end{lemma}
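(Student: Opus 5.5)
We will verify directly that \(R^\dagger R = I\). Because \(R\) acts on a finite-dimensional space, a left inverse is automatically a two-sided inverse, so this suffices. Write \(\Pi_0 \coloneqq V_0V_0^\dagger\), \(\Pi_1 \coloneqq V_1V_1^\dagger\), and \(\Pi_\perp \coloneqq I-\Pi_0-\Pi_1\) on \(\cH_{\mathrm{C},\mathrm{loc}}\).

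\emph{First step: projector relations.} \(V_0\) and \(V_1\) are isometries, \(V_0^\dagger V_0 = V_1^\dagger V_1 = I\). Their ranges are orthogonal because they carry different \(\ket{0}_{\mathrm A}\) and \(\ket{1}_{\mathrm A}\) labels, so \(V_0^\dagger V_1 = 0\). Hence \(\Pi_0\), \(\Pi_1\), \(\Pi_\perp\) are mutually orthogonal projectors summing to \(I\), and they satisfy \(V_0^\dagger\Pi_\perp = V_1^\dagger\Pi_\perp = 0\), \(V_0^\dagger\Pi_0 = V_0^\dagger\), and \(V_1^\dagger \Pi_0 = 0\). The bottom-right block of \(R\) is \(D \coloneqq i(\sqrt{1-\delta}\,\Pi_0 + \Pi_\perp)\).

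\emph{Second step: the nine blocks of \(R^\dagger R\).} We compute \(R^\dagger R\) as a \(3\times 3\) block matrix, pairing columns of \(R\).

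\begin{itemize}
\item The \((1,1)\) block is \((1-\delta)I + \delta V_0^\dagger V_0 = I\).
\item The \((2,2)\) block is \(V_1^\dagger V_1 = I\).
\item The \((1,2)\) block is \(\sqrt{\delta}\,V_0^\dagger V_1 = 0\).
\item The \((1,3)\) block is \(\sqrt{1-\delta}(-i\sqrt{\delta}V_0^\dagger) + \sqrt{\delta}\,V_0^\dagger D\). Using \(V_0^\dagger D = i\sqrt{1-\delta}\,V_0^\dagger\), this becomes \(-i\sqrt{\delta(1-\delta)}\,V_0^\dagger + i\sqrt{\delta(1-\delta)}\,V_0^\dagger = 0\).
\item The \((2,3)\) block is \(V_1^\dagger D = 0\).
\item The \((3,3)\) block is \(\delta\,\Pi_0 + \Pi_1 + D^\dagger D\). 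Since \(D^\dagger D = (1-\delta)\Pi_0 + \Pi_\perp\), the sum is \(\Pi_0 + \Pi_1 + \Pi_\perp = I\).
\end{itemize}

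The remaining blocks are adjoints of those already computed.

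\emph{Main obstacle.} The only delicate point is the sign bookkeeping in the \((1,3)\) block, where the \(-i\) in the top-right entry must cancel the \(+i\) coming from \(D\). We will also confirm the orthogonality \(V_0^\dagger V_1 = 0\) carefully, since it kills both the \((1,2)\) block and the cross terms inside \(D\). For this we use that \(B_{\mathrm{in}}\) and \(B_{\mathrm{out}}\) map into \(\cH_{\mathrm{enc}}\) while the ancilla labels differ.
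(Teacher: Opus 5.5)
Your proof is correct, and it takes a different route from the paper's. You verify $R^\dagger R = I$ by computing every block of the Gram matrix in the public no-jump / public jump / private decomposition, using only $V_0^\dagger V_0 = V_1^\dagger V_1 = I$ and $V_0^\dagger V_1 = 0$. You then use finite dimensionality to upgrade the isometry to a unitary. Each block checks out, including the cancellation of the $\mp i\sqrt{\delta(1-\delta)}\,V_0^\dagger$ terms in the $(1,3)$ block and the identity $D^\dagger D = (1-\delta)\Pi_0 + \Pi_\perp$.

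The paper instead splits the whole space into three $R$-invariant sectors: public no-jump $\oplus \ran V_0$, public jump $\oplus \ran V_1$, and $\ran \Pi_\perp$. Under the identifications induced by $V_0, V_1$, $R$ acts on these as $\begin{pmatrix}\sqrt{1-\delta} & -i\sqrt\delta\\ \sqrt\delta & i\sqrt{1-\delta}\end{pmatrix}$, $\begin{pmatrix}0&-i\\1&0\end{pmatrix}$, and $iI$, respectively.

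The paper's route gives $R^\dagger R = RR^\dagger = I$ at once, with no dimension argument. It also exposes the two-level rotations that the paper reuses later, for example the jump rotation $\begin{pmatrix}0&-1\\1&0\end{pmatrix}$ after factoring out the $i$ phase in the gate-complexity section. Its cost is that invariance of the three sectors has to be checked. The paper leaves this implicit, and it rests on exactly the orthogonality relations you verify. Your computation is more mechanical but fully self-contained.
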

\begin{proof}
    Decompose the space into \(3\) orthogonal invariant subspaces: the direct sum of the public no-jump space and \(\ran V_0\), the direct sum of the public jump space and \(\ran V_1\), and the orthogonal complement of \(\ran V_0 \oplus \ran V_1\) in the private space. Under the identifications induced by \(V_0, V_1\), the restrictions of \(R\) are
    \[
    \begin{pmatrix}
        \sqrt{1-\delta} & -i\sqrt{\delta}   \\ 
        \sqrt{\delta} & i\sqrt{1-\delta}
    \end{pmatrix}, \qquad
    \begin{pmatrix}
        0 & -i \\ 
        1 & 0
    \end{pmatrix}, \qquad
    iI_{\ran \bigl(I-V_0V_0^{\dagger}-V_1V_1^{\dagger} \bigr)}.
    \]
\end{proof}
\begin{lemma}[Cayley block identity]
\label{lem:cayley-block}
    \begin{equation}
        C(\widetilde L)(\ket 0_A \ket y ) =
        \begin{pmatrix}
            F(L^{\dagger}L)\ket y\\
            -2iL(I+L^{\dagger}L)^{-1}\ket y
        \end{pmatrix}.
        \label{eq:cayley-block}
    \end{equation}
\end{lemma}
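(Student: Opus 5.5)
Write $\widetilde L$ in block form with respect to the ancilla $\cH_{\mathrm A}$, and set $M \coloneqq I+L^\dagger L$ (positive definite) and $N \coloneqq I+LL^\dagger$.

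The plan is to invert $I+i\widetilde L$ explicitly by a Schur-complement (block inverse) computation, and then multiply by $I-i\widetilde L$ applied to $\ket 0_A\ket y$. Only the first block column of $C(\widetilde L)$ is needed, because the input vector $\ket 0_A\ket y$ has zero second block.

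\textbf{Step 1: first column of the inverse.} We have
\[
I+i\widetilde L=\begin{pmatrix} I & iL^\dagger\\ iL & I\end{pmatrix}.
\]
We look for $(u,w)$ with $(I+i\widetilde L)(u,w)^T=(\ket y,0)^T$. The second row gives $iLu+w=0$, so $w=-iLu$. Substituting into the first row gives $u+iL^\dagger(-iLu)=u+L^\dagger Lu=\ket y$. Hence
\[
u=M^{-1}\ket y,\qquad w=-iLM^{-1}\ket y .
\]
Since $\widetilde L$ is Hermitian, $I+i\widetilde L$ is invertible, so this solution is the unique one and equals $(I+i\widetilde L)^{-1}(\ket 0_A\ket y)$.

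\textbf{Step 2: apply $I-i\widetilde L$.} The first component is
\[
u-iL^\dagger w=u-L^\dagger Lu=(I-L^\dagger L)M^{-1}\ket y=F(L^\dagger L)\ket y .
\]
The second component is
\[
-iLu+w=-2iLM^{-1}\ket y=-2iL(I+L^\dagger L)^{-1}\ket y .
\]
These match the two blocks in \cref{eq:cayley-block}.

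\textbf{Checks and main obstacle.} The only point needing care is the commutation and ordering of factors. $F(L^\dagger L)=(I-L^\dagger L)(I+L^\dagger L)^{-1}$ is well defined because both factors are functions of the same Hermitian operator $L^\dagger L$, so they commute. Also, $(I+i\widetilde L)^{-1}$ and $I-i\widetilde L$ commute, so the order in which the Cayley transform is defined does not matter. No other obstacles are expected; the argument is a direct $2\times2$ block computation. If $\widetilde L/\alpha$ is intended, the same computation applies with $L$ replaced by $L/\alpha$.
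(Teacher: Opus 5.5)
Your proposal is correct and is essentially the paper's proof: you solve $(I+i\widetilde L)\ket w=\ket 0_A\ket y$ blockwise to get $\ket{w_0}=(I+L^\dagger L)^{-1}\ket y$ and $\ket{w_1}=-iL\ket{w_0}$, then apply $I-i\widetilde L$ to read off the two blocks. Your remarks on invertibility (from Hermiticity of $\widetilde L$) and on the well-definedness of $F(L^\dagger L)$ are correct. These points are left implicit in the paper's version.
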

\begin{proof}
Write \(\ket w=\ket{w_0}\oplus\ket{w_1}\). The equation \((I+i\widetilde L)\ket w=\ket 0_A \ket y\) gives
\[
    \begin{pmatrix}
        I & iL^\dagger \\
        iL & I
    \end{pmatrix}
    \begin{pmatrix}
        \ket{w_0} \\
        \ket{w_1}
    \end{pmatrix} =
    \begin{pmatrix}
        \ket y\\
        \vec{0}
    \end{pmatrix},
\]
so
\[
    \ket{w_0}+iL^\dagger\ket{w_1}=\ket y,
    \qquad
    \ket{w_1}=-iL\ket{w_0}.
\]
Substitution gives
\[
    \ket{w_0}=(I+L^\dagger L)^{-1}\ket y,
    \qquad
    \ket{w_1}=-iL(I+L^\dagger L)^{-1}\ket y,
\]
and
\begin{align*}
    C(\widetilde L)(\ket 0_A \ket y)
    &=(I-i\widetilde L)\ket w\\
    &=
    \begin{pmatrix}
        \ket{w_0}-iL^\dagger\ket{w_1}\\
        \ket{w_1}-iL\ket{w_0}
    \end{pmatrix}\\
    &=
    \begin{pmatrix}
        (I-L^\dagger L)\ket{w_0}\\
        -2iL\ket{w_0}
    \end{pmatrix}\\
    &=
    \begin{pmatrix}
        F(L^\dagger L)\ket y\\[2mm]
        -2iL(I+L^\dagger L)^{-1}\ket y
    \end{pmatrix}.
\end{align*}
\end{proof}

\begin{proposition}[Local transducer identity]
\label{prop:local-transducer}
We use \(\nu\) as defined in \cref{eq:nu} and \(K_0, K_1\) as defined in \cref{eq:local-kraus}.
\begin{align}
    \nu &= \frac{1-\sqrt{1-\delta}}{1+\sqrt{1-\delta}}
      =\frac{\delta}{(1+\sqrt{1-\delta})^2} \nonumber \\
    K_0 &= (I-\nu \frac{L^\dagger L}{\alpha^2})(I+\nu \frac{L^\dagger L}{\alpha^2})^{-1}, \nonumber \\
    K_1 &= -2i\sqrt{\nu}\,\frac{L}{\alpha}(I+\nu \frac{L^\dagger L}{\alpha^2})^{-1}.\nonumber 
\end{align}
For every \(\ket\psi\in\cH_{\mathrm{S}}\), the catalyst vector
\(\ket{x(\psi)}\) defined by
\cref{eq:caty}--\cref{eq:catx} satisfies
\begin{equation}
    R\bigl(\ket\psi\oplus0\oplus
      \widetilde{O}_L\ket{x(\psi)}\bigr)
    =
    K_0\ket\psi\oplus K_1\ket\psi\oplus\ket{x(\psi)}.
    \label{eq:local-transducer-action}
\end{equation}
\end{proposition}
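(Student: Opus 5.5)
Set $A \coloneqq L^\dagger L/\alpha^2$ (so $0\preceq A\preceq I$) and work with the scaled operators $L/\alpha$ throughout. The plan is to first compute $\widetilde O_L\ket{x}$ in closed form, and then read off the three output rows of $R$ from \cref{eq:R}.

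\textbf{Step 1: compute $\widetilde O_L\ket{x}$.} Since $\widetilde O_L^2=I$, we have
\[
\widetilde O_L\ket x=(\widetilde O_L+iI)V_{\mathrm{enc}}\ket z=i(I-i\widetilde O_L)V_{\mathrm{enc}}\ket z .
\]
Split $\widetilde O_L V_{\mathrm{enc}}\ket z$ into $V_{\mathrm{enc}}V_{\mathrm{enc}}^\dagger\widetilde O_L V_{\mathrm{enc}}\ket z + \ket{\perp}$. By \cref{eq:lift-compression} the first part is $V_{\mathrm{enc}}(\widetilde L/\alpha)\ket z$, and $\ket\perp$ lies in $\ran(I-V_0V_0^\dagger-V_1V_1^\dagger)$. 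Therefore
\[
\widetilde O_L\ket x = iV_{\mathrm{enc}}\,C(\widetilde L/\alpha)(\ket0_A\ket y)+\ket\perp ,
\qquad
\ket{x}=V_{\mathrm{enc}}(I+i\widetilde L/\alpha)\ket z+i\ket\perp = V_0\ket y + i\ket\perp .
\]
The second identity uses the definition of $\ket z$. The key point is that the same $\ket\perp$ appears in both expressions. Applying Lemma~\ref{lem:cayley-block} with $L/\alpha$ in place of $L$ gives
\[
V_0^\dagger\widetilde O_L\ket x = iF(A)\ket y,
\qquad
V_1^\dagger\widetilde O_L\ket x = 2(L/\alpha)(I+A)^{-1}\ket y .
\]

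\textbf{Step 2: read off the rows of $R$.} The private row gives
\[
\sqrt\delta V_0\ket\psi + i\sqrt{1-\delta}V_0\bigl(iF(A)\ket y\bigr) + iV_1\bigl(V_1^\dagger\widetilde O_L\ket x\bigr) + i\ket\perp .
\]
This must equal $\ket x = V_0\ket y + i\ket\perp$. The $\ket\perp$ term already matches. The $V_0$ component requires $\sqrt\delta\ket\psi-\sqrt{1-\delta}F(A)\ket y=\ket y$, which is exactly the definition \cref{eq:caty} of $\ket y$. The $V_1$ component must vanish. At first sight it looks nonzero, which is suspicious, so this is the place to recheck the bookkeeping. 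Most likely the intended $\ket x$ has a $V_1$ part as well, namely $V_{\mathrm{enc}}$ applied to the full vector $(I+i\widetilde L/\alpha)\ket z=\ket0\ket y$, whose $V_1$ component is zero. I would recompute carefully, noting that $R$'s private row contributes only $iV_0V_0^\dagger$ scaled by $\sqrt{1-\delta}$ on $\ran V_0$, zero on $\ran V_1$, and $i$ on the complement. With this, the $\ran V_1$ part of the private output is just $V_1$ applied to the public jump input, which is $0$. So the private output is consistent.

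The public no-jump row gives
\[
\sqrt{1-\delta}\ket\psi-i\sqrt\delta\,V_0^\dagger\widetilde O_L\ket x=\sqrt{1-\delta}\ket\psi+\sqrt\delta F(A)\ket y .
\]
The public jump row gives $-iV_1^\dagger\widetilde O_L\ket x=-2i(L/\alpha)(I+A)^{-1}\ket y$.

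\textbf{Step 3: scalar functional calculus.} Everything is now a function of the single operator $A$, so it suffices to verify scalar identities for $a\in[0,1]$. Write $f=(1-a)/(1+a)$ and $s=\sqrt{1-\delta}$, so that $y=\sqrt\delta\,\psi/(1+sf)$.

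For the no-jump output, I need
\[
s+\frac{\delta f}{1+sf}=\frac{s+f}{1+sf}=\frac{1-\nu a}{1+\nu a}.
\]
The first equality uses $s^2+\delta=1$. For the second, clear denominators:
\[
(s+f)(1+\nu a)=(1+sf)(1-\nu a).
\]
After substituting $f$ and $\nu=(1-s)/(1+s)$, this is a polynomial identity in $a$ that I will check directly.

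For the jump output, the target is
\[
\frac{\sqrt\delta}{(1+a)(1+sf)}=\frac{\sqrt\nu}{1+\nu a}.
\]
Expand the left denominator:
\[
(1+a)(1+sf)=(1+a)+s(1-a)=(1+s)+(1-s)a=(1+s)(1+\nu a).
\]
So the identity reduces to $\sqrt\delta/(1+s)=\sqrt\nu$. This holds because $\nu=\delta/(1+s)^2$. The same denominator computation also settles the no-jump identity, since
\[
s+f \;\to\; s(1+a)+(1-a)=(1+s)-(1-s)a=(1+s)(1-\nu a).
\]

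The main obstacle is Step 1: tracking which components of the catalyst lie in $\ran V_0$, $\ran V_1$ and their orthogonal complement, and making sure the complement term is reproduced exactly by the phase $i$ in $R$.
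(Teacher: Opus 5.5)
Your argument is correct and follows the paper's proof essentially step for step. You compute $\widetilde O_L\ket{x}$ via the Cayley block identity; your explicit $\ket\perp$ bookkeeping is the paper's observation that $i(I-V_0V_0^\dagger-V_1V_1^\dagger)\widetilde O_L\ket x=(I-V_0V_0^\dagger-V_1V_1^\dagger)\ket x$. You then read off the three rows of $R$ and verify the scalar identities through $(1+a)(1+sf)=(1+s)(1+\nu a)$ and $(1+a)(s+f)=(1+s)(1-\nu a)$, which are the paper's two factorization identities.

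The one blemish is your first display of the private row, which wrongly includes $iV_1V_1^\dagger\widetilde O_L\ket x$; the private-to-private block of $R$ vanishes on $\ran V_1$. You catch this yourself, so in a clean write-up drop that term and the speculation that the intended $\ket x$ differs.
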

\begin{proof}
Using \(\widetilde{O}_L^2=I\) and
\(V_{\mathrm{enc}}^\dagger\widetilde{O}_LV_{\mathrm{enc}}=\frac{\widetilde L}{\alpha}\),
\begin{align}
    -iV_{\mathrm{enc}}^\dagger\widetilde{O}_L\ket x 
    &= -iV_{\mathrm{enc}}^\dagger\widetilde{O}_L(I+i\widetilde{O}_L)V_{\mathrm{enc}}\ket z\nonumber\\
    &= (I-i\frac{\widetilde L}{\alpha})\ket z \\
    & = C(\frac{\widetilde L}{\alpha})(\ket 0_A \ket y)\nonumber\\
    &=\begin{pmatrix}
        F(L^\dagger L/\alpha^2)\ket y\\
        -2iL/\alpha(I+L^\dagger L/\alpha^2)^{-1}\ket y
    \end{pmatrix}.
    \label{eq:oracle-cayley}
\end{align}

\begin{enumerate}
    \item Public no-jump:
        \begin{align}
            \sqrt{1-\delta}\ket\psi - i\sqrt\delta V_0^\dagger\widetilde{O}_L \ket x 
            &= \sqrt{1-\delta}\ket\psi + \sqrt\delta F(\frac{L^{\dagger}L}{\alpha^2})\ket y\nonumber\\
            &= \sqrt{1-\delta}\ket\psi \nonumber + \delta F(\frac{L^{\dagger}L}{\alpha^2}) (I+\sqrt{1-\delta} F(\frac{L^{\dagger}L}{\alpha^2}))^{-1}\ket \psi \nonumber \\                    
            &= \bigl[\sqrt{1-\delta}(I+\sqrt{1-\delta}F(\frac{L^{\dagger}L}{\alpha^2}))+\delta F(\frac{L^{\dagger}L}{\alpha^2})\bigr](I+\sqrt{1-\delta}F(\frac{L^{\dagger}L}{\alpha^2}))^{-1}\ket\psi\nonumber\\
            &=(\sqrt{1-\delta}I+F(\frac{L^{\dagger}L}{\alpha^2}))(I+\sqrt{1-\delta}F(\frac{L^{\dagger}L}{\alpha^2}))^{-1}\ket\psi\nonumber\\
            &=(I-\nu \frac{L^{\dagger}L}{\alpha^2})(I+ \nu \frac{L^{\dagger}L}{\alpha^2})^{-1}\ket\psi.
            \label{eq:no-jump-output}
        \end{align}
        For the last equality, use
        \begin{align*}
            \sqrt{1-\delta}I+F(\frac{L^{\dagger}L}{\alpha^2})&=(1+\sqrt{1-\delta})(I-\nu \frac{L^{\dagger}L}{\alpha^2})(I+\frac{L^{\dagger}L}{\alpha^2})^{-1},\\
            I+\sqrt{1-\delta}F(\frac{L^{\dagger}L}{\alpha^2})&=(1+\sqrt{1-\delta})(I+\nu \frac{L^{\dagger}L}{\alpha^2})(I+\frac{L^{\dagger}L}{\alpha^2})^{-1}.
        \end{align*}
    \item Public jump:
        \begin{align}
            -iV_1^\dagger\widetilde{O}_L\ket x &=-2i\frac{L}{\alpha}(I+\frac{L^{\dagger}L}{\alpha^2})^{-1}\ket y \\
            &=-2i\sqrt{\delta}\frac{L}{\alpha}(I+\frac{L^{\dagger}L}{\alpha^2})^{-1}(I+\sqrt{1-\delta} F(\frac{L^{\dagger}L}{\alpha^2}))^{-1}\ket \psi \\
            &=-2i\sqrt{\delta}\frac{L}{\alpha}(I+\frac{L^{\dagger}L}{\alpha^2})^{-1}[I+\sqrt{1-\delta} (I-\frac{L^{\dagger}L}{\alpha^2})(I+\frac{L^{\dagger}L}{\alpha^2})^{-1}]^{-1}\ket \psi \\
            &=-2i\sqrt{\delta}\frac{L}{\alpha}(I+\frac{L^{\dagger}L}{\alpha^2})^{-1}[(1+\sqrt{1-\delta})(I+\frac{\delta}{(1+\sqrt{1-\delta})^2}\frac{L^{\dagger}L}{\alpha^2})(I+\frac{L^{\dagger}L}{\alpha^2})^{-1}]^{-1}\ket \psi \\
            & =-2i\frac{\sqrt{\delta}}{1+\sqrt{1-\delta}}\frac{L}{\alpha}(I+\frac{\delta}{(1+\sqrt{1-\delta})^2}\frac{L^{\dagger}L}{\alpha^2})^{-1}\ket\psi \nonumber\\ 
            &= -2i \sqrt{\nu} \frac{L}{\alpha}(I+\nu \frac{L^{\dagger}L}{\alpha^2})^{-1} \ket \psi.
        \end{align}
    \item Private:
        \cref{eq:catx} and \cref{eq:lift-compression} give
        \[
            V_{\mathrm{enc}}^\dagger\ket x=(I+i\frac{\widetilde{L}}{\alpha})\ket z=\ket 0_A \ket y,
        \]
        so \(V_0V_0^{\dagger}\ket x=V_0\ket y\) and \(V_1V_1^{\dagger}\ket x=0\). 
        \cref{eq:oracle-cayley} also gives
        \[
          iV_0V_0^{\dagger} \widetilde{O}_L \ket x=-V_0F(\frac{L^{\dagger}L}{\alpha^2})\ket y.
        \]
        \(\bigl(I-V_0V_0^{\dagger}-V_1V_1^{\dagger} \bigr) V_{\mathrm{enc}}=0\) and \cref{eq:catx} give
        \[
          i\bigl(I-V_0V_0^{\dagger}-V_1V_1^{\dagger} \bigr) \widetilde{O}_L \ket x=\bigl(I-V_0V_0^{\dagger}-V_1V_1^{\dagger} \bigr)\ket x.
        \]
        Thus the private output is
        \begin{align}
            &\quad \sqrt{\delta} V_0\ket\psi+i(\sqrt{1-\delta}V_0V_0^{\dagger}+\bigl(I-V_0V_0^{\dagger}-V_1V_1^{\dagger} \bigr))\widetilde{O}_L\ket x \\
            &= \sqrt{\delta}V_0 \ket \psi -\sqrt{1-\delta}V_0F(\frac{L^{\dagger}L}{\alpha^2})\ket y \nonumber + \bigl(I-V_0V_0^{\dagger}-V_1V_1^{\dagger} \bigr) \ket x \nonumber \\
            &=V_0\bigl[(I+\sqrt{1-\delta}F(\frac{L^{\dagger}L}{\alpha^2})) \ket y -\sqrt{1-\delta} F(\frac{L^{\dagger}L}{\alpha^2})\ket y\bigr] \nonumber 
            +\bigl(I-V_0V_0^{\dagger}-V_1V_1^{\dagger} \bigr)\ket x\nonumber\\
            &=V_0\ket y+\bigl(I-V_0V_0^{\dagger}-V_1V_1^{\dagger} \bigr) \nonumber\ket x \\
            &=\ket x.
        \end{align}
    \end{enumerate}
\end{proof}

\begin{lemma}[Catalyst size]
\label{lem:catalyst-size}
The catalyst map satisfies
\begin{equation}
    \norm{\Gamma_{\mathrm loc}}^2 \leq 2\delta.
    \label{eq:catalyst-size}
\end{equation}
\end{lemma}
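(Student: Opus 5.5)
The bound follows by tracking the norm through the three-step definition \cref{eq:caty}--\cref{eq:catx} of \(\ket{x(\psi)}\). We peel off one factor at a time, from the outside in, and show that each factor except the first is a contraction. Since \(\Gamma_{\mathrm{loc}}\) is linear, it suffices to show \(\norm{\ket{x(\psi)}}^2\leq 2\delta\norm{\ket\psi}^2\) for every \(\ket\psi\in\cH_{\mathrm S}\).

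First, the outer factor \(I+i\widetilde O_L\). By \cref{eq:lift-compression}, \(\widetilde O_L\) is Hermitian with \(\widetilde O_L^2=I\), so its spectrum lies in \(\{\pm1\}\). Hence \(I+i\widetilde O_L\) is normal with eigenvalues \(1\pm i\), and \((I+i\widetilde O_L)/\sqrt2\) is unitary. Because \(V_{\mathrm{enc}}\) is an isometry (its two columns \(V_0,V_1\) have orthogonal ranges, distinguished by the \(\cH_{\mathrm A}\) qubit), this gives
\[
\norm{\ket x}^2=2\norm{V_{\mathrm{enc}}\ket z}^2=2\norm{\ket z}^2 .
\]

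Second, the Cayley resolvent. Since \(\widetilde L/\alpha\) is Hermitian, \(I+i\widetilde L/\alpha\) is normal with eigenvalues \(1+i\lambda\) for real \(\lambda\), each of modulus at least \(1\). Therefore \(\norm{(I+i\widetilde L/\alpha)^{-1}}\leq1\), and
\[
\norm{\ket z}\leq\norm{\ket0_A\ket y}=\norm{\ket y}.
\]
Alternatively, the explicit solution in the proof of \cref{lem:cayley-block} gives \(\norm{\ket z}^2=\bra y(I+L^\dagger L/\alpha^2)^{-1}\ket y\leq\norm{\ket y}^2\).

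Third, the definition of \(\ket y\). Because \(0\preceq L^\dagger L/\alpha^2\preceq I\), the scalar map \(u\mapsto(1-u)/(1+u)\) sends \([0,1]\) into \([0,1]\), so \(0\preceq F(L^\dagger L/\alpha^2)\preceq I\). Consequently \(I+\sqrt{1-\delta}F(L^\dagger L/\alpha^2)\succeq I\), its inverse has norm at most \(1\), and
\[
\norm{\ket y}\leq\sqrt\delta\,\norm{\ket\psi}.
\]
Chaining the three estimates yields \(\norm{\ket x}^2\leq2\delta\norm{\ket\psi}^2\), that is, \(\norm{\Gamma_{\mathrm{loc}}}^2\leq2\delta\).

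None of these steps is a serious obstacle. The only point needing care is the first: one must use the exact unitarity of \((I+i\widetilde O_L)/\sqrt2\), or equivalently the identity \((I+i\widetilde O_L)^\dagger(I+i\widetilde O_L)=2I\), which follows from \(\widetilde O_L^2=I\). Bounding this factor crudely by the triangle inequality would give \(\norm{I+i\widetilde O_L}\leq 2\), i.e.\ a squared factor of \(4\) instead of \(2\), and so would not recover the stated constant.
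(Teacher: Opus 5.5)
Your proposal is correct and follows essentially the same route as the paper: the paper also bounds \(\norm{\ket y}\leq\sqrt\delta\norm{\psi}\) from \(0\preceq F(L^\dagger L/\alpha^2)\preceq I\), uses \(\norm{\ket z}\leq\norm{\ket y}\), and obtains \(\norm{\ket x}^2=2\norm{\ket z}^2\) from \((I-i\widetilde O_L)(I+i\widetilde O_L)=2I\). Your write-up simply makes explicit the justifications the paper leaves implicit, namely that \(V_{\mathrm{enc}}\) is an isometry and that \(\norm{(I+i\widetilde L/\alpha)^{-1}}\leq1\) for Hermitian \(\widetilde L\).
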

\begin{proof}
Since \(0 \preceq F(\frac{L^{\dagger}L}{\alpha^2}) \preceq I\), \cref{eq:caty}--\cref{eq:catz} imply
\(\norm {y}\leq \sqrt{\delta}\norm{\psi}\) and \(\norm {z}\leq\norm {y}\). Because \(\widetilde{O}_L\) is a Hermitian unitary 
\[
    \norm {x}^2
    = \bra {z}V_{\mathrm{enc}}^\dagger(I-i\widetilde{O}_L)(I+i\widetilde{O}_L)V_{\mathrm{enc}}\ket {z}
    = 2\norm {z}^2
    \leq2\delta\norm{\psi}^2.
\]
\end{proof}

\subsection{Global Transducer}
The local transducer produces two public branches whose induced channel is a first-order approximation to the dissipative evolution. To compose \(r\) local transducers, define the environment register \(\mathrm{E}=\mathrm{E}_0 \mathrm{E}_1\cdots \mathrm{E}_{r-1}\) and a time register on \(\cH_{\mathrm{T}}\). The environment register \(\mathrm{E}\) records the jump history. At step \(j\), we write \(\mathrm{E}_{<j}\) for the environment registers from preceding steps and \(E_{> j}\) for the environment registers still in their initial \(\ket 0\) state that will be set in the future steps. At step \(j\), the environment in the local construction is identified with the accumulated history register \(\mathrm E_{<j}\). The local transducer acts trivially on this register and records the new jump outcome in \(\mathrm E_j\), thereby extending the purification. 

Let
\[
    \cH_{\mathrm{E}} \coloneqq (\mathbb C^2)^{\otimes r},
    \qquad
    \cH_{\mathrm{T}} \coloneqq \operatorname{span}\{\ket j_{\mathrm T}:0\leq j<r\}
    \cong\mathbb C^r.
\]

The public and private spaces are
\[
\underbrace{\mathcal{H}_\mathrm{E}}_{\text{shared history}}
\otimes
\left[
  \underbrace{\cH_{\mathrm{S}}}_{\text{public}}
  \;\oplus\;
  \underbrace{\cH_{\mathrm{T}} \otimes \cH_{\mathrm{C}, \mathrm{loc}}}_{\text{private}}
\right],
\]
\[
    \cH_{\mathrm{P}}
    \coloneqq \cH_{\mathrm{E}}\otimes \cH_{\mathrm{S}} ,
    \qquad
    \cH_{\mathrm{C}}
    \coloneqq \cH_{\mathrm{T}} \otimes \cH_{\mathrm{E}}\otimes\cH_{\mathrm{C}, \mathrm{loc}}.
\]
\(\cH_{\mathrm{C}, \mathrm{loc}}\) as defined in \cref{eq:local-spaces} excludes the common environment. 

Let \(V_{\mathrm{in}}\) attach the all-zero history \(\ket{0^r}_{\mathrm{E}}\):
\[
    V_{\mathrm{in}}:\cH_{\mathrm{S}}\to\cH_{\mathrm{P}}, 
    \qquad 
    V_{\mathrm{in}} := \ket {0^r}_\mathrm{E} \otimes I.
\]

The initial input state is \(\ket \psi\) and each application of the Stinespring isometry from the local transducer produces two new branches. 

For a binary history \(h=(h_0, h_1, \ldots, h_{j-1}) \in \{0, 1\}^j\) where \(1 \leq j \leq r\) define 
\[
    K_h \coloneqq K_{h_{j-1}} K_{h_{j-2}} \cdots K_{h_{0}}, \qquad
    K_{\varnothing} \coloneqq I.
\]
\(K_h\) is the ordered product of Kraus operators associated with the history \(h\).

Define the \(j^{\mathrm{th}}\) purified public state as
\begin{equation}
    \ket{\Psi_j} = \sum_{h\in\{0, 1\}^{j}} \ket {h 0^{r-j}}_{\mathrm{E}} K_{h} \ket\psi_{\mathrm{S}}.
    \label{eq:history-state}
\end{equation}
where \(\ket{\Psi_0} = V_{\mathrm{in}}\ket \psi\). This is the state produced by \(j\) applications of the Stinespring isometry from the local transducer on the input state \(\ket \psi\).

We define the local catalyst at the \(j^{\mathrm{th}}\) step as
\begin{equation}
    \ket {x_j} := (I_{\mathrm{E}} \otimes \Gamma_{\mathrm{loc}})\ket{\Psi_j}.
    \label{eq:stage-catalyst}
\end{equation}

The global catalyst map associates the local catalyst at step \(j\) with the time label \(j\)
\begin{equation}
    \Gamma\ket\psi := \sum_{j=0}^{r-1}\ket j_{\mathrm T} \otimes \ket{x_j}.
    \label{eq:global-catalyst}
\end{equation}

For each \(j=0,\ldots,r-1\) and history prefix \(h\in\{0,1\}^j\) define the embedding isometry \(J_{j, h} : \cH_{\mathrm{P}, \mathrm{loc}} \oplus \cH_{\mathrm{C}, \mathrm{loc}} \rightarrow \cH_{\mathrm{P}} \oplus \cH_{\mathrm{C}}\) by
\begin{align}
    &\quad J_{j, h} \left[(\ket {\psi_0}_{\mathrm{S}^{(0)}} \oplus \ket {\psi_1}_{\mathrm{S}^{(1)}})_{\mathrm{P, loc}} \oplus \ket {\phi}_{\mathrm{C, loc}} \right]  \\
    &= \nonumber 
     \bigl( \ket{h00^{r-j-1}}_{\mathrm{E}} \ket{\psi_0} + \ket {h10^{r-j-1}}_{\mathrm{E}}\ket{\psi_1}\bigr)_{\cH_{\mathrm{P}}} \nonumber
     \oplus \bigl(\ket j_{\mathrm{T}} \ket {h0^{r-j}}_{\mathrm{E}} \ket  \phi \bigr)_{\cH_{\mathrm{C}}}.\nonumber
\end{align}

Let \(R_j\) be the query-free \(j^{\mathrm{th}}\) stage unitary.
\begin{equation}
    R_j \coloneqq \sum_{h\in \{0, 1\}^j} J_{j, h} R J_{j, h}^{\dagger } + (I - \sum_{h\in \{0, 1\}^j} J_{j, h}J_{j, h}^{\dagger}).
    \label{eq:R_j}
\end{equation}

For each history \(h\in\{0, 1\}^{j}\), \(R_j\) acts as \(R\) where public no-jump is identified with \(\ket{h00^{r-j-1}}_\mathrm{E}\otimes\cH_{\mathrm{S}}\), public jump is identified with \(\ket{h10^{r-j-1}}_\mathrm{E}\otimes\cH_{\mathrm{S}}\) and the local private space is identified with \(\ket j_{\mathrm {T}}\otimes\ket{h0^{r-j}}_\mathrm{E}\otimes\cH_{\mathrm{C},\mathrm{loc}}\). It records its public no-jump or jump output in the fresh history register \(\mathrm{E}_j\). 

In chronological evolution, at the start of stage \(j\), every public jump-input sector is empty, and the private component with time label \(j\) is \(\ket j_{\mathrm T}\otimes\ket{x_j}\), where \(\ket{x_j}\) includes the corresponding history strings in the history register.

Let \(O_t\) apply \(\widetilde{O}_L\) on \(\cH_{\mathrm{C}, \mathrm{loc}}\) controlled on the time register being \(t\) and act as the identity on the public and all other private branches. 
The chronological circuit is
\[
    R_{r-1}O_{r-1}R_{r-2}O_{r-2} \cdots R_{0}O_{0}.
\]
Since \(R_j\) acts as the identity on every private sector with time label \(i \neq j\), the oracle block associated with \(O_i\) commutes with \(R_j\) whenever \(i\neq j\). Thus, the oracle blocks can be collected into the single query \(O\)
\begin{equation}
    O \coloneqq \sum_{j=0}^{r-1}\ket j\bra j_{\mathrm{T}}\otimes I_\mathrm{E}\otimes\widetilde{O}_L
    \label{eq:global-oracle}
\end{equation}
and
\begin{align}
    S^\circ &\coloneqq  R_{r-1}R_{r-2} \cdots R_{0}\nonumber\\
    S & \coloneqq S^\circ(I_{\mathrm{P}}\oplus O).
    \label{eq:s-global-transducer}
\end{align}

Induction using Proposition~\ref{prop:local-transducer} gives, on the blank-history input subspace,
\begin{equation}
    S\bigl(V_{\mathrm{in}}\ket\psi \oplus \Gamma\ket\psi\bigr) 
    = V_{r}\ket\psi \oplus \Gamma\ket\psi,
    \label{eq:global-transducer}
\end{equation} 
where
\[
    V_{r}\ket\psi \coloneqq \ket{\Psi_r}.
\] 

\(V_{r}\) is the Stinespring isometry for the \(r\)-step channel:
\begin{equation}
    \Tr_{\mathrm{E}}(V_{r}\rho V_{r}^\dagger)
    =\sum_{h\in\{0,1\}^r}K_h \rho K_h^\dagger
    =\Phi_\delta^r(\rho).
    \label{eq:environment-trace}
\end{equation} 

Since the intermediate public evolution is isometric, orthogonality of the time labels gives
\[
    \|\Gamma\|^2
    \leq r\|\Gamma_{\mathrm{loc}}\|^2
    \leq 2r\delta.
\]

All subsequent reuse identities are restricted to the blank-history input subspace.

\subsection{ Block Structure of the Valid Private-to-Private Map of the Global Transducer}
The standard reuse bound has an unfavorable \(1/\epsilon^2\) scaling with error. To get additive scaling, we aim to use the LCU reuse construction of \cite{CGWZ26}, which can produce cancellation between errors from different reuse lengths. The error in the LCU reuse construction as defined in \cref{eq:reuse-error-poly} has 3 factors. Since \(S_{\mathrm{P} \leftarrow \mathrm{C}}\) and \(\Gamma\) are fixed by our transducer construction, we aim to control the norm of the middle term \(p_{\mathrm{err}}\) evaluated on \(S_{\mathrm{C} \leftarrow \mathrm{C}}\). To get this bound, we analyze the structure of \(S_{\mathrm{C} \leftarrow \mathrm{C}}\) by performing a block decomposition of \(S\) based on the time register.

Write the block decomposition of \(S^\circ\) as defined in \cref{eq:s-global-transducer} as
\begin{equation}
       S^\circ=\begin{pmatrix}
        S^{\circ}_{\mathrm{P} \leftarrow \mathrm{P}} & S^{\circ}_{\mathrm{P} \leftarrow \mathrm{C}} \\
        S^{\circ}_{\mathrm{C}\leftarrow \mathrm{P}} & S^{\circ}_{\mathrm{C} \leftarrow \mathrm{C}}
    \end{pmatrix}. 
\end{equation}

\(S^{\circ}\) sequentially applies the rotations \(R_0, R_1, ..., R_{r-1}\). Let the \(j^{\mathrm{th}}\) portion of the private state be the portion with time label \(j\). Each rotation \(R_j\) acts on part of the public state and the \(j^{\mathrm{th}}\) portion of the private state. The \(j^{\mathrm{th}}\) portion of the private state is unchanged in all subsequent rotations.

By the definition of \(R\) in \cref{eq:R}, the output in the \(j^{\mathrm{th}}\) portion of the private state is determined by the public input to \(R_j\) and the private input. Because we are trying to bound \(S_{\mathrm{C} \leftarrow \mathrm{C}}\) we need only consider the portion of the public state at the \(j^{\mathrm{th}}\) stage determined by the private state. Thus we track paths that originate in the private space, propagate through the public space and return to the private space at a later time. 

In the rotation defined in ~\cref{eq:R}, there is a norm-one public-to-private path from the jump input. In chronological evolution the fresh history register is \(\ket 0\) so the public jump input is \(0\) preventing any amplitude from taking this path. We restrict our analysis to the valid history blocks consistent with chronological evolution which prevent these paths.

To describe the valid subspace and track the valid private-to-public-to-private paths we define for \(0\leq \ell \leq r-1\) and \(h\in\{0,1\}^\ell\), the valid history blocks whose direct sum contains the range of the catalyst
\begin{equation}
    \cV_{\ell,h} 
    \coloneqq \ket \ell_\mathrm{T}\ket{h0^{r-\ell}}_{\mathrm{E}}\otimes\cH_{\mathrm{C}, \mathrm{loc}},
    \qquad
    \cV \coloneqq \bigoplus_{\ell=0}^{r-1}\ \bigoplus_{h\in\{0,1\}^\ell}\cV_{\ell,h}.
    \label{eq:valid-private-space} 
\end{equation}
In \(\cV\), all environment registers \(\mathrm{E}_{\geq\ell}\) are fixed to zero at stage \(\ell\).

The history register \(\mathrm{E}_\ell\) is \(\ket 0\) when stage \(\ell\) begins. For \(k>\ell\), define the history isometries, \(H^{(0)}_{k\ell}, H^{(1)}_{k\ell}\)  on the environment registers \(\mathrm{E}_{\leq \ell}\) by
\begin{align}
    H^{(0)}_{k\ell}\ket {h0^{r-\ell}}_{\mathrm{E}}&:=\ket{h00^{r-\ell-1}}_{\mathrm{E}},\\
    H^{(1)}_{k\ell}\ket {h0^{r-\ell}}_{\mathrm{E}}&:=\ket{h1 0^{r-\ell-1}}_{\mathrm{E}},
\end{align}
which record a no-jump or jump event at time \(\ell\) followed by no-jump events for the remaining history. 

Let
\begin{equation}
    T^{(0)}_{k\ell}:=H^{(0)}_{k\ell}\otimes V_0V_0^{\dagger},
    \qquad
    T^{(1)}_{k\ell}:=H^{(1)}_{k\ell}\otimes V_0V_1^\dagger,
    \label{eq:history-transition-isometries}
\end{equation} 
where \(V_0, V_1\) are defined in \cref{eq:V-local}. \(\mathrm{E}_{>\ell}\) is left in its all 0 state. The \(V_0V_0^{\dagger}\), \(V_0V_1^\dagger\) factors come from the two types of paths from private to public to private.

We consider the block decomposition of \(S_{\mathrm{C} \leftarrow \mathrm{C}}\) based on the value of the time label. For an operator \(A\) on \(\cV\), let
\[
    [A]_{k\ell}:=(\bra k_{\mathrm{T}}\otimes I)A(\ket \ell_{\mathrm{T}}\otimes I)
\] select the block with row time label \(k\) and column time label \(\ell\). 

We split \(S_{\mathrm{C}\leftarrow \mathrm{C}}\) into an unperturbed operator and a correction involving \(\Delta_{\mathrm{R}}, \Delta_{\mathrm{J}}\). The unperturbed operator is annihilated by a polynomial we introduce below. All remaining terms contain at least one of these correction terms.

\(\Delta_{\mathrm{R}}\) contains a diagonal correction to the direct private-to-private transition as well as the paths that depart the private space from the no-jump output. \(\Delta_{\mathrm{J}}\) contains the paths that depart the private space from the jump output. We define these operators through their time-indexed blocks as follows.
\begin{align}
    [{\Delta_{\mathrm{R}}]}_{k\ell} &= \begin{cases}
        0 & \text{if } k < \ell \\ 
        (1-\sqrt{1-\delta})V_0V_0^{\dagger} & \text{if } k = \ell \\
        \delta(\sqrt{1-\delta})^{k-\ell-1}T_{k\ell}^{(0)} & \text{if } k > \ell
    \end{cases}   \label{eq:DeltaR} \\
    {[\Delta_{\mathrm{J}}]}_{k\ell} &= \begin{cases}
        0 & \text{if } k < \ell \\ 
        0 & \text{if } k = \ell \\
        \sqrt{\delta}(\sqrt{1-\delta})^{k-\ell-1}T_{k\ell}^{(1)} & \text{if } k > \ell
    \end{cases} \label{eq:DeltaJ}
\end{align}

Let
\begin{equation}
    \Delta \coloneqq \Delta_{\mathrm{R}}+\Delta_{\mathrm{J}},
    \qquad
    P \coloneqq \bigoplus_{\ell,h}V_1V_1^{\dagger}
    \quad\text{on }\cV.
    \label{eq:Delta-and-P}
\end{equation}

\begin{lemma}[Private Block]
\label{lem:private-block}
The valid private space \(\cV\) is invariant under \(S^{\circ}_{\mathrm{C} \leftarrow \mathrm{C}}\), \(O\), and thus \(S^{\circ}_{\mathrm{C} \leftarrow \mathrm{C}} O\). On the invariant valid private space \(\cV\),
\begin{align}
    S_{\mathrm{C} \leftarrow \mathrm{C}} &= S^{\circ}_{\mathrm{C} \leftarrow \mathrm{C}}O  \\
                      &= i(I-P-\Delta)O \\
                      &= i(I-P)O -i\Delta O.
    \label{eq:S-circ-parts}
\end{align}
\end{lemma}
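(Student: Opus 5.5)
The plan is to compute $S^{\circ}_{\mathrm{C}\leftarrow\mathrm{C}}$ block by block in the time label, restricted to $\cV$. First, $O$ leaves $\cV$ invariant because it is block-diagonal in the time label, acts trivially on $\mathrm{E}$, and acts only on $\cH_{\mathrm{C},\mathrm{loc}}$; each $\cV_{\ell,h}$ is therefore mapped into itself. Since $O$ acts only on the private space, $S_{\mathrm{C}\leftarrow\mathrm{C}}=S^{\circ}_{\mathrm{C}\leftarrow\mathrm{C}}O$ follows directly from $S=S^\circ(I_{\mathrm P}\oplus O)$. It then suffices to compute $S^\circ_{\mathrm{C}\leftarrow\mathrm{C}}$ on an input in a single block $\cV_{\ell,h}$ and check that the result is $i(I-P-\Delta)$ applied to that input. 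In particular, the result will lie in $\cV$, which gives invariance.

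Fix $\ket\phi\in\cH_{\mathrm{C},\mathrm{loc}}$ and take the input $\ket\ell_{\mathrm T}\ket{h0^{r-\ell}}\ket\phi\oplus 0_{\mathrm P}$, then follow it through $R_0,\dots,R_{r-1}$.
\begin{enumerate}
\item The stages $R_j$ with $j<\ell$ touch only private sectors with time label $j$ and public sectors whose $\mathrm{E}_j$ entry is being written. They act as the identity on this input, because the public part is zero.
\item At stage $\ell$, the rotation $R$ from \cref{eq:R}, read through $J_{\ell,h}$, does three things.
 \begin{enumerate}
 \item It leaves the diagonal private output $i(\sqrt{1-\delta}V_0V_0^\dagger+I-V_0V_0^\dagger-V_1V_1^\dagger)\phi$. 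This equals $i(I-P-(1-\sqrt{1-\delta})V_0V_0^\dagger)\phi$, which matches the $k=\ell$ block of $i(I-P-\Delta)$.
 \item It sends $-i\sqrt\delta V_0^\dagger\phi$ to the public no-jump sector $\ket{h0\,0^{r-\ell-1}}$.
 \item It sends $-iV_1^\dagger\phi$ to the public jump sector $\ket{h1\,0^{r-\ell-1}}$.
 \end{enumerate}
\item At each later stage $k>\ell$, the public amplitude sits in a sector whose fresh register $\mathrm{E}_k$ is $0$. That sector is the no-jump input of $R_k$. Reading off the first column of $R$:
 \begin{enumerate}
 \item a fraction $\sqrt\delta V_0$ of it is deposited into the private sector $\ket k_{\mathrm T}\ket{h b0^{r-k}}$, where $b\in\{0,1\}$ is the outcome at stage $\ell$;
 \item a fraction $\sqrt{1-\delta}$ continues in the public space with $\mathrm E_k=0$;
 \item nothing enters a jump output, since the second column of $R$ only receives from the jump input, which is empty.
 \end{enumerate}
\end{enumerate}
By induction on $k$, the private amplitude deposited at stage $k$ is as follows:
\begin{itemize}
\item for the no-jump path, $\sqrt\delta(\sqrt{1-\delta})^{k-\ell-1}\cdot(-i\sqrt\delta)\,V_0V_0^\dagger\phi=-i\,\delta(\sqrt{1-\delta})^{k-\ell-1}T^{(0)}_{k\ell}\phi$;
\item for the jump path, $-i\sqrt\delta(\sqrt{1-\delta})^{k-\ell-1}T^{(1)}_{k\ell}\phi$.
\end{itemize}
These are exactly $-i[\Delta_{\mathrm R}]_{k\ell}$ and $-i[\Delta_{\mathrm J}]_{k\ell}$. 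Each deposit lands in the valid block $\cV_{k,h b0^{k-\ell-1}}$, because $\mathrm E_{\ge k}$ is still zero at that point. Blocks with $k<\ell$ vanish because stages $j<\ell$ act before the input has entered the public space. The amplitude still public after stage $r-1$ contributes only to $S^\circ_{\mathrm{P}\leftarrow\mathrm{C}}$, so it is discarded.

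Summing over $k$ yields $S^\circ_{\mathrm{C}\leftarrow\mathrm{C}}=i(I-P-\Delta)$ on $\cV$, together with invariance of $\cV$. Invariance under the product with $O$ then follows, and distributing $O$ gives the last line of the statement. The main difficulty is the bookkeeping in step 3. One has to verify that each public sector reached at stage $k$ really is a no-jump input sector $\ket{h'0\,0^{\cdots}}$ for some prefix $h'\in\{0,1\}^k$, so that it lies in the range of $J_{k,h'}$ and the norm-one jump-input transition of $R$ never fires. One also has to check that stage-$k$ private outputs with other time labels are untouched. Both rest on the fact that $\mathrm{E}_{\ge \ell+1}$ stays zero along every path starting from $\cV$, which I would check directly from the definition of $J_{j,h}$.
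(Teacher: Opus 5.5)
Your proposal is correct and follows essentially the same route as the paper. You read the diagonal block off the private-to-private entry of $R$. You get the $k>\ell$ blocks from the two private$\to$public$\to$private paths (departure $-i\sqrt\delta V_0^\dagger$ or $-iV_1^\dagger$, propagation $(\sqrt{1-\delta})^{k-\ell-1}$, return $\sqrt\delta V_0$), and you use chronological ordering, together with the fact that only $R_j$ writes $\mathrm{E}_j$, for the vanishing $k<\ell$ blocks and the invariance of $\cV$. Your bookkeeping is more explicit than the paper's, with one small notational slip: the stage-$k$ deposit sits in $\ket{k}_{\mathrm T}\ket{hb0^{r-\ell-1}}_{\mathrm E}$, not $\ket{hb0^{r-k}}$, as you state correctly later via $\cV_{k,hb0^{k-\ell-1}}$.
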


\begin{proof}
On the diagonal portion, the private-to-private block of \cref{eq:R} is
\[
    i(\sqrt{1-\delta}V_0V_0^{\dagger}+\bigl(I-V_0V_0^{\dagger}-V_1V_1^{\dagger} \bigr))=i\bigl(I-V_1V_1^{\dagger}-(1-\sqrt{1-\delta})V_0V_0^{\dagger}\bigr).
\]

Because we apply the local rotations in chronological order and local rotations only act on the private state with the corresponding time label, all blocks with \(k < \ell\) vanish. 

For \(k>\ell\), there are exactly two private \(\rightarrow\) public \(\rightarrow\) private paths.  
\begin{enumerate}
    \item private \(\rightarrow\) public through no-jump \(\rightarrow\)private
    
    From ~\cref{eq:R}, the no-jump departure contributes
    \(-i\sqrt\delta V_0^\dagger\). The \(k-\ell-1\) public-to-public blocks each contribute \(\sqrt{1-\delta}I\) and the return contributes \(\sqrt\delta V_0\).

    Thus the path is \(-i\delta \sqrt{1-\delta}^{k-\ell-1}T^{(0)}_{k\ell}\).
    \item private \(\rightarrow\) public through jump \(\rightarrow\) private

    From ~\cref{eq:R}, the jump departure contributes \(-iV_1^\dagger\). The rest of the path is the same as above.

    Thus the path is
    \(-i\sqrt\delta(\sqrt{1-\delta})^{k-\ell-1}T^{(1)}_{k\ell}\).
\end{enumerate}

These off-diagonal blocks equal \(-i[\Delta]_{k\ell}\), proving
\cref{eq:S-circ-parts}. The only operation that sets the value of a register \(E_j\) to a nonzero value is the rotation \(R_j\). Thus, these chronological updates preserve \(\cV\).
\end{proof}

\subsection{Factorial Decay}
\label{sec:factorial_decay}
To bound the norm of \(p_{\mathrm{err}}(S_{\mathrm{C}\leftarrow \mathrm{C}})\) for some polynomial \(p_{\mathrm{err}}\) 
to be chosen later, we first bound the norm of \(i\Delta O\). 

A \emph{history-preserving contraction} is an operator \(D\) on \(\cV\) that preserves every time-history block \(\cV_{\ell,h}\) and satisfies \(\norm{D}\leq1\). Equivalently, for every time \(\ell\), it preserves the time and environment registers. Note that a \emph{history-preserving contraction} can be time or history dependent. Products of \(P\), \(O\), and their time-controlled versions have this property.

We first bound the norm of algebraic words with a fixed number of \(\Delta_{\mathrm{R}}\) and \(\Delta_{\mathrm{J}}\) factors.
\begin{lemma}[Algebraic word norm bound]
\label{lem:alg-word}
Let \(\sigma=(\sigma_1,\ldots,\sigma_d)\in\{\mathrm{R}, \mathrm{J}\}^d\), let \(D_0,\ldots,D_d\) be fixed history-preserving contractions, and define
\begin{equation}
    W_\sigma \coloneqq D_d\Delta_{\sigma_d}D_{d-1}\cdots D_1\Delta_{\sigma_1}D_0.
    \label{eq:fixed-typed-word}
\end{equation}
If \(\sigma\) contains \(u\) letters \(\mathrm{R}\) and \(v\) letters \(\mathrm{J}\), then
\begin{equation}
    \norm{W_\sigma}
    \leq
    \delta^{u+v/2}\binom{r+u-1}{u}\binom{r}{v}^{1/2}
    \leq
    \delta^u\binom{r+u}{u}
    \left(\delta^v\binom{r}{v}\right)^{1/2}.
    \label{eq:mixed-word-binomial}
\end{equation}
If \(v>r\), the word and the right-hand side are zero.
\end{lemma}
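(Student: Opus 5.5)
The plan is to expand $W_\sigma$ along time labels and bound it by an explicit sum over increasing chains of times, using orthogonality of histories to control the $\mathrm{J}$ factors.

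\textbf{Expansion into chains.} Every $D_i$ is history preserving, so it maps $\cV_{\ell,h}$ into itself. Each $\Delta_{\sigma_i}$ is lower triangular in time. An $\mathrm{R}$ letter either stays at time $\ell$ or moves to some $k>\ell$. A $\mathrm{J}$ letter always moves to some $k>\ell$. Consequently, on an input block at time $\ell_0$, the word decomposes as
\[
W_\sigma=\sum_{\ell_0\le \ell_1\le\cdots\le\ell_d}(\text{product of blocks}),
\]
where the inequality $\ell_{i-1}\le\ell_i$ is strict whenever $\sigma_i=\mathrm{J}$. The $\mathrm{J}$ transitions therefore occur at $v$ distinct departure times, so the word vanishes when $v>r$. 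Each block has a scalar coefficient: $\delta(1-\delta)^{(k-\ell-1)/2}$ or $1-\sqrt{1-\delta}\le\delta$ for $\mathrm{R}$, and $\sqrt{\delta}(1-\delta)^{(k-\ell-1)/2}$ for $\mathrm{J}$. Each block also carries an operator part $T^{(0)}_{k\ell}$, $T^{(1)}_{k\ell}$ or $V_0V_0^\dagger$, all of norm at most one.

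\textbf{The $\mathrm{R}$ letters.} I will bound these by the triangle inequality with all factors $\sqrt{1-\delta}\le1$. Each $\mathrm{R}$ step then contributes at most $\delta$ and only needs $\ell_{i-1}\le\ell_i$. For the $u$ weakly increasing $\mathrm{R}$ times in $\{0,\dots,r-1\}$, the number of choices is at most $\binom{r+u-1}{u}$ multisets, which gives $\delta^u\binom{r+u-1}{u}$.

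\textbf{The $\mathrm{J}$ letters (main obstacle).} A plain triangle inequality would give $\binom{r}{v}\delta^{v/2}$. The needed factor is only $\binom{r}{v}^{1/2}$, so orthogonality is essential. The departure times $m_1<\cdots<m_v$ of the $\mathrm{J}$ transitions are recorded in the history register: $T^{(1)}$ sets $\mathrm{E}_{m}=1$, while every other transition writes $0$ into the fresh registers. All later operators are history preserving or only write fresh registers. Hence terms with different sets $\{m_1,\dots,m_v\}$ have mutually orthogonal ranges, since their environment strings differ.

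The argument is to fix the $\mathrm{R}$ times and the $\mathrm{J}$ departure set, then sum over the $\mathrm{J}$ departure sets. For any fixed input vector $x$, orthogonality gives
\[
\norm{\sum_m X_m x}^2=\sum_m\norm{X_m x}^2\le \binom{r}{v}\max_m\norm{X_m}^2 ,
\]
and this yields the factor $\binom{r}{v}^{1/2}$. The tricky point is that orthogonality must hold uniformly after interleaving with the $\mathrm{R}$ sums. To get it, I will first group the expansion by the $\mathrm{J}$ departure set $M$, and bound each $\norm{X_M}$ by $\delta^{v/2}\cdot\delta^u\binom{r+u-1}{u}$ using the $\mathrm{R}$ counting above. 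The arrival times of the $\mathrm{J}$ letters are determined by the next departure time in the chain, so they introduce no extra free sum beyond the monotone chain already counted.

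I also need to check that the operator parts are consistent with distinct environment strings, in particular that $V_0V_1^\dagger$ composed with the $D$'s does not destroy the history labels. This holds because the $D$'s act only on $\cH_{\mathrm{C,loc}}$, up to history-dependent control.

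\textbf{Final inequality.} The second inequality in the statement is elementary: $\binom{r+u-1}{u}\le\binom{r+u}{u}$, and $\delta^{u+v/2}\binom{r}{v}^{1/2}=\delta^u(\delta^v\binom{r}{v})^{1/2}$.
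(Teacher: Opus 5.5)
Your grouping by the set $M$ of $\mathrm J$ departure times is fine, and so is the orthogonality of the ranges of the $X_M$. For inputs spread over several blocks, the reason is that $v$ is fixed, so $M$ is recovered from any output block as the positions of the last $v$ ones in its history.

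The step that fails is $\norm{X_M}\le\delta^{u+v/2}\binom{r+u-1}{u}$. A chain $\ell_0\le\cdots\le\ell_d$ has $d+1$ labels. Fixing $M$ pins only $v$ of them, leaving $u+1$ free, but your multiset count covers only $u$. In particular, the terminal label $\ell_d$ is not the departure of any later letter, so the arrivals do introduce an extra free sum.

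This is not a counting artifact. Take $u=0$, $v=1$, $D_0=D_1=I$, $M=\{0\}$, and $x=\ket{0}_{\mathrm T}\ket{0^r}_{\mathrm E}\otimes V_1\phi$ with $\norm{\phi}=1$. Then
\[
    X_{\{0\}}x=\sum_{k=1}^{r-1}\sqrt{\delta}\,(1-\delta)^{(k-1)/2}\ket{k}_{\mathrm T}\ket{10^{r-1}}_{\mathrm E}\otimes V_0\phi,
    \qquad
    \norm{X_{\{0\}}x}^2=1-(1-\delta)^{r-1}.
\]
For $\delta=\tau/r$ this is of order $\min\{1,\tau\}$, far above $\delta$.

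Any correct bound on $\norm{X_M}$ must absorb the freedom in the $\mathrm J$ arrival times, and that freedom is exactly where a factor $\binom{r}{v}$ lives. Multiplying such a bound by your $\binom{r}{v}^{1/2}$ from orthogonality pays for the $\mathrm J$ letters twice. The result carries an $r$-dependent loss, which would also destroy the $r$-independent estimate needed in the factorial-decay proposition, where $r$ may be $\Theta(\tau^2/\epsilon)$.

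The missing idea is the asymmetric row/column bookkeeping of the block Schur test
$\norm{A}\le\bigl(\max_\ell\sum_k\norm{[A]_{k\ell}}\cdot\max_k\sum_\ell\norm{[A]_{k\ell}}\bigr)^{1/2}$
over the time--history blocks $\cV_{\ell,h}$, which is how the paper argues. Each path through intermediate blocks contributes at most $\delta^{u+v/2}$ by submultiplicativity.

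\textbf{Rows.} Fixing an output block pins $\ell_d$ and, by your own orthogonality observation, all $\mathrm J$ departures. Only the $u$ $\mathrm R$ departures are free, so every row sum is at most $\delta^{u+v/2}\binom{r+u-1}{u}$.

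\textbf{Columns.} Fixing an input block pins only $\ell_0$, and the path is then determined by its arrival labels. There are at most $\binom{r}{v}$ choices for the strictly increasing $\mathrm J$ arrivals and $\binom{r+u-1}{u}$ for the $\mathrm R$ arrivals, so every column sum is at most $\delta^{u+v/2}\binom{r+u-1}{u}\binom{r}{v}$.

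The geometric mean of these is the stated bound. The factor $\binom{r}{v}^{1/2}$ arises from pinned departures on the row side against free arrivals on the column side, not from orthogonality across departure sets. Your treatment of the case $v>r$ and of the second inequality is fine.
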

\begin{proof}
Each \(\Delta_{\mathrm{J}}\) transition strictly increases the time label so there can be no paths when \(v > r\).

We use the operator-valued block Schur test obtained by applying the scalar Schur test \cite[Sec.~3, Eq.~(3.8)]{DK03} to the matrix of block norms. Specifically, for an operator \(A\) with blocks \([A]_{k\ell}\) relative to some orthogonal block decomposition, 
\[
    \norm{A} \leq \sqrt{\left(\max_\ell\sum_k\norm{[A]_{k\ell}}\right)
    \left(\max_k\sum_\ell\norm{[A]_{k\ell}}\right)}.
\] 

We consider the block decomposition of \(W_{\sigma}\) based on the value of the time and environment registers. 

A \(\Delta_{\mathrm{R}}\)-transition may keep its time label and appends only zeros when it advances time. A \(\Delta_{\mathrm{J}}\)-transition strictly advances time and updates the history by \(h0^{k+1}\mapsto h10^{k}\). The intervening \(D_i\)'s change neither the time nor the history. Thus a nonzero block path through \(W_\sigma\) has nondecreasing time labels
\[
    0\leq t_0\leq t_1\leq\cdots\leq t_d\leq r-1.
\]

Since \(1-\sqrt{1-\delta}\leq\delta\), every \(\Delta_{\mathrm{R}}\) block has norm at most \(\delta\), while every \(\Delta_{\mathrm{J}}\) block has norm at most \(\sqrt\delta\). Each block of \(W_\sigma\) is a sum over paths through intermediate time–history blocks. By submultiplicativity, each path contributes an operator of norm at most \(\delta^{u+v/2}\). The triangle inequality therefore bounds each block norm by \(\delta^{u+v/2}\) times the number of paths contributing to that block. It remains to count the total number of paths contributing to each block row and each block column. For an output prefix, the jump departure times are fixed whereas an input prefix does not specify the subsequent jumps. This produces different bounds for the maximum row and column sums of block norms.  

Fix an output time \(k\) and an output history prefix \(h\in\{0,1\}^k\); equivalently, fix the block row associated with \(\cV_{k,h}\). Since only \(\Delta_{\mathrm J}\) transitions add jumps and time never decreases, the departure labels of the \(v\) \(\Delta_{\mathrm{J}}\) transitions are the positions of the last \(v\) ones in its history, so they are fixed. The departure labels of the \(u\) \(\Delta_{\mathrm{R}}\) transitions form a multiset of size \(u\) drawn from \(r\) labels, giving at most \(\binom{r+u-1}{u}\) choices. With the fixed type word, these departure labels determine the complete time sequence and path. 

Hence the maximum block-row sum is at most
\begin{equation}
    \delta^{u+v/2}\binom{r+u-1}{u}.
    \label{eq:mixed-row-sum}
\end{equation}

Fix an input time \(\ell\) and an input history prefix \(h\in\{0,1\}^\ell\); equivalently, fix the block column associated with \(\cV_{\ell,h}\). The arrival labels of the \(v\) \(\Delta_{\mathrm{J}}\) transitions are strictly ordered and have at most \(\binom{r}{v}\) choices. The \(u\) regular-arrival labels form a multiset with at most \(\binom{r+u-1}{u}\) choices. Together with the fixed type word and input block, these determine the output block and the complete path, so the maximum block-column sum is at most
\begin{equation}
    \delta^{u+v/2}\binom{r+u-1}{u}\binom{r}{v}.
    \label{eq:mixed-column-sum}
\end{equation}
The operator-valued block Schur test gives
\[
    \|W_\sigma\|\leq\sqrt{\delta^{u+v/2}\binom{r+u-1}{u} \delta^{u+v/2}\binom{r+u-1}{u}\binom{r}{v}} = \delta^{u+v/2}\binom{r+u-1}{u}\binom{r}{v}^{1/2}.
\]
\end{proof}

We now define a polynomial \(p_{\mathrm{annih}}(\cdot)\) and bound the norm of \(p_{\mathrm{annih}}(\cdot)\) evaluated on \(S_{\mathrm{C}\leftarrow \mathrm{C}}\). We use this polynomial in the construction of \(p_{\mathrm{err}}(\cdot)\).

Define
\begin{equation}
    \qquad p_{\mathrm{annih}}(A):=\frac{A^2+A^4}{2}.
    \label{eq:k-polynomial}
\end{equation}

\begin{lemma}[Error Cancellation Polynomial]
\label{lem:k-annihilation}
On \(\cV\),
\begin{equation}
    (i(I-P)O)^2=-I+P+OPO,
    \qquad
    p_{\mathrm{annih}}(i(I-P)O)=0.
    \label{eq:k-annihilates-part1}
\end{equation}
Additionally, \(\norm{ p_{\mathrm{annih}}(B)}\leq1\) for every contraction \(B\).
\end{lemma}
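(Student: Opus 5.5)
The plan is to work with the two structural facts $O^2=I$ and $P^2=P$, and then use the functional calculus for the norm bound.

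First, $O^2=I$ holds because $O$ is a time-controlled copy of $\widetilde{O}_L$ (\cref{eq:global-oracle}), and $\widetilde{O}_L^2=I$ by \cref{eq:lift-compression}. Second, $P$ is an orthogonal projection on $\cV$ that is history-preserving, since it acts blockwise as $V_1V_1^\dagger$. Both $P$ and $O$ act on each block $\cV_{\ell,h}$ separately, so $\cV$ is invariant under both, as already noted in \cref{lem:private-block}.

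For the squared identity, expand directly:
\[
(i(I-P)O)^2=-(I-P)O(I-P)O=-\bigl(O-PO\bigr)\bigl(I-P\bigr)O .
\]
Multiplying out and using $O^2=I$ gives
\[
-(O-PO)(O-PO) = -\bigl(I - OPO - P + POPO\bigr),
\]
because $O\,O=I$, $O\,PO=OPO$, $PO\,O=P$ and $PO\,PO=POPO$. So the identity $(i(I-P)O)^2=-I+P+OPO$ reduces to showing $POPO=0$ on $\cV$, that is, $V_1^\dagger\widetilde{O}_LV_1=0$. This follows from the compression identity in \cref{eq:lift-compression}. We have $V_{\mathrm{enc}}^\dagger\widetilde{O}_LV_{\mathrm{enc}}=\widetilde L/\alpha$, whose $(1,1)$ block is zero because $\widetilde L$ is off-diagonal. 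Equivalently, $\widetilde{O}_L$ swaps the $\ket 0_A$ and $\ket 1_A$ sectors, so it maps $\ran V_1$ into the $\ket 0_A$ sector, which is orthogonal to $\ran V_1$. Hence $POP=0$, and therefore $POPO=0$.

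Now set $Q\coloneqq OPO$. Then $Q$ is a projection because $O^2=I$, and $PQ=POPO=0$ and $QP=OPOP=0$ by the same fact. Hence $M\coloneqq P+Q$ is a projection and $X\coloneqq(i(I-P)O)^2=-(I-M)$. It follows that $X^2=I-M=-X$. Therefore
\[
p_{\mathrm{annih}}(i(I-P)O)=\tfrac12(X+X^2)=\tfrac12(X-X)=0 .
\]

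For the contraction bound $\norm{p_{\mathrm{annih}}(B)}\le 1$, the triangle inequality already gives it: $\norm{B^2+B^4}/2\le(\norm{B}^2+\norm{B}^4)/2\le 1$. No von Neumann inequality is needed.

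The only step needing care is verifying $POP=0$ on the valid space $\cV$, specifically that the time and history labels do not interfere. Since $O$ and $P$ are both block-diagonal in $(\ell,h)$, the claim reduces to the single local identity $V_1^\dagger\widetilde{O}_LV_1=0$, which is immediate from the off-diagonal form \cref{eq:hermitian-oracle}.
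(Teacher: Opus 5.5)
Your proof is correct and follows essentially the same route as the paper. You expand the square using $O^2=I$, remove the $POPO$ term via $V_1^\dagger\widetilde{O}_LV_1=0$ (from the off-diagonal form of the oracle), observe that $P+OPO$ is a projection, and conclude that the square equals $-(I-M)$, so $p_{\mathrm{annih}}$ vanishes; the contraction bound via the triangle inequality is also identical, and writing the cancellation as $X^2=-X$ instead of computing the fourth power $I-Q$ separately is only cosmetic.
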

\begin{proof}
On every time-history block, \(P=V_1V_1^{\dagger}\), and the off-diagonal form of \(O\) gives
\[
    V_1V_1^{\dagger} O V_1V_1^{\dagger} = V_1(V_1^\dagger OV_1)V_1^\dagger=0.
\]
Thus \(POP=0\) on \(\cV\).  Using \(O^2=I\),
\[
    -(I-P)O(I-P)O = -I + P + OPO.
\]
The operator \(Q:=P+OPO\) is a projection because it is the sum of two orthogonal projections. Hence \((i(I-P)O)^2=-I+Q\), \((i(I-P)O)^4=I-Q\), and
\[
    p_{\mathrm{annih}}(i(I-P)O)=\frac{1}{2}((i(I-P)O)^2 + (i(I-P)O)^4)=0.
\]
Finally,
\(\norm{p_{\mathrm{annih}}(B)} \leq \frac{\norm{B^2}+\norm{B^4}}{2} \leq1\).
\end{proof}

Because \(p_{\mathrm{annih}}\) annihilates \((i(I-P)O)\), every surviving term in the noncommutative expansion of \( p_{\mathrm{annih}}(S_{\mathrm{C} \leftarrow \mathrm{C}})^q\) contains at least \(q\) transition factors of the form \(-i(\Delta)O\). After expanding \(\Delta = \Delta_{\mathrm{R}} + \Delta_{\mathrm{J}}\), we can bound these using \cref{lem:alg-word}.

\begin{proposition}[Factorial private-block decay]
\label{prop:factorial-decay}

There are universal constants \(C,C^*>0\) such that, whenever
\(q,r\in\mathbb N\) and \(\delta=\tau/r\) satisfy
\begin{equation}
    q \geq C(1+\tau),
    \qquad
    r \geq 4q,
    \label{eq:decay-conditions}
\end{equation}
the following bound holds on the valid private space:
\begin{equation}
    \norm{p_{\mathrm{annih}}(S_{C\leftarrow C})^q\big|_{\cV}}
    \leq
    \left(C^*\sqrt{\frac{\tau}{q}}\right)^q.
    \label{eq:factorial-decay}
\end{equation}
\end{proposition}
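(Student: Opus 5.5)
Write \(K \coloneqq i(I-P)O\) and \(E \coloneqq -i\Delta O\), so that on \(\cV\) we have \(S_{\mathrm{C}\leftarrow\mathrm{C}} = K + E\) by \cref{lem:private-block}. The plan is to expand \(p_{\mathrm{annih}}(K+E)^q\) as a noncommutative polynomial and sort the terms by how many factors of \(E\) they contain.

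\textbf{Step 1: every surviving term has at least \(q\) factors of \(E\).} Each factor \(p_{\mathrm{annih}}(K+E) = \tfrac12\bigl((K+E)^2+(K+E)^4\bigr)\) splits as
\[
p_{\mathrm{annih}}(K) + \sum_{\text{words with } \geq 1 \text{ letter } E}(\cdots).
\]
By \cref{lem:k-annihilation}, \(p_{\mathrm{annih}}(K)=0\). Hence each of the \(q\) factors contributes at least one \(E\), and at most \(4\) letters in total. So \(p_{\mathrm{annih}}(S_{\mathrm{C}\leftarrow\mathrm{C}})^q\) is a sum of at most \(2^{4q}\) words of length between \(2q\) and \(4q\). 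Every coefficient has modulus \(2^{-q}\), and every word contains some number \(d\) of \(E\)-letters with \(q\leq d\leq 4q\).

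\textbf{Step 2: bound each word.} Expand each \(E=-i(\Delta_{\mathrm R}+\Delta_{\mathrm J})O\). The intervening letters \(K\), \(O\), \(i\) are products of \(P\) and \(O\), so the maximal runs between consecutive \(\Delta\)'s are history-preserving contractions. Each word therefore becomes a sum of \(2^d\) typed words \(W_\sigma\), exactly as in \cref{lem:alg-word}. For a type with \(u\) letters \(\mathrm{R}\) and \(v\) letters \(\mathrm{J}\), \(u+v=d\), \cref{lem:alg-word} gives
\[
\norm{W_\sigma}\leq \delta^u\binom{r+u}{u}\Bigl(\delta^v\binom{r}{v}\Bigr)^{1/2}.
\]
We then insert \(\delta=\tau/r\) and \(d\leq 4q\leq r\):
\[
\delta^u\binom{r+u}{u}\leq \Bigl(\tfrac{\tau}{r}\Bigr)^u\Bigl(\tfrac{e(r+u)}{u}\Bigr)^u\leq\Bigl(\tfrac{2e\tau}{u}\Bigr)^u,
\qquad
\delta^v\binom{r}{v}\leq\Bigl(\tfrac{e\tau}{v}\Bigr)^v,
\]
where the first uses \(u\leq r\). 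The square root turns the second bound into \((e\tau/v)^{v/2}\).

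\textbf{Step 3 (main obstacle): uniform control over the split \(u+v=d\).} We must show \((2e\tau/u)^u (e\tau/v)^{v/2}\leq (C'\sqrt{\tau/d})^d\) for all \(u+v=d\). Write \(x=\tau/d\); the hypothesis \(q\geq C(1+\tau)\) gives \(x\leq 1/C\), which is small. Then:
\begin{itemize}
\item If \(u\geq d/2\), the \(\mathrm{R}\)-factor is at most \((4ex)^u\leq (C''\sqrt{x})^{u}\), using \(x\leq1\).
\item If \(u<d/2\), then \(v\geq d/2\), and the \(\mathrm{J}\)-factor is at most \((2ex)^{v/2}\).
\item In either case the remaining factor is bounded by \(\exp\bigl(O(\tau)\bigr)\)-type quantities. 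We use \((c\tau/u)^u\leq e^{c\tau}\), so this is at most \(e^{O(\tau)}\leq e^{O(d/C)}\), which we absorb into \(C'^d\).
\end{itemize}
This yields \(\norm{W_\sigma}\leq (C'\sqrt{\tau/d})^d\). Since \(d\geq q\) and \(C'\sqrt{\tau/d}<1\) once \(C\) is large, this is at most \((C'\sqrt{\tau/q})^q\). The delicate point is making sure this absorption and the monotonicity in \(d\) hold with constants independent of \(r\). The condition \(r\geq 4q\geq d\) is exactly what makes the binomial estimates \(r\)-free.

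\textbf{Step 4: sum.} The number of terms is at most \(2^{4q}\) word-shapes times \(2^{d}\leq 2^{4q}\) type choices, with coefficients \(2^{-q}\). This gives the total bound \(2^{7q}\bigl(C'\sqrt{\tau/q}\bigr)^q\), and we set \(C^*\coloneqq 2^7C'\).
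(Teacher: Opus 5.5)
Your Steps 1, 2 and 4 follow the paper's route. You use $p_{\mathrm{annih}}(i(I-P)O)=0$ to force at least $q$ transition factors, expand $\Delta=\Delta_{\mathrm R}+\Delta_{\mathrm J}$, apply \cref{lem:alg-word} with $d\le 4q\le r$, use monotonicity in $d$, and absorb an exponential coefficient weight. There is one minor slip: each factor contains $3+15=18$ words with at least one $E$, not at most $16$, so the word count is $18^q$ rather than $2^{4q}$. This only changes $C^*$.

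The genuine gap is Step 3. In each case you keep the $\tau$-decay of only one factor and bound the other by $e^{O(\tau)}\geq 1$, which throws that factor's decay away. When $u<d/2$ you are left with $(2e\tau/d)^{v/2}e^{O(\tau)}$, and $v/2$ can be as small as about $d/4$. For example, with $d=q$ even, $u=q/2-1$ and $v=q/2+1$, you only obtain $(\tau/q)^{q/4+1/2}e^{O(\tau)}$. This is far larger than $(C'\sqrt{\tau/q})^q$ once $\tau/q$ is small, which is exactly the regime $q\gg\tau$ that produces the $\log(1/\epsilon)$ term. Your first bullet has the same defect as written, since $(C''\sqrt{x})^u$ with $u\approx d/2$ is only about $x^{d/4}$. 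That case can be repaired using $x^u\le x^{d/2}$. The second case cannot be repaired while the $\mathrm R$-factor is replaced by $e^{O(\tau)}$.

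The missing idea is that the decay must be collected jointly. Each $\mathrm R$ letter carries a full power of $\tau/d$ and each $\mathrm J$ letter a half power, and only their combination reaches exponent $d/2$. Write
\[
\Bigl(\frac{2e\tau}{u}\Bigr)^{u}\Bigl(\frac{e\tau}{v}\Bigr)^{v/2}
=\Bigl[(2e)^{a}a^{-a}e^{b/2}b^{-b/2}\Bigr]^{d}\Bigl(\frac{\tau}{d}\Bigr)^{u+v/2},
\qquad a=\frac ud,\quad b=\frac vd .
\]
The bracket is at most $4e$ uniformly over $a+b=1$. Since $u+v/2\ge d/2$ and $\tau/d\le 1$, we have $(\tau/d)^{u+v/2}\le(\tau/d)^{d/2}$. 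Hence $\norm{W_\sigma}\le(4e\sqrt{\tau/d})^d$, with no case split and no $e^{O(\tau)}$ absorption. This is exactly the paper's computation. With it, your monotonicity step (taking $C\geq 16e^2$ makes $4e\sqrt{\tau/q}\le 1$) and Step 4 go through.
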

\begin{proof}
By \cref{eq:S-circ-parts} and
\(p_{\mathrm{annih}}\bigl(i(I-P)O\bigr)=0\), on \(\cV\) we have
\begin{equation}
    p_{\mathrm{annih}}(S_{\mathrm{C} \leftarrow \mathrm{C}})
    =\frac{1}{2}\sum_{j\in\{2,4\}}
      \left[
        \bigl(i(I-P)O-i\Delta O\bigr)^j
        -\bigl(i(I-P)O\bigr)^j
      \right].
    \label{eq:K-noncommutative-expansion}
\end{equation}
Expanding noncommutatively with \(-i\Delta O=-i\Delta_{\mathrm{R}}O-i\Delta_{\mathrm{J}}O\), every surviving term contains at least one transition factor. The remaining factors are history-preserving contractions, so each term has the form covered by Lemma~\ref{lem:alg-word}.
\begin{equation}
    \frac{1}{2} \left(
        \sum_{j=1}^{2}\binom{2}{j}2^j+
        \sum_{j=1}^{4}\binom{4}{j}2^j
    \right)
    =\frac{1}{2}\bigl((3^2-1)+(3^4-1)\bigr)=44.
    \label{eq:typed-coefficient-weight}
\end{equation}
Therefore the total absolute coefficient weight in \(p_{\mathrm{annih}}(S_{\mathrm{C} \leftarrow \mathrm{C}})^q\) is at most \(44^q\), and every typed word contains
\begin{equation}
    q\leq d:=u+v\leq4q
    \label{eq:transition-count-range}
\end{equation}
transition factors.

Fix such a word and write \(x:=u/d\) and \(y:=v/d\). Since \(x,y\geq0\) and \(x+y=1\),
\[
    (2e)^x e^{y/2}x^{-x}y^{-y/2} \leq4e.
\]
The inequalities \(d\geq q\geq C\tau\), \(\tau/d\leq1\), \(u \leq d \leq 4q \leq r\), Lemma~\ref{lem:alg-word}, and the standard binomial estimates imply
\[
    \norm{W_\sigma}
    \leq
    \left[
        4e\left(\frac{\tau}{d}\right)^{x+y/2}
    \right]^d
    \leq
    \left(4e\sqrt{\frac{\tau}{d}}\right)^d.
\] 

Choose \(C\geq 16e^2\) in \cref{eq:decay-conditions}. Then \(4e\sqrt{\tau/q}\leq1\), and for \(d\geq q\),
\[
    \left(4e\sqrt{\frac\tau d}\right)^d
    \leq
    \left(4e\sqrt{\frac\tau q}\right)^q.
\]
Absorbing the coefficient weight \(44^q\) gives the claim. For example \(C^*=176e\) suffices.
\end{proof}

\subsection{LCU on reuse blocks}
We now convert these bounds into an implementable algorithm. Multiplication by the moving average polynomial defined in \cref{eq:poly-g} expresses the error polynomial in the reuse basis. We then use one round of oblivious amplitude amplification to implement the simulation channel.

Following the moving-average change of basis in \cite[Sec.~7.1, Eq.~(48)]{CGWZ26}, write
\[
    p_{\mathrm{err}}(A)\coloneqq p_{\mathrm{annih}}(A)^q g_{6q}(A) = \sum_{j=0}^{10q-1} a_jA^j,
    \qquad a_{10q}:=0,
\]
and define
\[
    \lambda_i:=i(a_{i-1}-a_i),
    \qquad 1\leq i\leq10q.
\]
Then
\[
    p_{\mathrm{err}}(A) = \sum_{i=1}^{10q}\lambda_ig_i(A),
    \qquad
    \sum_{i=1}^{10q}\lambda_i =1.
\]

To evaluate the LCU normalization, write
\[
    p_{\mathrm{annih}}(A)^q=\sum_{j=0}^{4q}b_jA^j.
\]
These coefficients are nonnegative, have total mass one, and have mean degree
\[
    \mu \coloneqq \sum_{j=0}^{4q} jb_j=3q.
\]

Set \(b_j=0\) outside \(0\leq j\leq 4q\).  Since multiplication by \(g_{6q}\) averages \(6q \) consecutive coefficients,
\begin{equation}
    a_{i-1} - a_i = \frac{b_{i-6q}-b_i}{6q},
    \qquad
    \lambda_i = \frac{i}{6q}(b_{i-6q}-b_i),
    \label{eq:lambda-coefficients}
\end{equation}
and
\begin{align}
    \sum_{i=1}^{10q}|\lambda_i|
    &=\frac{1}{6q}\sum_{j=1}^{4q}jb_j
    +\frac{1}{6q}\sum_{j=6q}^{10q}jb_{j-6q}\nonumber\\
    &=\frac{\mu}{6q}+\frac{6q+\mu}{6q}
    =1+\frac{2\mu}{6q}=2.
 \label{eq:lambda-l1-norm}
\end{align}
Let
\[
    \widetilde V_{r}
    :=\left(\sum_{i=1}^{10q}\lambda_iP_i\right)V_{\mathrm{in}}.
\]
Combining Lemma~\ref{lemma:lcu-error-reduction} with the private-block and catalyst bounds gives a block encoding of \(\widetilde V_{r}\) with normalization \(2\), satisfying
\begin{equation}
    \norm{V_{r}-\widetilde V_{r}}
    \leq
    \sqrt{2\tau}
    \left(C^*\sqrt{\frac{\tau}{q}}\right)^q.
  \label{eq:removal-error}
\end{equation}

Let \(\eta = \norm{V_{r} - \widetilde{V}_{r}}\). One round of robust oblivious amplitude amplification \cite[Theorem~28]{GSLW19}, using the source subspace \(\ran V_{\mathrm{in}}\) and the public output subspace, with all block-encoding auxiliary registers fixed to zero in both, yields a success block \(B\) satisfying
\begin{equation}
    \norm{B-V_{r}}\leq3\eta.
    \label{eq:amplified-success-error}
\end{equation}

Discarding the purification and work registers and retaining the system output in both the success and failure branches gives
\[
    \Phi_{\mathrm{amp}}(\rho)
    =
    \Tr_\mathrm{E}(B\rho B^\dagger)
    +
    \cF(\rho).
\] where \(\cF\) is the CP map associated with the failure branch. \(\norm{\cF}_{\diamond} = \norm{I-B^{\dagger}B}\).

Let \(\Phi_r(\rho):=\Tr_\mathrm{E}(V_{r}\rho V_{r}^\dagger)\).
\begin{align}
    \norm{\Phi_{\mathrm{amp}}-\Phi_r}_\diamond
    &\leq
    (\norm {B}+\norm {V_{r}}) \norm{B-V_{r}}
    +\norm{I-B^\dagger B} \\
    &\leq4\norm{B-V_{r}}
    \leq12\eta.
    \label{eq:amplified-channel-error}
\end{align}
Each LCU invocation uses at most \(10q\) calls to the one-query transducer or its inverse. The amplification circuit uses at most \(30q\) transducer calls.

We use the following lemma to choose the degree needed for the error bound.
\begin{lemma}[Factorial Degree Requirement]
\label{lem:factorial-inversion}
For every fixed \(B\geq1\), there is a constant \(C_B\) such that every
\[
    q\geq C_B\bigl(c(\tau,\epsilon)\bigr)
\]
satisfies
\begin{equation}
    B\sqrt{\tau}
    \left(B\sqrt{\frac{\tau}{q}}\right)^q
    \leq\epsilon
    \qquad(\tau\geq0, 0<\epsilon\leq1/2).
    \label{eq:factorial-inversion}
\end{equation}
\end{lemma}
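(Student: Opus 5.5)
Since the left-hand side of \cref{eq:factorial-inversion} is decreasing in \(q\) once \(q\geq B^2\tau\), I would prove the lemma for the smallest admissible \(q\) and obtain the rest by monotonicity. Take logarithms: the requirement becomes
\begin{equation*}
    \log(B\sqrt{\tau})+\frac{q}{2}\log\frac{B^2\tau}{q}\leq\log\epsilon ,
\end{equation*}
which is equivalent to
\begin{equation*}
    \frac{q}{2}\log\frac{q}{B^2\tau}\geq \log\frac1\epsilon+\log(B\sqrt\tau).
\end{equation*}
The case \(\tau=0\) is trivial, since then the left-hand side of \cref{eq:factorial-inversion} vanishes. In general, set \(\ell\coloneqq\log(1/\epsilon)\geq\log 2\). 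I would split into two regimes according to the size of \(\ell/\tau\).

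\emph{Case 1: \(\ell\leq\tau\).} Here \(c(\tau,\epsilon)\asymp 1+\tau\). Choose \(q\geq e^2B^2\tau\), so that \(\log(q/(B^2\tau))\geq 2\). The left side is then at least \(q\), and it suffices that \(q\geq \ell+\log B+\tfrac12\log\tau\). Since \(\ell\leq\tau\) and \(\log\tau\leq\tau\), this holds once \(q\geq C_B(1+\tau)\) with a constant depending only on \(B\).

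\emph{Case 2: \(\ell>\tau\).} Now the third term of \(c\) dominates. Write \(L\coloneqq\log(e+\ell/\tau)\) and take \(q=K\ell/L+K'(1+\tau)\). Then
\begin{equation*}
    \frac{q}{B^2\tau}\geq \frac{K\ell}{B^2\tau L}.
\end{equation*}
Since \(\ell/(\tau L)\) grows like \((\ell/\tau)/\log(\ell/\tau)\), we get \(\log(q/(B^2\tau))\geq L-\log L-\log(B^2/K)\geq L/2\) after adjusting constants. This step needs \(\ell/\tau\) large enough for \(\log L\leq L/4\). The intermediate range, where \(\ell/\tau\) is bounded by a constant, can instead be absorbed into Case 1 by enlarging the constant. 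With this estimate the left side is at least \(\tfrac{K\ell}{2L}\cdot\tfrac{L}{2}=K\ell/4\), which exceeds \(\ell+\log B+\tfrac12\log\tau\) for \(K\) large. For the term \(\log\tau\), use \(\log\tau\leq\tau\leq q/K'\).

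\textbf{Main obstacle.} The delicate step is the lower bound \(\log(\ell/(\tau L))\gtrsim L\) in Case 2, that is, showing \(x/\log(e+x)\geq x^{1/2}\) for large \(x=\ell/\tau\), while keeping the constants uniform in \(\tau\). This includes small \(\tau\), where \(\log(B\sqrt\tau)\) is negative and only helps. Finally, monotonicity in \(q\) requires \(q\geq eB^2\tau\), which the choice \(q\geq C_B(1+\tau)\) guarantees. So every \(q\geq C_B\,c(\tau,\epsilon)\) works.
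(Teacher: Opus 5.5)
Your argument is correct, but it takes a different route from the paper. The paper's proof only substitutes \(p=q/2\) to rewrite the quantity as \(B\sqrt{\tau}\,(B^2\tau/(2p))^p\). It then invokes the standard factorial-tail inversion of \cite{CGWZ26} (Sec.~7.3, proof of Thm.~9) without redoing the computation.

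You instead prove the inversion directly from the logarithmic form \(\tfrac q2\log\tfrac{q}{B^2\tau}\geq\log\tfrac1\epsilon+\log(B\sqrt\tau)\), splitting into two regimes:
\begin{itemize}
\item \(\log(1/\epsilon)\leq\tau\): taking \(q\geq e^2B^2\tau\) makes the logarithm at least \(2\).
\item \(\log(1/\epsilon)>\tau\): the bound \(\log\tfrac{q}{B^2\tau}\geq L-\log L-\cO(1)\geq L/2\) turns \(q\approx K\ell/L\) into a left side of at least \(K\ell/4\).
\end{itemize}
The paper's version is shorter, but it leaves the reader to check that the cited lemma matches the present prefactor \(B\sqrt\tau\) and the dependence on \(B\). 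Yours is self-contained. It also shows where the denominator \(\log(e+\log(1/\epsilon)/\tau)\) in \(c(\tau,\epsilon)\) comes from, and that the prefactor costs only an additive \(\cO(\log B+\tau)\), which is absorbed into the \(1+\tau\) term.

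A few constants could be tightened, though none affects validity:
\begin{itemize}
\item For \(x=\ell/\tau\geq1\) you have \(\log x\geq L-\log(1+e)\), since \(\log(e+x)-\log x=\log(1+e/x)\). This extra constant is absorbed by taking \(K\geq(1+e)B^2\), which gives \(\log\tfrac{q}{B^2\tau}\geq L-\log L\).
\item \(L\geq2\log L\) holds for every \(L>0\), so you do not need to split off the intermediate range of \(\ell/\tau\).
\item The exponent \(\tfrac q2\log\tfrac{B^2\tau}{q}\) is already decreasing for \(q\geq B^2\tau/e\). In any case, both of your Case~2 lower bounds (on \(q\) and on \(\log\tfrac{q}{B^2\tau}\)) are themselves monotone in \(q\), so the separate monotonicity step is unnecessary.
\end{itemize}
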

\begin{proof}
The claim is immediate for \(\tau=0\). For \(\tau>0\), let \(p \coloneqq \frac{q}{2}\).
Then
\[
    B\sqrt{\tau}
    \left(B\sqrt{\frac{\tau}{q}}\right)^q  
    =
    B\sqrt{\tau}
    \left(\frac{B^2\tau}{2p}\right)^p.
\]
The required choice of \(p\), and hence of \(q\), follows from the standard factorial-tail inversion; see
\cite[Sec.~7.3, proof of Thm.~9]
{CGWZ26}.

\end{proof}

\subsection{Proof of Theorem~\ref{thm:single-jump}}

Let \(q\) be an integer such that
\begin{equation}
    q = \lceil (C\bigl(c(\tau,\epsilon)\bigr))\rceil,
    \label{eq:q-choice}
\end{equation}
where \(C\) is a sufficiently large universal constant. Then choose
\begin{equation}
    r:=\left\lceil\max\left\{
        4q,\,2\tau,\,\frac{4C_{\mathrm{loc}}\tau^2}{\epsilon}
        \right\}\right\rceil,
    \qquad
    \delta=\frac\tau r.
    \label{eq:J-choice}
\end{equation}
Corollary~\ref{cor:discretization} makes the channel-discretization error at most \(\frac{\epsilon}{4}\). Lemma~\ref{lem:factorial-inversion}, applied with a sufficiently large fixed constant, and \cref{eq:removal-error} give
\begin{equation}
    \eta=\norm{\widetilde{V}_r-V_r}\leq\frac\epsilon{48}.
    \label{eq:eta-choice}
\end{equation}
The channel estimate \cref{eq:amplified-channel-error} gives
\begin{align}
    \norm{\Phi_{\mathrm amp}-\exp(\tau \frac{1}{\alpha^2}\cD_{L})}_\diamond
    &\leq12\eta + \norm{\Phi_\delta^r - \exp(t\cD_{L})}_\diamond \nonumber\\
    &\leq\frac\epsilon4+\frac\epsilon 4 \leq\epsilon.
    \label{eq:final-channel-error}
\end{align}

Each global-transducer application uses one time-controlled \(\widetilde{O}_L\) query. The reuse construction, LCU step, and amplification use at most \(30q=\cO(q)\) such applications. A call to \(\widetilde{O}_L\) uses at most one controlled call each to \(O_L\) and \(O_L^\dagger\), so this is at most \(60q=\cO(q)\) queries to the original oracle or its inverse, proving
\cref{eq:query-complexity}. A direct implementation of the query-free work performs \(\cO(qr)\) local rotations and history swaps, apart from coefficient-preparation and rotation-synthesis overheads. 
\section{The main theorem}
\label{sec:main-thm}
For \(c(\cdot)\) defined in \cref{eq:c-optimal-function}, the precise statement of our main theorem is given as follows.
\begin{theorem}[Lindblad Simulation with Optimal Query Complexity]
\label{thm:main-thm}
Let \(t>0\) and \(0<\epsilon\leq1/2\). Given controlled access to \(O_H, O_{L_1}, O_{L_2}, ... O_{L_m}\) and their adjoints, there is a quantum channel \(\Phi_{\mathrm{sim}}\) such that
\begin{equation}
    \norm{\Phi_{\mathrm{sim}}-
    \exp\bigl(\tau \frac{\cL}{\norm{\cL}_{\mathrm{be}}}\bigr)}_\diamond 
    \leq\epsilon.
\end{equation}

The channel uses
\begin{equation}
    \cO\left(c(\tau,\epsilon)\right)
    \label{eq:main-query-complexity}
\end{equation}
queries to each of \(O_H, O_{L_1}, O_{L_2}, ..., O_{L_m}\) and their adjoints where \(c(\cdot)\) is defined in \cref{eq:c-optimal-function}.

The implementation of the channel uses \(\cO\left(c(\tau, \epsilon) \left[C_{\mathrm{anc}} + m + \log(e + \frac{\tau+1}{\epsilon})\log(e +\frac{(m+1)(\tau+1)}{\epsilon})\right]\right)\) additional arbitrary one- and two-qubit gates where \(C_{\mathrm{anc}}\) includes the cost of membership tests for \(\ran B_{\mathrm{in}}, \ran B_{\mathrm{out}}\) of the block encodings of the Hamiltonian and jump operators as well as the associated fixed embedding operations. For the standard block encodings used here, \(C_{\mathrm{anc}}\) scales linearly with the number of ancilla qubits used in the block encodings.
\end{theorem}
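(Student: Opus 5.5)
The plan is to reduce \cref{thm:main-thm} to the single-jump argument of \cref{thm:single-jump}, upgraded in three directions: stacked jumps, an interleaved Hamiltonian, and a gate-efficient realization of the query-free part.

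\textbf{Step 1: first-order splitting.} I split the normalized generator as
\[
\cL/\norm{\cL}_{\mathrm{be}} = w_H\,\cH_0 + w_D\,\cD_0,
\]
where $w_H = \alpha_H/\norm{\cL}_{\mathrm{be}}$, $w_D = \Lambda_{\mathrm D}/\norm{\cL}_{\mathrm{be}}$, $\cH_0(\rho) = -i[H/\alpha_H,\rho]$, and $\cD_0 = \sum_\ell \cD_{L_\ell}/\Lambda_{\mathrm D}$. One time step of length $\delta=\tau/r$ is the stacked dissipative channel $\Phi_{w_D\delta}$ of \cref{eq:stacked-kraus}, followed by the Hamiltonian step $e^{-i w_H\delta H/\alpha_H}$.

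By \cref{eq:stacked-first-order-error} and a Lie--Trotter estimate, each step has diamond error $\cO(\delta^2)$. Telescoping as in \cref{cor:discretization} then gives total error $\cO(\tau^2/r)$, which is at most $\epsilon/4$ once $r=\Theta(\tau^2/\epsilon)$.

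\textbf{Step 2: local transducers.} For the dissipative step I reuse \cref{prop:local-transducer} with $L=\widehat L$, $\alpha=1$, and the oracle $O_{\widehat L}$. This oracle is built from one controlled call to each $O_{L_i}$ plus state preparation over $\alpha_j$, so \cref{lem:catalyst-size} gives squared catalyst norm $\le 2w_D\delta$. For the Hamiltonian step I invoke the Cayley-type Hamiltonian transducer of \cite{CGWZ26}, whose catalyst has squared norm $\cO(w_H\delta)$ and which uses the Hermitian oracle $O_H$.

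\textbf{Step 3: global transducer and its error.} I interleave the two local transducers into a global transducer with stage labels $(j,\text{D/H})$ in the time register. As in \cref{eq:global-oracle}, the oracle calls collect into one layer containing one controlled $O_H$ and one controlled $O_{\widehat L}$. Summing over orthogonal labels gives $\norm{\Gamma}^2 = \cO(\tau)$.

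I then re-prove \cref{lem:private-block} on the enlarged valid space $\cV$. The unperturbed part is now $i(I-P)O$ on dissipative blocks together with the Hamiltonian transducer's unperturbed block, which \cite{CGWZ26} shows is also annihilated by a low-degree even polynomial. I choose a common $p_{\mathrm{annih}}$, or simply take the product of the two annihilators, which is still of constant degree and bounded by $1$ on contractions.

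The correction $\Delta$ now has three transition types:
\begin{itemize}
\item $\Delta_{\mathrm R}$, with blocks of norm $\cO(\delta)$;
\item $\Delta_{\mathrm J}$, with blocks of norm $\sqrt{\delta}$, which advances the history by a jump;
\item $\Delta_{\mathrm H}$, with blocks of norm $\cO(\delta)$ and nondecreasing time labels.
\end{itemize}
The Schur-test counting of \cref{lem:alg-word} goes through with $\Delta_{\mathrm H}$ counted like $\Delta_{\mathrm R}$. This gives the analogue of \cref{prop:factorial-decay} with a new constant, $\norm{p_{\mathrm{annih}}(S_{\mathrm C\leftarrow\mathrm C})^q|_\cV}\le (C^*\sqrt{\tau/q})^q$.

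The LCU reuse step uses \cref{lemma:lcu-error-reduction} with coefficient $\ell_1$-norm $2$, by the same computation as \cref{eq:lambda-l1-norm} with adjusted degrees. This is followed by robust oblivious amplitude amplification and \cref{lem:factorial-inversion}. Together they give $\cO(q)=\cO(c(\tau,\epsilon))$ queries to each oracle, which yields the query bound \cref{eq:main-query-complexity}.

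\textbf{Step 4: gate complexity.} This is the main obstacle. A direct implementation costs $\cO(qr)$ rotations with $r\sim\tau^2/\epsilon$, which is far too many. The plan has three parts.
\begin{enumerate}
\item \emph{Reorder.} Within a reuse circuit of length $\cO(q)$, at most $\cO(q)$ jumps occur. I therefore compress each history into a list of at most $\cO(q)$ jump times, each stored in $\cO(\log r)$ bits, and argue that the reuse output is unchanged. Then I commute all jump-producing rotations to the front.
\item \emph{Suffix by LCU.} For a fixed last-jump time, the no-jump and Hamiltonian rotations after it form a product of rotations with explicit $2\times2$ structure. I implement this suffix with a four-term LCU plus exact oblivious amplitude amplification, at cost polylogarithmic in $r$ and $1/\epsilon$.
\item \emph{Segment time.} I split time into segments of constant normalized length so that these suffix constructions are well conditioned. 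Each transducer call then costs $\cO(C_{\mathrm{anc}}+m+\log(\cdot)\log(\cdot))$ gates.
\end{enumerate}
Multiplying by the $\cO(c(\tau,\epsilon))$ calls gives the stated gate count.

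The step I expect to be most delicate is the first part of Step 4: checking that history compression and reordering commute with the valid-subspace restriction used in the decay bound, so that the same block $\widetilde V_r$ is produced.
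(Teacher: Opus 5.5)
Most of your plan follows the paper: the stacked $\widehat L$ with $\alpha=1$, the interleaved Hamiltonian Cayley transducer, a single SELECT layer, the valid-subspace private-block decomposition with Schur counting, and LCU over reuse lengths with one round of robust OAA. For gates, so do moving the jump-producing rotations first, compressing histories to $\cO(q)$ records, and the four-term LCU with exact OAA for the suffix.

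The genuine gap is the segmentation step. With segments of \emph{constant} normalized length there are $\Theta(\tau)$ segments. Each must reach accuracy about $\epsilon/\tau$, so each needs $q=\Theta(c(1,\epsilon/\tau))=\Theta\bigl(\log(\tau/\epsilon)/\log\log(\tau/\epsilon)\bigr)$ transducer calls. The total query count is then $\Theta\bigl(\tau\log(\tau/\epsilon)/\log\log(\tau/\epsilon)\bigr)$, which is exactly the multiplicative scaling the theorem is meant to beat. For example, $\epsilon=2^{-\tau}$ gives $c(\tau,\epsilon)=\Theta(\tau)$ but $\Theta(\tau^2/\log\tau)$ queries. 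The $c(\tau,\epsilon)(C_{\mathrm{anc}}+m)$ term of the gate bound fails for the same reason. Your closing step, multiplying by $\cO(c(\tau,\epsilon))$ calls, is therefore inconsistent with your own segmentation.

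The paper instead uses segments of length $\tau'=\tau/\lceil\tau/\beta\rceil$ with $\beta=\log(e+(\tau+1)/\epsilon)$ and per-segment accuracy $\epsilon/\lceil\tau/\beta\rceil$, using a single segment when $\tau\le\beta$. Then each segment has $q=\cO(\beta)$ and $\log(r+1)=\cO(\beta)$, and the total query count is $\lceil\tau/\beta\rceil\cdot\cO(\beta)=\cO(\tau)$.

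Your stated reason for segmenting, conditioning of the suffix construction, is also not the operative one. The suffix LCU normalization is at most $5$ uniformly in the suffix length. Segmentation is needed because each global-transducer call manipulates a compressed history of $\cO(q)$ records at cost $\cO\bigl(q(\log(r+1)+\log(m+1))\bigr)$, and the reuse QFTs cost $\cO(q^2)$. An unsegmented run therefore costs $\cO(q^2\log r)$ with $q$ as large as $\tau$. The segment length must be long enough to keep queries additive and short enough to keep $q$ logarithmic, and the choice $\beta$ does both.

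Two smaller points.
\begin{enumerate}
\item Compression relies on the fact that each global-transducer application adds at most one jump. This holds because every private sector couples to the public space through exactly one local rotation per application, and reuse starts from zero private input. With that fact, the compiled circuit agrees with the full-history circuit on all forward reuse branches, so the decay bound need not be re-derived, which answers the worry you flagged.
\item With $m$ jump operators, the Schur counting must group histories by jump \emph{times only}, keeping jump types inside each block. Otherwise the column sums acquire a factor $m^{v}$, and the resulting bound forces $q$ to grow with $m$.
\end{enumerate}
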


We now show how to extend the single-jump algorithm to the general case. We first show how to simulate an arbitrary number of jump operators. Then we incorporate a nonzero Hamiltonian by interleaving our transducer with the Hamiltonian Cayley transducer in \cite{CGWZ26}. 

We define the normalized time steps for the Hamiltonian and dissipative portions,
\[  
    \Lambda_{\mathrm{D}} \coloneqq \sum_{i=1}^{m} \alpha_i^2, 
    \qquad
    \delta_H \coloneqq \alpha_H \frac{t}{r}, 
    \qquad 
    \delta_{\mathrm{D}} \coloneqq \Lambda_{\mathrm{D}} \frac{t}{r}.
    \qquad
    \delta_{\cL} \coloneqq \delta_H + \delta_{\mathrm{D}} = \norm{\cL}_{\mathrm{be}}  \frac{t}{r} = \frac{\tau}{r}.
\]

\subsection{Multiple Jump Operators}
To handle multiple jump operators, we slightly modify the construction of \(R\) defined in  \cref{eq:R} in the local transducer to output the jump label register in the public jump branch. We also modify the global transducer such that each \(\mathrm{E}_j\) takes values from \(0, 1, ..., m\) where \(0\) denotes no jump and \(1, 2, ..., m\) identify the jumps. 

We use the stacked jump operator \(\widehat{L}\) defined in \cref{eq:normalized-stacked-jump} in place of \(L\) and apply the local construction with \(L = \widehat{L}, \alpha = 1\).

Using the construction of \(O_{\widehat{L}}\) defined directly after \cref{eq:normalized-stacked-jump} and fixed, known, query-free isometries
\[
    B_{\mathrm{in}}:\cH_{\mathrm{S}}\to\cH_{\mathrm{enc}},
    \qquad
    B_{\mathrm{out}}:\cH_\mathrm{J}\otimes\cH_{\mathrm{S}}\to\cH_{\mathrm{enc}},
    \qquad
    B_{\mathrm{out}}^\dagger O_{\widehat{L}} B_{\mathrm{in}}=\widehat{L},
\]
define
\begin{equation}
    \widetilde{O}_{\widehat{L}}
    :=\begin{pmatrix}
        0 & O_{\widehat{L}}^\dagger \\
        O_{\widehat{L}} & 0\end{pmatrix},
    \qquad
    V_0:=\ket0_A\otimes B_{\mathrm in},
    \qquad
    V_1:=\ket1_A\otimes B_{\mathrm out}.
    \label{eq:combined-v}
\end{equation}

Then \(\widetilde{O}_{\widehat{L}}\) is a Hermitian involution,
\[
    V_{1}^\dagger\widetilde{O}_{\widehat{L}}V_{0}=\widehat{L},
    \qquad
    V_{b}^\dagger\widetilde{O}_{\widehat{L}}V_{b}=0 
    \quad b \in\{0, 1\}.
\]

We use the same block formula for the rotation \(R\) in \cref{eq:R}, with public space \(\cH_{\mathrm{S}} \oplus (\cH_{\mathrm{J}} \otimes \cH_{\mathrm{S}})\) and private space \(\cH_{\mathrm{A}} \otimes \cH_{\mathrm{enc}}\). We omit the shared environment register. By \cref{eq:stacked-first-order-error} the induced channel is a first-order approximation to dissipative evolution. 

Each history register now has alphabet \(\{0,1,\ldots,m\}\); a nonzero symbol records a jump index and its position records jump time. The jump type register, \(\cH_{\mathrm{J}}\), remains inside each operator block in the Schur estimate. 
\subsection{Non-zero Hamiltonian}
For the non-zero Hamiltonian case, we interleave Hamiltonian and dissipative simulation using the algorithm in \cite{CGWZ26} and our dissipative simulation algorithm. 

To combine access to the Hamiltonian and \(\widehat{L}\) we define the local combined SELECT primitive
\begin{equation}
    O_{\mathrm{comp}}^{\mathrm{loc}}
    \coloneqq \widetilde{O}_H \oplus \widetilde{O}_{\widehat{L}}.
    \label{eq:static-extended-select}
\end{equation} for \(\widetilde{O}_H\) defined below in  ~\cref{eq:tilde-H}. 

A direct implementation uses at most one controlled call to each input oracle and its adjoint so each SELECT invocation costs at most two queries to each input oracle.

We briefly restate the algorithm in \cite{CGWZ26} for convenience and then analyze the combined algorithm. 
\subsubsection{Cayley Transducer}
The Cayley transducer requires a Hermitian block-encoding oracle. For a general block-encoding oracle \(O_H\), we use the Hermitian embedding \(\widetilde{O}_H\) and isometry \(V_H\) to obtain a Hermitian oracle that block encodes the same matrix. Let
\begin{equation}
    \widetilde{O}_H \coloneqq
    \begin{pmatrix}
        0 & O_H^\dagger\\
        O_H & 0
    \end{pmatrix},
    \qquad
    V_H \coloneqq \frac1{\sqrt2}
    \begin{pmatrix}
        B_{H,{\mathrm in}}\\
        B_{H,{\mathrm out}}
    \end{pmatrix}.
    \label{eq:tilde-H}
\end{equation}
Then \(V_H\) is an isometry, and
\begin{equation}
    \widetilde{O}_H^\dagger=\widetilde{O}_H,
    \qquad
    \widetilde{O}_H^2=I,
    \qquad
    V_H^\dagger\widetilde{O}_HV_H
    =\frac{H^\dagger+H}{2\alpha_H}=\frac{H}{\alpha_H}.
    \label{eq:H-oracle-assumption}
\end{equation}

The Hamiltonian Cayley step is
\begin{equation}
    U_{H, \mathrm{cay}} := (I-i\delta_H \frac{H}{2\alpha_H})(I+i\delta_H \frac{H}{2\alpha_H})^{-1}.
    \label{eq:H-Cayley}
\end{equation}
Let
\begin{equation}
    c_H:=F(\frac{\delta_H}{2})=\frac{1-\delta_H/2}{1+\delta_H/2},
    \qquad
    s_H:=\sqrt{1-c_H^2},
    \qquad
    P_H:=V_HV_H^\dagger.
    \label{eq:H-constants-proj}
\end{equation}
The query-free rotation
\begin{equation}
    R_H:=
    \begin{pmatrix}
        c_H I_{\mathrm{S}} & -is_HV_H^\dagger\\
        s_HV_H & i(c_HP_H+I-P_H)
    \end{pmatrix}
    \label{eq:H-R}
\end{equation}
is used by the Cayley transducer of \cite[Lemma~2]{CGWZ26}. The corresponding Hamiltonian Cayley transducer is
\[
    S_H \coloneqq R_H \bigl(I \oplus \widetilde O_H\bigr).
\]
We suppress the environment register in the following definition as both operators act as the identity on it. Define the auxiliary vector \(z_H\) and catalyst vector \(x_H\) by
\[
    \ket{z_H}
    \coloneqq
     \sqrt{\delta_H/2}(I_{\mathrm{S}}+i\frac{\delta_H}{2}\frac{H}{\alpha_H})^{-1}\ket\psi,
    \qquad
    \ket{x_H}
    \coloneqq
    (I +i\widetilde{O}_H)V_H \ket {z_H}.
\]

Define the catalyst map as follows.
\begin{equation}
    \Gamma_{H, \mathrm{loc}}\ket \psi \coloneqq \ket {x_H(\psi)}.
\end{equation}
It follows that \(\norm{\Gamma_{H, \mathrm{loc}}}^2 \leq \delta_H\). 

Note that for our composition we treat \(\ket \psi\) as the purified state. 

\subsubsection{Error Bound From Composition}
Define
\[
    \Phi_{\delta_H}(\rho)
    :=U_{H, \mathrm{cay}}\rho(U_{H, \mathrm{cay}})^\dagger,
    \qquad
    \Phi_{\delta_H,\delta_{\mathrm{D}}}
    :=\Phi_{\delta_{\mathrm{D}}}\circ\Phi_{\delta_H}.
\]
The standard local-error estimates, Lie--Trotter bounds, and telescoping give the following bounds for some universal constant \(C_{\mathrm{comp}}\) whenever \(\delta_{\cL}\leq1/2\).
\begin{align}
    \norm{
        \Phi_{\delta_H,\delta_{\mathrm{D}}}
        -e^{\delta_{\cL} \frac{\cL}{\norm{\cL}_{\mathrm{be}}}}
    }_\diamond
    &\leq C_{\mathrm{comp}}\delta_{\cL}^2 \nonumber \\
    \norm{
        \Phi_{\delta_H,\delta_{\mathrm{D}}}^{\,r}
        -e^{t \cL}
    }_\diamond 
    &\leq
    C_{\mathrm{comp}}\frac{\tau^2}{r}.
    \label{eq:combined-global-error}
\end{align}

\subsubsection{Composition of Hamiltonian and Dissipative Transducers}

We compose the Hamiltonian and dissipative transducers by combining their private spaces using a direct sum and interleaving the local Hamiltonian and local dissipative rotations. 

Let \(\cH_{\mathrm{C},H}\) and \(\cH_{\mathrm{C},\mathrm{D}}\) denote the full padded private spaces of the Hamiltonian and stacked dissipative local transducers omitting the shared environment, and let \(\Gamma_{H,{\mathrm{loc}}}, \Gamma_{D,{\mathrm{loc}}}\) be their local catalyst maps. 

For this composition the public and private spaces are
\begin{align}
    \cH_{\mathrm{E},{\mathrm{comp}}} & \coloneqq (\mathbb C^{m+1})^{\otimes r}  \nonumber \\
    \cH_{\mathrm{P},{\mathrm{comp}}} & \coloneqq \cH_{\mathrm{E},{\mathrm{comp}}}\otimes \cH_{\mathrm{S}} \nonumber\\
    \cH_{\mathrm{C}, {\mathrm{comp}}} & \coloneqq \cH_{\mathrm{T}} \otimes \cH_{\mathrm{E},{\mathrm{comp}}} \otimes (\cH_{\mathrm{C}, H} \oplus \cH_{\mathrm{C}, \mathrm{D}}) \nonumber
\end{align}
\[
\underbrace{\cH_{\mathrm{E},{\mathrm{comp}}}}_{\text{shared environment}} 
\otimes
\left[
\underbrace{\begin{matrix}
    \cH_{\mathrm{S}} 
\end{matrix}}_{\text{public}}
\;\oplus\;
\underbrace{\begin{matrix}
    \cH_{\mathrm T} \otimes (\cH_{\mathrm{C}, H} \oplus \cH_{\mathrm{\mathrm{C}, \mathrm{D}}}) 
\end{matrix}}_{\text{private}}
\right].
\] 

At dissipative stage \(j\), we identify the jump-output space \(\cH_{\mathrm{J}} \otimes \cH_{\mathrm{S}}\) with \(\operatorname{span}\{\ket1,\ldots,\ket m\}_{\mathrm{E}_j} \otimes \cH_{\mathrm S}\), reserving \(\ket0_{\mathrm{E}_j}\otimes\cH_{\mathrm S}\) for the no-jump output.

Lift every local transducer to use the common public space and common private space.

Let \(S_{H,j}\) and \(S_{\mathrm{D},j}\) denote the lifts of the Hamiltonian and dissipative local transducers to time step \(j\). For each history prefix in \(\mathrm{E}_{<j}\), the Hamiltonian rotation couples the public system to the corresponding Hamiltonian private sector at time \(j\), with \(\mathrm{E}_j,\ldots,\mathrm{E}_{r-1}\) fixed to \(\ket0\). It leaves the history registers unchanged and acts as the identity outside these subspaces. The dissipative rotation uses the compatible history blocks from the single-jump construction, with jump labels extended to \(1,\ldots,m\). Define the transducer for one step by
\[
    S_{j} \coloneqq S_{\mathrm{D},j}S_{H,j}.
\]
Its induced public channel is
\[
    \Phi_{\delta_H,\delta_\mathrm{D}}=\Phi_{\delta_\mathrm{D}}\circ\Phi_{\delta_H}.
\]
The full chronological composition is
\[
    S_{\mathrm{comp}}
    \coloneqq S_{r-1}\cdots S_{1}S_{0}.
\]
On the global private space, define the combined SELECT oracle
\begin{equation}
    O_{\mathrm{comp}} \coloneqq
       \sum_{j=0}^{r-1}
        \ket j\bra j_\mathrm{T} \otimes I_{\mathrm{E},{\mathrm{comp}}} \otimes
        \bigl(\widetilde{O}_H \oplus \widetilde{O}_{\widehat{L}}\bigr).
  \label{eq:extended-select-oracle}
\end{equation}

An oracle acting on a later time-and-stage sector commutes through all earlier query-free rotations, so the chronological product factors as
\begin{equation}
    S_{\mathrm{comp}}
    =S_{\mathrm{comp}}^\circ
    \left(
        I\oplus O_{\mathrm{comp}}
    \right).
    \label{eq:extended-one-query-factorization}
\end{equation}

At time step \(j\), apply the Hamiltonian stage and then the dissipative stage. Let \(\ket{\Psi_j}\) be the purified public state just before the Hamiltonian stage, and let \(\ket{\Psi_{j+1/2}}\) be the state after that stage. 

With a time-and-stage private label, define
\begin{equation}
    \Gamma_{\mathrm{comp}}\ket\psi
    \coloneqq 
    \sum_{j=0}^{r-1} 
    \left(
    \ket{j}_\mathrm{T} (I_\mathrm{E} \otimes \Gamma_{H,{\mathrm{loc}}})\ket{\Psi_j}
    \oplus \ket{j}_\mathrm{T} (I_\mathrm{E} \otimes \Gamma_{\mathrm{D},{\mathrm{loc}}})\ket{\Psi_{j+1/2}}
    \right).
    \label{eq:global-cat-map}
\end{equation}

The local catalyst bounds
\[
    \norm{\Gamma_{H,{\mathrm{loc}}}}^2\leq \delta_H,
    \qquad
    \norm{\Gamma_{\mathrm{D},{\mathrm{loc}}}}^2\leq2\delta_{\mathrm{D}},
\] 
together with the orthogonality of the private labels, give
\begin{equation}
    \norm{\Gamma_{\mathrm{comp}}}^2
    \leq r\delta_H+2r\delta_{\mathrm{D}}
    \leq 2\tau.
    \label{eq:extended-catalyst-bound}
\end{equation}

The local transducer identities give
\[
    S_{\mathrm{comp}}\bigl(V_{\mathrm in}\ket\psi\oplus
    \Gamma_{\mathrm{comp}}\ket\psi\bigr)
    =V_{\mathrm{comp}}\ket\psi\oplus\Gamma_{\mathrm{comp}}\ket\psi,
\]
where \(V_{\mathrm in}\) now attaches the all-zero \((m+1)\)-ary history and \(V_{\mathrm{comp}}\) is a Stinespring isometry of \(\Phi_{\delta_H,\delta_{\mathrm{D}}}^{\,r}\).

Thus the combined transducer is a one-SELECT-query transducer for a Stinespring isometry of \(\Phi_{\delta_H,\delta_{\mathrm{D}}}^{\,r}\).

The following private-block identities are restricted to the valid private space.

\paragraph{Private Block and Error}
In the following section we treat \(S, S^{\circ}\) as their composed versions. 
As in the single-jump case, we need to bound the norm of a polynomial \(p_{\mathrm{err}}\) evaluated on \(S_{\mathrm{C}\leftarrow \mathrm{C}}\). The main argument follows a similar structure. We decompose this operator into time-indexed blocks determined by the local transducers, separate an unperturbed part annihilated by \(p_{\mathrm{annih}}\), and bound the remaining terms.

At a fixed physical time, write the query-free local stages as
\begin{equation}
    S^\circ_{\sigma, j}=\begin{pmatrix}
        S^{\circ}_{\mathrm{P} \leftarrow \mathrm{P}, \sigma, j} &  S^{\circ}_{\mathrm{P} \leftarrow \mathrm{C}, \sigma, j} \\
        S^{\circ}_{\mathrm{C} \leftarrow \mathrm{P}, \sigma, j} & S^{\circ}_{\mathrm{C} \leftarrow \mathrm{C}, \sigma, j}
    \end{pmatrix}, 
    \qquad 
    \sigma\in\{H,\mathrm{D}\}.
\end{equation}
With the ordering \(\cH_{{\mathrm{P}},{\mathrm{comp}}}\oplus\cH_{\mathrm{C},H} \oplus\cH_{\mathrm{C},\mathrm{D}}\), their lifts with time and environment labels suppressed are
\[
    \widetilde S_{H,j}^\circ=
    \begin{pmatrix}
        S^{\circ}_{\mathrm{P} \leftarrow \mathrm{P}, H, j} & S^{\circ}_{\mathrm{P} \leftarrow \mathrm{C}, H, j} & 0\\
        S^{\circ}_{\mathrm{C} \leftarrow \mathrm{P}, H, j} & S^{\circ}_{\mathrm{C} \leftarrow \mathrm{C}, H, j} & 0\\
        0 & 0 & I_{\cH_{\mathrm{C}, \mathrm{D}}}
    \end{pmatrix},\qquad
    \widetilde S_{D,j}^\circ=
    \begin{pmatrix}
        S^{\circ}_{\mathrm{P} \leftarrow \mathrm{P}, \mathrm{D}, j} & 0 & S^{\circ}_{\mathrm{P} \leftarrow \mathrm{C}, \mathrm{D}, j}\\
        0 & I_{\cH_{\mathrm{C}, H}} & 0\\
        S^{\circ}_{\mathrm{C} \leftarrow \mathrm{P}, \mathrm{D}, j} & 0 & S^{\circ}_{\mathrm{C} \leftarrow \mathrm{C}, \mathrm{D}, j}
    \end{pmatrix}.
\]
Direct multiplication gives the grouped diagonal private block
\begin{equation}
    [S^{\circ}_{\mathrm{C} \leftarrow \mathrm{C}}]_{j,j}
    =
    \begin{pmatrix}
        S^{\circ}_{\mathrm{C} \leftarrow \mathrm{C}, H, j} & 0\\
        S^{\circ}_{\mathrm{C} \leftarrow \mathrm{P}, \mathrm{D}, j} S^{\circ}_{\mathrm{P} \leftarrow \mathrm{C}, H, j} & S^{\circ}_{\mathrm{C} \leftarrow \mathrm{C}, \mathrm{D}, j}
    \end{pmatrix}.
    \label{eq:grouped-diagonal-block}
\end{equation}
The lower-left entry is the only new same-time private path.

For the norm estimates, we group full histories according to the times at which jumps occur. We label these groups by binary strings \(h\), where \(h_k=0\) denotes no jump and \(h_k=1\) denotes a jump of any type. The individual jump labels remain within each operator block, so the block-norm estimates account for all jump types collectively. This allows us to reuse the preceding row and column counting without introducing a dependence on \(m\). For \(0\leq j<r\) and \(h\in\{0,1\}^j\), define
\[
    \cH_{\mathrm{E};j, h}
    \coloneqq
    \operatorname{span}
    \left\{
        \ket {a_0, a_1, ..., a_{j-1}\,0^{r-j}}_E:
        \begin{cases}
            a_k = 0 & \text{if } h_k = 0 \\
            a_k \in \{1, ..., m\} & \text{if } h_k = 1
        \end{cases}
    \right\}.
\]
Define the corresponding valid private blocks and their direct sum by
\begin{equation}
    \cV_{{\mathrm{comp}},j,h}
    \coloneqq
    \ket j_\mathrm{T} \otimes \cH_{\mathrm{E};j, h} \otimes(\cH_{\mathrm{C}, H} \oplus \cH_{\mathrm{C}, \mathrm{D}}),
    \qquad
    \cV_{\mathrm{comp}}
    \coloneqq 
    \bigoplus_{j=0}^{r-1}\ \bigoplus_{h\in\{0,1\}^j}
       \mathcal V_{{\mathrm{comp}},j,h}.
 \label{eq:extended-valid-space}
\end{equation}

The local Hamiltonian rotation does not change the history labels, and the local dissipative rotations only change them in valid ways; thus, \(\ran\Gamma_{\mathrm{comp}}\subseteq\mathcal V_{\mathrm{comp}}\). The private component of the output from any valid private input remains in the valid private space. The oracle preserves the time and history labels. Thus, \(\mathcal V_{\mathrm{comp}}\) is invariant under \(S^\circ_{\mathrm{C} \leftarrow \mathrm{C}}\), \(O_{\mathrm{comp}}\), and their product \(S_{\mathrm{ C} \leftarrow \mathrm{C}} = S^\circ_{\mathrm{C} \leftarrow \mathrm{C}} O_{\mathrm{comp}}\). 

In the following section we use \(V_{0},V_{1}\) as defined in \cref{eq:combined-v}. 

\begin{lemma}[Extended private-path decomposition]
\label{lem:extended-private-paths}
Assume \(r\geq2 \tau\), set \(c_{\mathrm{D}}:=\sqrt{1-\delta_{\mathrm{D}}}\), and let \(P\) equal \(V_1V_1^{\dagger}\) on each dissipative private sector and zero on every Hamiltonian private sector. There are lower-triangular operators
\(\Delta_{\mathrm{R}},\Delta_{\mathrm{J}}\) on \(\mathcal V_{\mathrm{comp}}\) such that
\begin{equation}
    S^{\circ}_{\mathrm{C} \leftarrow \mathrm{C}}=i(I-P-\Delta_{\mathrm{R}}-\Delta_{\mathrm{J}}),
    \qquad
    S_{\mathrm{C} \leftarrow \mathrm{C}}=i(I-P-\Delta_{\mathrm{R}}-\Delta_{\mathrm{J}})O_{\mathrm{comp}},
    \qquad
    PO_{\mathrm{comp}}P=0.
 \label{eq:extended-private-block}
\end{equation}

The local private blocks in \cref{eq:R,eq:H-R}, together with the same-time Hamiltonian-to-dissipative path, give
\begin{equation}
    [\Delta_{\mathrm{R}}]_{j,j}
    =
    \begin{pmatrix}
        (1-c_H)P_H & 0\\
        i S^{\circ}_{\mathrm{C}\leftarrow \mathrm{P}, \mathrm{D}, j} S^{\circ}_{\mathrm{P} \leftarrow \mathrm{C}, H, j} & (1-c_{\mathrm{D}})V_0V_0^{\dagger}
    \end{pmatrix},
    \qquad
    [\Delta_{\mathrm{J}}]_{j,j}=0.
    \label{eq:extended-Delta-diagonal}
\end{equation}
and every block with \(i<j\) is zero.  For a universal \(C_0\),
\begin{align}
    \norm{[\Delta_{\mathrm{R}}]_{j,j}}&\leq C_0\delta_{\cL},\nonumber\\
    \norm{[\Delta_{\mathrm{R}}]_{i,j}}
    & \leq C_0\delta_{\cL}\,e^{-\delta_{\cL}(i-j-1)/2},&&i>j,\label{eq:extended-Delta0}\\
    \norm{[\Delta_{\mathrm{J}}]_{i,j}}
    &\leq C_0\sqrt{\delta_{\cL}}\,e^{-\delta_{\cL}(i-j-1)/2}, &&i>j.
    \label{eq:extended-Delta1}
\end{align}
Both operators preserve \(\mathcal V_{\mathrm{comp}}\). A \(\Delta_{\mathrm{R}}\) transition either leaves the history unchanged or appends a no-jump entry. By contrast, a \(\Delta_{\mathrm{J}}\) transition strictly advances time, appends one jump entry, and coherently transfers its jump index into the corresponding history register.
\end{lemma}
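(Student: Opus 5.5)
We would prove Lemma~\ref{lem:extended-private-paths} by repeating the path computation of Lemma~\ref{lem:private-block}, now with two local stages per time step. \emph{Step 1 (factorization).} The chronological product $S^\circ_{\mathrm{comp}} = \widetilde S^\circ_{\mathrm{D},r-1}\widetilde S^\circ_{H,r-1}\cdots \widetilde S^\circ_{\mathrm{D},0}\widetilde S^\circ_{H,0}$ acts on each private sector with time label $j$ only during the stages at time $j$. Compressing onto the private space, the block $[S^\circ_{\mathrm{C}\leftarrow\mathrm{C}}]_{i,j}$ is a sum over paths that leave the private space at time $j$, travel through the public space for the intermediate stages, and return at time $i$. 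Such a path exists only when $i\geq j$, which gives lower-triangularity.

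\emph{Step 2 (diagonal blocks).} The diagonal block is \cref{eq:grouped-diagonal-block}. We read off the private-to-private entries of \cref{eq:H-R} and \cref{eq:R}:
\[
i(c_HP_H+I-P_H)=i\bigl(I-(1-c_H)P_H\bigr)
\]
and
\[
i\bigl(I-V_1V_1^\dagger-(1-c_{\mathrm{D}})V_0V_0^\dagger\bigr).
\]
We define $\Delta_{\mathrm{R}}$ on the diagonal so that $[S^\circ_{\mathrm{C}\leftarrow\mathrm{C}}]_{j,j}=i(I-P-[\Delta_{\mathrm{R}}]_{j,j})$. This forces the lower-left entry $iS^\circ_{\mathrm{C}\leftarrow\mathrm{P},\mathrm{D},j}S^\circ_{\mathrm{P}\leftarrow\mathrm{C},H,j}$, and gives exactly \cref{eq:extended-Delta-diagonal}.

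For the norms, we use $1-c_H\leq\delta_H$ (since $c_H=(1-\delta_H/2)/(1+\delta_H/2)$) and $1-c_{\mathrm{D}}\leq\delta_{\mathrm{D}}$. The cross term leaves the Hamiltonian sector through $-is_HV_H^\dagger$ and enters the dissipative sector through $\sqrt{\delta_{\mathrm{D}}}V_0$. It passes only through the public no-jump input, because the jump input is empty at the start of the stage. Its norm is therefore at most $s_H\sqrt{\delta_{\mathrm{D}}}$. Since $s_H=\sqrt{1-c_H^2}\leq\sqrt{2\delta_H}$, this is $\leq\sqrt{2\delta_H\delta_{\mathrm{D}}}\leq\sqrt2\,\delta_{\cL}$.

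\emph{Step 3 (off-diagonal blocks).} For $i>j$, a path departs at time $j$ in one of three ways: from the Hamiltonian sector (factor $s_H$), from the dissipative no-jump output (factor $\sqrt{\delta_{\mathrm{D}}}$), or from the jump output (factor $1$, norm-one $V_1^\dagger$). In the Hamiltonian case the path may additionally traverse the public-to-public block of the dissipative stage at time $j$.

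The path then passes through the public-to-public blocks of all stages strictly between. These are $c_HI$ for the Hamiltonian stages and $c_{\mathrm{D}}$ times the no-jump isometry for the dissipative stages. The no-jump isometry holds because the public jump-input component is zero on $\cV_{\mathrm{comp}}$ and the dissipative public-to-public block writes $\ket0$ into $\mathrm E_k$.

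Finally the path returns at time $i$ into the Hamiltonian sector (factor $s_H$) or into the dissipative sector (factor $\sqrt{\delta_{\mathrm{D}}}$, possibly after one $c_H$ step). Each product of intermediate factors is bounded by $(c_Hc_{\mathrm{D}})^{i-j-1}$. Using $c_{\mathrm{D}}\leq e^{-\delta_{\mathrm{D}}/2}$ and $c_H\leq e^{-\delta_H/2}$ (the latter valid for $\delta_H\leq 1$, which follows from $r\geq2\tau$), this is at most $e^{-\delta_{\cL}(i-j-1)/2}$.

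We assign paths whose departure is not a jump to $\Delta_{\mathrm{R}}$, with prefactor at most $(s_H+\sqrt{\delta_{\mathrm{D}}})^2\leq C\delta_{\cL}$. We assign paths departing through a jump to $\Delta_{\mathrm{J}}$, with prefactor at most $s_H+\sqrt{\delta_{\mathrm{D}}}\leq C\sqrt{\delta_{\cL}}$. A $\Delta_{\mathrm{J}}$ path departs via $V_1^\dagger$, which carries the jump index from $\cH_{\mathrm J}$ into $\mathrm E_j$, so it appends one jump entry and strictly advances time. A $\Delta_{\mathrm{R}}$ path writes only zeros into the history. Both therefore preserve $\cV_{\mathrm{comp}}$, and we collect the constants into a universal $C_0$. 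The identity $PO_{\mathrm{comp}}P=0$ follows exactly as in Lemma~\ref{lem:k-annihilation}: $P$ vanishes on Hamiltonian sectors, and $V_1^\dagger\widetilde O_{\widehat L}V_1=0$.

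\emph{Main obstacle.} The hardest step is the bookkeeping in Step 3. Paths out of and into Hamiltonian sectors pick up extra $c_H$ or dissipative public factors at the departure and arrival times. We must verify that every such factor is a contraction, so that at most two extra factors enter and they can be absorbed into $C_0$ together with the exponent shift $i-j-1$. We must also confirm that no path uses the norm-one public-jump-to-private transition. That transition is excluded because, on $\cV_{\mathrm{comp}}$, the fresh register $\mathrm E_k$ is $\ket0$ whenever stage $k$ begins, and the Hamiltonian rotations do not alter it.
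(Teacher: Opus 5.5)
Your proposal is correct and follows essentially the same route as the paper. You read off the diagonal blocks from the local private blocks of \cref{eq:H-R,eq:R} together with \cref{eq:grouped-diagonal-block}, classify the off-diagonal private-to-public-to-private paths by departure type (Hamiltonian or no-jump paths go into $\Delta_{\mathrm R}$, jump paths into $\Delta_{\mathrm J}$), bound each path by $(c_Hc_{\mathrm D})^{i-j-1}$ times its departure and return factors, and use $V_1^\dagger\widetilde O_{\widehat L}V_1=0$ to get $PO_{\mathrm{comp}}P=0$. Your explicit tracking of the extra $c_{\mathrm D}$ and $c_H$ factors at the departure and return times, and your bound $s_H\sqrt{\delta_{\mathrm D}}\leq\sqrt2\,\delta_{\cL}$ on the same-time cross term, only spell out details the paper leaves implicit.
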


\begin{proof}
On a fixed time block,
\[
    S^{\circ}_{\mathrm{C} \leftarrow \mathrm{C}, H, j}=i\bigl(I-(1-c_H)P_H\bigr),\qquad
    S^{\circ}_{\mathrm{C} \leftarrow \mathrm{C}, \mathrm{D}, j}=i\bigl(I-V_1V_1^{\dagger}-(1-c_{\mathrm{D}})V_0V_0^{\dagger}\bigr).
\]
Together with \cref{eq:grouped-diagonal-block}, this proves \cref{eq:extended-Delta-diagonal} and the diagonal part of \cref{eq:extended-private-block}.

For \(i>j\), classify each private-to-public-to-private path by its departure location from the private space. \(\Delta_{\mathrm{R}}\) contains all paths that exit the private space through the Hamiltonian stage or the no-jump output of the dissipative stage. \(\Delta_{\mathrm{J}}\) contains all paths that exit the private space through the jump output of the dissipative stage.

By \cref{eq:H-R,eq:R}, Hamiltonian departures and returns have norm \(s_H\leq\sqrt{2\delta_H}\), dissipative no-jump departures and valid returns have norm \(\sqrt{\delta_{\mathrm{D}}}\), and dissipative jump departures have norm one. Valid histories exclude the norm-one path from the jump input to the private space. Thus \(\Delta_{\mathrm{R}}\) departures and all valid returns have norm \(\cO(\sqrt{\delta_{\cL}})\), while \(\Delta_{\mathrm{J}}\) departures have norm at most one.
Each path crosses \(i-j-1\) complete public-to-public steps each contributing a \(c_Hc_{\mathrm{D}}\) factor. Since there are only a constant number of departure and return stage choices,
\[
    \|[\Delta_{\mathrm{R}}]_{i,j}\|
    \leq C_0\delta_\cL(c_Hc_{\mathrm{D}})^{i-j-1},
    \qquad
    \|[\Delta_{\mathrm{J}}]_{i,j}\|
    \leq C_0\sqrt{\delta_\cL}(c_Hc_{\mathrm{D}})^{i-j-1}.
\]
Using \(c_Hc_D\leq e^{-\delta_\cL/2}\) gives the claimed bounds.

\(\Delta_{\mathrm{R}}\) paths leave the history register unchanged, while \(\Delta_{\mathrm{J}}\) paths coherently transfer their jump index into \(E_j\). These updates preserve the valid space. Chronological ordering excludes \(i<j\), and thus each off-diagonal private block equals \(-i[\Delta_{\mathrm{R}}+\Delta_{\mathrm{J}}]_{i,j}\).

The off-diagonal oracle lift gives \(V_1V_1^{\dagger}\widetilde{O}_{\widehat{L}}V_1V_1^{\dagger}=0\), and \(P\) vanishes on the Hamiltonian sector, proving \(PO_{\mathrm{comp}}P=0\).
\end{proof}

We apply the counting argument of Lemma~\ref{lem:alg-word} to the blocks \(\mathcal V_{\mathrm{comp},j,h}\). A \(\Delta_{\mathrm R}\) transition preserves the existing history and appends only zeros when it advances time, while a \(\Delta_{\mathrm J}\) transition records exactly one new jump at its departure time. Thus the same row and column counts apply.

Each block contains both Hamiltonian and dissipative private sectors and all jump-index values compatible with \(h\) so there is no dependence on \(m\). The block estimates give the corresponding fixed-word bound
\[
    \norm{W_\sigma}
    \leq C_0^{u+v}\delta_{\cL}^{u+v/2}
       \binom{r+u-1}{u}\binom{r}{v}^{1/2}.
\]

Repeating Proposition~\ref{prop:factorial-decay}, with constants enlarged to absorb \(C_0\), gives
\[
    \norm{p_{\mathrm{annih}}(S^{\circ}_{\mathrm{C}\leftarrow \mathrm{C}} O_{\mathrm{comp}})^q\big|_{\mathcal V_{\mathrm{comp}}}}
    \leq
    \left(C_{\mathrm{comp}}^*\sqrt{\frac{\tau}{q}}\right)^q
\]
whenever \(q\geq C(1+\tau)\) and \(r\geq4q\). Together with \cref{eq:extended-catalyst-bound}, the catalyst-removal error is at most
\[
    \eta_{\mathrm{comp}}:=\sqrt{2\tau}
    \left(C_{\mathrm{comp}}^*\sqrt{\frac{\tau}{q}}\right)^q.
\]

To establish the query complexity and error guarantee in \cref{thm:main-thm}, choose an integer \(q\) satisfying
\begin{equation}
    q=\Theta\bigl(c(\tau,\epsilon)\bigr)
    \label{eq:q-comp}    
\end{equation}
large enough that \(q\geq C(1+\tau)\) and \(\eta_{\mathrm{comp}}\leq\epsilon/24\), and set
\begin{equation}
    r \coloneqq \left\lceil \max\left\{4q, 2\tau,                         \frac{4C_{\mathrm{comp}}\tau^2}{\epsilon}\right\} \right\rceil.
    \label{eq:r-comp}
\end{equation}

\cref{eq:combined-global-error} and the same reuse, LCU, amplification, and channel-completion argument as above give a channel \(\Phi_{\mathrm{sim}}\) satisfying
\[
    \norm{\Phi_{\mathrm{sim}}-\exp (t \cL)}_\diamond
    \leq12\eta_{\mathrm{comp}}+C_{\mathrm{comp}}\frac{\tau^2}{r}
    \leq\epsilon.
\]

The algorithm uses \(\cO(c(\tau,\epsilon))\) queries to each of \(O_H,O_{L_1},\ldots,O_{L_m}\). This proves the query and error claims of \cref{thm:main-thm}.

\subsection{Gate Complexity}

A direct implementation of the transducer applies one local update per discretization step. This implementation has high gate complexity. We reduce the gate complexity of the transducer-based algorithm by reorganizing the query-free portion of the global transducer by the last jump event in each history string. Jump events are added to the history string from the jump branch of the dissipative evolution. In valid chronological evolution the public input to the jump branch is \(0\) and the output is completely supplied by the private state. All future evolution is the no-jump branch of dissipative evolution and Hamiltonian evolution. The no-jump branch of dissipative evolution and Hamiltonian evolution are both implemented with rotations. We compile this sequence of rotations into an efficiently implemented unitary. 

We then show that a compressed version of the history is equivalent to the full history in the reuse construction and perform segmentation to get our gate complexity bound. 

After factoring out the \(i\) phase on the private space, the jump-producing rotation in \cref{eq:R} is
\[
    \begin{pmatrix} 0 & -1 \\
        1 & 0
     \end{pmatrix}.
\]
We can perform these rotations first because the jump rotation at time \(j\) couples a private sector untouched by earlier rotations to a public history on which those rotations act trivially, so it can be moved before them.

\subsubsection{Efficient Implementation of the no-jump and Hamiltonian rotations}
Let 
\[
    c_{\mathrm{D}} = \sqrt{1-\delta_\mathrm{D}}, \qquad s_\mathrm{D} = \sqrt{\delta_{\mathrm{D}}}, \qquad c_H = \frac{1-\delta_H/2}{1+\delta_H/2}, \qquad s_H = \sqrt{1-c_H^2}, \qquad \gamma = \sqrt{1-(c_{\mathrm{D}} c_H)^2}
\] 
be scalars from the rotation matrices for the no-jump dissipative portion and Hamiltonian evolution. 

We write the block matrices for the rotations restricted to \(\cH_{\mathrm{S}} \oplus \ran V_H \oplus \ran V_0\) where \(V_H, V_0\) are isometries from \(\cH_{\mathrm{S}}\) into the Hamiltonian and no-jump private spaces respectively.

At time step \(j\) the rotation for the no-jump dissipative portion as defined in \cref{eq:R} is
\[
    R_{\mathrm D} =
    \begin{pmatrix}
        c_{\mathrm D}I_{\mathrm S} & 0 & -is_{\mathrm D}V_0^\dagger\\
        0 & I_{\ran V_H} & 0 \\
        s_{\mathrm D}V_0 & 0 & ic_{\mathrm D}V_0V_0^{\dagger}
    \end{pmatrix},\qquad
    R_H=
    \begin{pmatrix}
        c_HI_{\mathrm S} & -is_HV_H^\dagger & 0\\
        s_HV_H & ic_HP_H & 0\\
        0 & 0 & I_{\ran V_0}
    \end{pmatrix}.
\] where \(P_H = V_HV_H^{\dagger}\) is defined in \cref{eq:H-constants-proj}.

With Hamiltonian evolution applied first, their product is
\[
    R_{\mathrm D}R_H=
    \begin{pmatrix}
        c_Hc_{\mathrm{D}}I_{\mathrm S} & -ic_{\mathrm D}s_HV_H^\dagger & -is_{\mathrm D}V_0^\dagger\\
        s_HV_H & ic_HP_H & 0\\
        c_Hs_{\mathrm D}V_0 & -is_Hs_{\mathrm D}V_0V_H^\dagger & ic_{\mathrm D}V_0V_0^{\dagger}
    \end{pmatrix}.
\]

Factoring the \(i\) phase from the private states and writing the matrix under the identifications induced by \(V_H, V_0\) yields
\[
    \begin{pmatrix}
        c_Hc_{\mathrm{D}} & -s_Hc_{\mathrm{D}} & -s_{\mathrm{D}} \\
        s_{H} & c_{H} & 0 \\
        c_Hs_{\mathrm{D}} & -s_Hs_{\mathrm{D}} & c_{\mathrm{D}}
    \end{pmatrix}.
\]

We then perform a decomposition of this rotation into \(3\) unitary factors. For \(\gamma > 0\) define
\begin{align}
    F_{\mathrm{out}} &\coloneqq \frac{1}{\gamma}
    \begin{pmatrix}
        \gamma & 0 & 0 \\
        0 & s_{H} & -c_Hs_{\mathrm{D}} \\
        0 & c_H s_{\mathrm{D}} & s_H 
    \end{pmatrix} \\
    F_{\mathrm{in}} &\coloneqq \frac{1}{\gamma}
    \begin{pmatrix}
        \gamma & 0 & 0 \\
        0 & c_{\mathrm{D}} s_H & -s_{\mathrm{D}} \\
        0 & s_{\mathrm{D}} & c_{\mathrm{D}} s_H
    \end{pmatrix} \\
    R_{\mathrm{eff}} & \coloneqq 
    \begin{pmatrix}
        c_{\mathrm{D}}c_H & -\gamma & 0 \\ 
        \gamma & c_{\mathrm{D}}c_H & 0 \\
        0 & 0 & 1
    \end{pmatrix}
\end{align}
Direct multiplication gives
\[
    \begin{pmatrix}
        c_Hc_{\mathrm{D}} & -s_Hc_{\mathrm{D}} & -s_{\mathrm{D}} \\
        s_{H} & c_{H} & 0 \\
        c_Hs_{\mathrm{D}} & -s_Hs_{\mathrm{D}} & c_{\mathrm{D}}
    \end{pmatrix} = F_{\mathrm{out}}R_{\mathrm{eff}}F_{\mathrm{in}}^{\dagger}.
\]

For a last jump at time \(\ell\), let \(j \coloneqq r - 1 - \ell\) be the remaining number of time steps that undergo no-jump and Hamiltonian evolution. The remaining time steps are \(r-j, \ldots, r-1\). For the empty history take \(\ell = -1\), so \(j = r\). Let \(F_{\mathrm{out}}^{(j)}, F_{\mathrm{in}}^{(j)}\) apply \(F_{\mathrm{out}}, F_{\mathrm{in}}\) to the private spaces at these time steps. 

Let 
\[
    R_{\mathrm{eff}}^{(j)} \coloneqq R_{\mathrm{eff}, r-1} R_{\mathrm{eff},r-2} \cdots R_{\mathrm{eff}, r-j}, \qquad 
    R_{\mathrm{eff}}^{(0)} = I
\] where \(R_{\mathrm{eff}, k}\) applies \(R_{\mathrm{eff}}\) with the private component at time step \(k\).  

Let \(F_{\mathrm{out}, k}, F_{\mathrm{in}, k}\) apply \(F_{\mathrm{out}}, F_{\mathrm{in}}\) to the private space with time step \(k\).

\(F_{\mathrm{out}, k}, F_{\mathrm{in}, k}\) only act on the private space at time \(k\) so they commute with unitaries that act on other time steps. Thus the evolution after the last jump at time \(\ell\) is
\[
    F_{\mathrm{out}}^{(j)} R_{\mathrm{eff}}^{(j)} {F_{\mathrm{in}}^{(j)}}^{\dagger}.
\]

\(F_{\mathrm{out}}^{(j)}, F_{\mathrm{in}}^{(j)}\) have efficient implementations. We give an efficient implementation of \(R_{\mathrm{eff}}^{(j)}\) by decomposing it into a weighted sum of \(4\) contractions with efficient block encodings. We then implement \(R_{\mathrm{eff}}^{(j)}\) using LCU. 
\subsubsection{Efficient implementation of the effective rotation}

Let \(d=\lceil\log_2r\rceil\) and pad the time register to \(2^d\) coordinates. We relabel the suffix \(r-j,\ldots,r-1\) as \(0,\ldots,j-1\), using reversible arithmetic before and after the construction. All unused coordinates are fixed by the target unitary. All work ancillas are initialized to zero, and the encoded block is obtained by projecting them onto zero at the output. Distinct failure tests use separate flags, retained until the common success projection.
\begin{enumerate}
    \item Let \(R_{\mathrm{eff, diag}}^{(j)}\) be the diagonal portion of the effective rotation on the public and private spaces. The public amplitude must survive all \(j\) rotations. The private amplitude at any time step needs to survive \(1\) rotation and thus acquires a single \(c_Hc_{\mathrm{D}}\) factor. 
    
    Thus,
    \begin{equation}
        R_{\mathrm{eff, diag}}^{(j)} = (c_Hc_{\mathrm{D}})^j I_{\mathrm{P}} + c_Hc_{\mathrm{D}}\sum_{k<j} (\ketbra{k}{k}_{\mathrm{T}} \otimes P_{H}) + P_{\mathrm{other}}.
        \label{eq:R-eff-diag}
    \end{equation}
    where \(P_{\mathrm{other}} = I - I_{\mathrm{P}} - \sum_{k<j} (\ketbra{k}{k}_{\mathrm{T}} \otimes P_{H})\) is a projection onto the orthogonal complement of the public and active private subspaces. The projector \(P_H\) selects the private subspace on which \(R_{\mathrm{eff}}\) acts. 

    Write the suffix length as \(j=\sum_{b=0}^{d}c_b2^b\), using a \(d+1\)-qubit register. For each bit \(c_b\), apply the following rotation to a fresh ancilla, controlled on the public branch and \(c_b=1\).
    \[
        \begin{pmatrix}
            (c_Hc_{\mathrm{D}})^{2^b} & \sqrt{1 - (c_Hc_{\mathrm{D}})^{2*2^{b}}} \\
            -\sqrt{1 - (c_Hc_{\mathrm{D}})^{2*2^{b}}} & (c_Hc_{\mathrm{D}})^{2^b}
        \end{pmatrix}.
    \] 

    The combined zero amplitude is \(\underset{b:c_b = 1}{\prod} (c_Hc_{\mathrm{D}})^{2^b} = (c_Hc_{\mathrm{D}})^j\). 

    For the Hamiltonian private components with \(k < j\), we apply the following rotation to one additional ancilla
    \[
        \begin{pmatrix}
                (c_Hc_{\mathrm{D}}) & \sqrt{1 - (c_Hc_{\mathrm{D}})^{2}} \\
            -\sqrt{1 - (c_Hc_{\mathrm{D}})^{2}} & (c_Hc_{\mathrm{D}})
        \end{pmatrix}.
    \]

    The total gate cost is \(\cO(C_{\mathrm{anc}} + \log(r+1))\).

    The block-encoded contraction \(T_{\mathrm{diag}}\) satisfies \(R_{\mathrm{eff, diag}}^{(j)} = T_{\mathrm{diag}}\).
    
    \item Let \(R_{\mathrm{eff}, \mathrm{C} \leftarrow \mathrm{P}}\) be the public-to-private block of the effective rotation. For time step \(k\), the public amplitude must survive \(k\) public-to-public transitions and then transition to the private  space. 

    Thus,
    \begin{equation}
        R_{\mathrm{eff}, \mathrm{C} \leftarrow \mathrm{P}} = \gamma \sum_{k < j} (c_Hc_{\mathrm{D}})^k (\ket{k}_{\mathrm{T}} \otimes V_H).
        \label{eq:r-eff-public-to-private}
    \end{equation}

    To implement a block encoding of the map \(R_{\mathrm{eff}, \mathrm{C} \leftarrow \mathrm{P}}\), we use a unitary extension \(U_{\mathrm{prep, off}}\) satisfying 
    \[
        U_{\mathrm{prep, off}} |\psi\rangle_{\mathrm{P}} = 
        \frac{1}{\sqrt{\sum_{x=0}^{2^d-1} (c_Hc_{\mathrm{D}})^{2x}}} \sum_{k=0}^{2^d-1} (c_Hc_{\mathrm{D}})^{k} \ket k_{\mathrm{T}} \otimes V_H \ket \psi.
    \]
    The geometric time-register state is prepared by \(d\) single-qubit rotations. Together with the known embedding operations, this gives an implementation of \(U_{\mathrm{prep,off}}\) using \(\cO(C_{\mathrm{anc}}+\log(r+1))\) gates.

    We implement a unitary block encoding this map by first using one ancilla to flag inputs outside the public space, applying \(U_{\mathrm{prep, off}}\) and then using another ancilla to flag outputs outside \(\ran V_H\) or with a time label \(\geq j\).

    Reversibly computing the two flags costs \(\cO(C_{\mathrm{anc}} + \log(r+1))\) gates so the overall cost is \(\cO(C_{\mathrm{anc}} + \log(r+1))\) gates.

    Call the encoded block \(T_{\mathrm{C} \leftarrow \mathrm{P}}\). It satisfies \( R_{\mathrm{eff}, \mathrm{C} \leftarrow \mathrm{P}} = \gamma\sqrt{\sum_{k=0}^{2^d-1} (c_Hc_{\mathrm{D}})^{2k}} \ T_{\mathrm{C} \leftarrow \mathrm{P}}\).

    \item Let \(R_{\mathrm{eff}, \mathrm{P} \leftarrow \mathrm{C}}\) be the private-to-public block of the effective rotation. At time step \(k\), the private amplitude that moves to the public state must then survive an additional \(j-k-1\) public-to-public transitions. 

    Thus,
    \begin{equation}
        R_{\mathrm{eff}, \mathrm{P} \leftarrow \mathrm{C}} = -\gamma \sum_{k < j} (c_Hc_{\mathrm{D}})^{j-k-1} (\bra{k}_{\mathrm{T}} \otimes V_H^{\dagger}).
        \label{eq:r-eff-private-to-public}
    \end{equation}

    Let \(U_{\mathrm{reflect}, j}\) be the unitary such that
    \[
        U_{\mathrm{reflect}, j}\ket k = \ket {j-1-k \mod 2^d}.
    \]

    Implementing \(U_{\mathrm{reflect}, j}\) costs \(\cO(\log(r+1))\) gates. 

    To block encode the private-to-public block, append the index reflection \(U_{\mathrm{reflect}, j}\) to the public-to-private block-encoding circuit, then reverse the complete circuit, including its flags. The resulting circuit block encodes \(T_{\mathrm{P} \leftarrow \mathrm{C}} = (U_{\mathrm{reflect}, j} T_{\mathrm{C} \leftarrow \mathrm{P}})^{\dagger}\).

    \( R_{\mathrm{eff}, \mathrm{P} \leftarrow \mathrm{C}} = -\gamma\sqrt{\sum_{k=0}^{2^d-1} (c_Hc_{\mathrm{D}})^{2k}} T_{\mathrm{P} \leftarrow \mathrm{C}}\).

    \item Let \(R_{\mathrm{eff}, \mathrm{C} \leftarrow \mathrm{C}, \mathrm{off}}\) be the off-diagonal portion of the private-to-private block of the effective rotation. 

    The only path for amplitude from one private block at time step \(x\) to enter a future private block at time step \(y\) is for it to move into the public space at time step \(x\), survive \(y-x-1\) public-to-public transitions, and then move into private space at time step \(y\). It is not possible for any amplitude to move from future time steps to preceding ones. 

    Thus,
    \begin{equation}
        R_{\mathrm{eff}, \mathrm{C} \leftarrow \mathrm{C}, \mathrm{off}} = -\gamma^2 \sum_{0 \leq x < y < j} (c_Hc_{\mathrm{D}})^{y-x-1} (\ketbra{y}{x}_{\mathrm{T}} \otimes P_H ).
        \label{eq:r-eff-private-to-private-off}
    \end{equation}

    To construct the corresponding normalized block encoding, we introduce a \(d\)-qubit shift register \(\mathrm{shift}\) and choose a unitary extension \(U_{\mathrm{prep, shift}}\) satisfying 
    \[
        U_{\mathrm{prep, shift}} \ket 0_\mathrm{shift} = 
        \frac{1}{\sqrt{\sum_{z=0}^{2^d-1} (c_Hc_{\mathrm{D}})^{z}}} \sum_{k=0}^{2^d-1} (c_Hc_{\mathrm{D}})^{k/2} \ket k_\mathrm{shift}.
    \]
    Implementing this unitary takes \(\cO(\log(r+1))\) gates because it is equal to the product of single-qubit rotations.

    To implement a unitary block encoding this portion of the effective rotation we perform the following operations.
    \begin{enumerate}
        \item Use one ancilla to flag inputs outside \(\ran V_H\) or with a time label \(\geq j\). 
        \item Apply \(U_{\mathrm{prep, shift}}\) to the shift register.
        For an input time label \(x\), applying \(U_{\mathrm{prep, shift}}\) to the shift register produces the state
        \[
            \frac{1}{\sqrt{\sum_{z=0}^{2^d-1} (c_Hc_{\mathrm{D}})^{z}}} \sum_{k=0}^{2^d-1} (c_Hc_{\mathrm{D}})^{k/2} \ket k_{\mathrm{shift}} \ket x_{\mathrm{T}}.
        \] 
        \item Use another ancilla to flag if \(k+x+1 \geq j\). This flags outputs outside the valid times. 
        \item Add \(k+1\) to the time label modulo \(2^d\). For an input time label \(x\), suppressing the retained failure flags, this produces the state 
        \[
            \frac{1}{\sqrt{\sum_{z=0}^{2^d-1} (c_Hc_{\mathrm{D}})^{z}}} \sum_{k=0}^{2^d-1} (c_Hc_{\mathrm{D}})^{k/2} \ket k_{\mathrm{shift}} \ket {x+ k + 1 \mod 2^d}_{\mathrm{T}}.
        \]
        \item Apply \(U_{\mathrm{prep, shift}}^{\dagger}\) to the shift register. The amplitude for returning \(\ket k_{\mathrm{shift}}\) to \(\ket 0_{\mathrm{shift}}\) is \((c_Hc_{\mathrm{D}})^{k/2}/ \sqrt{\sum_{z=0}^{2^d-1} (c_Hc_{\mathrm{D}})^{z}}\).
    \end{enumerate}
    Implementing this unitary costs \(\cO(C_{\mathrm{anc}} + \log(r+1))\) gates. The block-encoded contraction is 
    \[
        T_{\mathrm C\leftarrow\mathrm C,\mathrm{off}}
        =
            \frac{\sum_{0\leq x<y<j} (c_Hc_{\mathrm{D}})^{y-x-1}|y\rangle\langle x|_{\mathrm T}\otimes P_H}{\sum_{z=0}^{2^d-1}(c_Hc_{\mathrm{D}})^z},
    \]

    \(R_{\mathrm{eff}, \mathrm{C} \leftarrow \mathrm{C}, \mathrm{off}} = -\gamma^2 \sum_{z=0}^{2^d-1} (c_Hc_{\mathrm{D}})^{z} \ T_{\mathrm{C} \leftarrow \mathrm{C}, \mathrm{off}}\).
\end{enumerate}

By the definitions of the contractions and \(R_{\mathrm{eff}}^{(j)}\), 
\[
    R_{\mathrm{eff}}^{(j)} = T_{\mathrm{diag}} + \gamma\sqrt{\sum_{k=0}^{2^d-1} (c_Hc_{\mathrm{D}})^{2k}} \ T_{\mathrm{C} \leftarrow \mathrm{P}} - \gamma\sqrt{\sum_{k=0}^{2^d-1} (c_Hc_{\mathrm{D}})^{2k}} \ T_{\mathrm{P} \leftarrow \mathrm{C}} - \gamma^2 \sum_{k=0}^{2^d-1} (c_Hc_{\mathrm{D}})^{k} \ T_{\mathrm{C} \leftarrow \mathrm{C}, \mathrm{off}}
\] where the three off-diagonal contractions act as zero outside of their respective source and target blocks. 

The LCU coefficients satisfy 
\begin{align*}
    \left(\gamma\sqrt{\sum_{k=0}^{2^d-1} (c_Hc_{\mathrm{D}})^{2k}}\right)^2 &= (1-(c_Hc_{\mathrm{D}})^2)\sum_{k=0}^{2^d-1} (c_Hc_{\mathrm{D}})^{2k} = 1-(c_Hc_{\mathrm{D}})^{2^{d+1}} \leq 1 \\ 
    \gamma^2 \sum_{k=0}^{2^d-1} (c_Hc_{\mathrm{D}})^{k} &= (1+c_Hc_{\mathrm{D}})(1-c_Hc_{\mathrm{D}})\sum_{k=0}^{2^d-1} (c_Hc_{\mathrm{D}})^k = (1+c_Hc_{\mathrm{D}})(1-(c_Hc_{\mathrm{D}})^{2^d}) \leq 2.
\end{align*} 
Thus the LCU normalization is at most \(5\).

We implement \(R_{\mathrm{eff}}^{(j)}\) using LCU \cite{CW12}. We add a zero-contraction term with weight chosen to make the LCU normalization equal to \(\csc(\pi/18) > 5\) and implement the operator exactly using 4 rounds of OAA \cite[Theorem~28]{GSLW19}. The total gate cost for implementing \(R_{\mathrm{eff}}^{(j)}\) is \(\cO(C_{\mathrm{anc}} + \log(r+1))\).

Implementing \(F_{\mathrm{out}}^{(j)}, {F_{\mathrm{in}}^{(j)}}^{\dagger}\) takes \(\cO( C_{\mathrm{anc}} + \log(r+1))\) gates by computing membership in the suffix and applying a controlled rotation. Thus, the total gate cost for implementing the suffix evolution \(F_{\mathrm{out}}^{(j)} R_{\mathrm{eff}}^{(j)} {F_{\mathrm{in}}^{(j)}}^{\dagger}\) is \(\cO(C_{\mathrm{anc}} +\log(r+1))\).

\subsubsection{History Compression}

In the case where the catalyst state is provided, the number of jumps in the history can be as high as the number of time steps. In the reuse construction, however, the initial private component is the zero vector, and each stage receives a fresh public input with blank history. The first application therefore produces only histories with no jumps.

In subsequent applications, the private component may be nonzero and jump events can be added to the history through the jump output of the dissipative rotation. Subsequent no-jump or Hamiltonian rotations can then transfer this amplitude carrying the new jump event back into the private space. In the chronological construction, each private sector is coupled to the public space by exactly one local rotation, and all subsequent rotations act as the identity on that sector, so this amplitude cannot produce another jump during the same global-transducer application. The history is unchanged by the oracle and the no-jump and Hamiltonian rotations. Thus, the maximum number of jump events in any history increases by at most one in each application of the global transducer. 

We encode each history as a list of time–jump-type pairs, allocating enough records for the maximum reuse length in the LCU construction, which is \(\cO(q)\). Since each transducer application adds at most one jump, every forward reuse branch agrees with the full-history construction under this encoding. Thus each reuse block \(P_iV_{\mathrm{in}}\), and hence the LCU block, are unchanged, and amplification preserves the preceding error guarantees.

With a standard reversible list implementation the size of this compressed history is \(\cO\bigl(q(\log(r+1) + \log(m+1))\bigr)\). Inserting or removing an event costs \(\cO\bigl(q(\log(r+1) + \log(m+1))\bigr)\) gates and computing the last jump costs \(\cO\bigl(q \log(r+1)\bigr)\) gates. 

\subsubsection{Efficient Implementation of the Global Transducer}
We now show how to perform an efficient implementation of the query-free portion of the global transducer and provide a gate count for this implementation. 

The compiled implementation agrees with the full-history transducer on all inputs encountered in the forward reuse circuits and is extended unitarily on unused states. The steps of the compiled implementation are as follows.
\begin{enumerate}
    \item Apply an \(i\) phase to the private state. This takes \(\cO(1)\) gates. 
    \item Apply the jump-producing rotations between the public state and the \(\ran V_1\) portion of the private state. We implement these rotations by reversibly inserting or removing the corresponding jump event from the history. This costs \(\cO\bigl(C_{\mathrm{anc}} + q(\log(r+1) + \log(m+1))\bigr)\) gates.
    \item Compute the number \(j\) of remaining time steps that undergo no-jump dissipative and Hamiltonian evolution. This takes \(\cO\bigl(q\log(r+1)\bigr)\) gates.
    \item Perform \(F_{\mathrm{out}}^{(j)} R_{\mathrm{eff}}^{(j)} {F_{\mathrm{in}}^{(j)}}^{\dagger}\) with \(j\) being the number of remaining time steps. The gate count is \(\cO(C_{\mathrm{anc}} + \log(r+1))\).
    \item Uncompute the register storing the number \(j\) of remaining time steps. This costs \(\cO\bigl(q\log(r+1)\bigr)\) gates.
\end{enumerate}

Each stacked oracle call costs \(\cO(m)\) additional gates and thus the total gate count for the global transducer is \(\cO\bigl(C_{\mathrm{anc}} + m + q(\log(r+1) + \log(m+1))\bigr)\).

In the reuse construction, we repeat the global transducer \(\cO(q)\) times. The controlled QFTs and their inverses, over all reuse lengths, require \(\cO(q^2)\) gates \cite{MZ4}. The remaining LCU operations and amplification reflections require an additional \(\cO\bigl(q^2+q(\log(r+1)+\log(m+1))\bigr)\) gates. The total gate count is \(\cO\bigl(qC_{\mathrm{anc}} + qm + q^2(\log(r+1) + \log(m+1))\bigr)\).
\subsubsection{Segmentation}
We segment our evolution into pieces of normalized time \(\tau'\).
\begin{equation}
    \beta \coloneqq \log(e + \frac{\tau+1}{\epsilon}), \qquad \tau' \coloneqq \frac{\tau}{\lceil \tau / \beta \rceil}, \qquad \eta \coloneqq \frac{\epsilon}{\lceil \tau / \beta \rceil}.
\end{equation}

With \(q, r\) as defined in \cref{eq:q-comp,eq:r-comp} and \((\tau, \epsilon)\) replaced by \((\tau', \eta)\), \(\log(r+1) = \cO(\beta), q = \cO(c(\tau', \eta)) = \cO(\beta)\).

When \(0 < \tau \leq \beta\) we use one segment. In this case the gate complexity is 
\begin{align}
    &\cO\bigl(q C_{\mathrm{anc}} + qm + q^2(\log(r+1) + \log(m+1))\bigr) \nonumber \\ 
    &= \cO\left(c(\tau, \epsilon) \left[C_{\mathrm{anc}} + m + \log(e + \frac{\tau+1}{\epsilon})\log(e +\frac{(m+1)(\tau+1)}{\epsilon})\right]\right)
\end{align}
and the query complexity is 
\begin{equation}
    \cO(c(\tau, \epsilon)).
\end{equation}

When \(\tau > \beta\), the required accuracy in each segment is \(\eta\) and the gate complexity is
\begin{align}
    &\frac{\tau}{\beta} \cO(\beta C_{\mathrm{anc}} + \beta m + \beta^3 + \beta^2\log(m+1)) \nonumber \\  &= \cO\left(c(\tau, \epsilon) \left[C_{\mathrm{anc}} + m + \log(e + \frac{\tau+1}{\epsilon})\log(e +\frac{(m+1)(\tau+1)}{\epsilon})\right]\right).
\end{align}

We discard each segment’s history and work registers before the next segment. Channel contractivity and telescoping bound the total diamond-norm error by \(\lceil \frac{\tau}{\beta} \rceil \eta=\epsilon\).

In this case the query complexity is
\begin{equation}
    \cO(\beta \lceil\frac{\tau}{\beta} \rceil) = \cO(\tau) = \cO(c(\tau, \epsilon)).
\end{equation}

This proves the gate complexity portion of the main theorem. 

\section{Extension to the time-dependent Lindblad equation}
\label{sec:time-dep}
The construction extends directly to a time-dependent generator \(\cL(t)\) with a fixed normalization bound. We assume that the time-controlled oracles block encode \(H(u)\) and \(L_i(u)\) with known normalization factors \(\alpha_H\) and \(\alpha_i\), respectively, that are independent of \(u\) and valid uniformly for \(u\in[0,t]\). Block encodings with time-dependent normalization factors can be rescaled by coherently computing the rescaling factors and applying ancilla rotations. If \(C_{\mathrm{scale}}\) denotes the total rescaling gate cost per combined SELECT invocation, the additional gate cost is \(\cO(c(\tau,\epsilon)C_{\mathrm{scale}})\), where \(\tau\) uses the fixed normalization bounds. The fixed normalization factors ensure that the scalar rotation parameters are independent of the sampling time, so the gate compilation also applies unchanged. 

Recall that \(\norm{\cL}_{\mathrm{be}} = \alpha_H + \sum_{\ell=1}^{m} \alpha_\ell^2\) and \(\tau =\norm{\cL}_{\mathrm{be}} t\). Let \(\tau_j=j\delta\) be the normalized sampling times where \(\delta=\tau/r\). For segmentation, the sampling times are shifted to the start of each segment and the discretization requirements are imposed separately on each segment. Define the normalized generator \(\cL_{\mathrm{norm}}(s) \coloneqq \frac{1}{\norm{\cL}_{\mathrm{be}}} \cL(s/\norm{\cL}_{\mathrm{be}})\). 

We assume we have access to coherently time-controlled oracles
\[
    O_X^{\mathrm{td}}
    \coloneqq
    \sum_{j=0}^{r-1}
    \bigl(\ket{j}\bra{j}\bigr)_\mathrm{T}\otimes O_X(\frac{\tau_j}{\norm{\cL}_{\mathrm{be}}}),
    \qquad X\in\{H,\widehat{L}\}.
\] We replace the time-independent oracles with their time-controlled variants. 

Because the construction already carries the time label \(j\), no other structural modification is required. The preceding simulation error estimates from the time-independent case continue to hold uniformly, since they depend only on the normalization bounds and chronological ordering.

This implements the piecewise-constant evolution
\[
    \mathcal E_r
    \coloneqq
    e^{\delta \cL_{\mathrm{norm}}(\tau_{r-1})} \circ e^{\delta \cL_{\mathrm{norm}}(\tau_{r-2})}
    \circ\cdots\circ
    e^{\delta \cL_{\mathrm{norm}}(\tau_{0})},
\]
up to the simulation error established in the time-independent case.

Define
\[
    \omega_{\cL_{\mathrm{norm}}}(\delta)
    \coloneqq
    \sup_{\substack{u,v\in[0,\tau]\\|u-v|\leq\delta}}
    \norm{\cL_{\mathrm{norm}}(u) - \cL_{\mathrm{norm}}(v)}_\diamond.
\]

Then
\[
    \norm{
        \cT\exp\left(\int_0^\tau \cL_{\mathrm{norm}}(s) \, ds\right)
        -\cE_r
    }_\diamond
    \leq \tau \omega_{\cL_{\mathrm{norm}}}(\frac{\tau}{r}),
\] where \(\cT\) denotes the time ordering operator. 

Assume that \(\lim_{a\to 0} \omega_{\cL_{\mathrm{norm}}}(a) =0\). Choose \(r\) sufficiently large to satisfy the preceding discretization requirements and
\[
    \tau\omega_{\cL_{\mathrm{norm}}} \! \left(\frac{\tau}{r}\right)
    \leq\frac{\epsilon}{4}.
\]

The additional error is then at most \(\epsilon/4\) and the preceding simulation error bounds contribute at most \(3\epsilon/4\). Thus the total error is at most \(\epsilon\). Increasing \(r\) does not increase query complexity, but it does increase the gate complexity through the dependence on \(\log(r+1)\).

For example, if \(\omega_{\cL_{\mathrm{norm}}}(\delta) \leq C\delta \), then we take \(r \geq \lceil \frac{4C\tau^2}{\epsilon} \rceil \). 

\section{AI Use}
The results of this paper were generated with the assistance of GPT 5.6-Sol and GPT 6-Astra. Interaction with GPT 5.6-Sol helped develop the main approach and assisted in developing the proofs. In particular, GPT 5.6-Sol developed the initial factorial decay proof in \cref{sec:factorial_decay}, which was later refined and significantly simplified by the authors. GPT 6-Astra developed the main idea for the efficient gate implementation and the initial proof. GPT 5.6-Sol and later GPT 6-Astra were used to check the manuscript for errors and provided revisions for correctness, clarity, and readability. The authors take sole responsibility for the correctness of the final manuscript. 
\bibliographystyle{plain}
\bibliography{refs}

\end{document}